\def\<#1>{\langle#1\rangle}
\let\set\mathbb
\def\vect#1{\mathbf{#1}}
\def\disp{\operatorname{disp}}
\newcommand{\ve}[1]{\textit{\textbf{#1}}}
\def\ev{\operatorname{ev}}
\def\expr{\operatorname{expr}}
\def\sigmaSE{$\Sigma^*$}
\def\rpisiSE{$R\Pi\Sigma^*$}
\def\seqK{{\mathcal S}(\KK)}
\def\VV{\set V}
\def\AA{\set A}
\def\KK{\set K}
\def\NN{\set N}
\def\ZZ{\set Z}
\def\GG{\set G}
\def\HH{\set H}
\def\EE{\set E}
\def\QQ{\set Q}
\def\AH{\text{$\mathcal H$}}
\def\shiftS{{\mathcal S}}
\def\Sv{\bar{S}}
\newcommand{\abs}[1]{\lvert#1\rvert}
\newcommand{\sign}[1]{\textnormal{sign}(#1)}
\newcommand{\ie}{i.e.,\ }
\newdimen\listablecorrection
\newtheorem{theorem}{Theorem}
\newtheorem{remark}{Remark}
\newtheorem{proposition}{Proposition}
\newtheorem{corollary}{Corollary}
\newtheorem{lemma}{Lemma}
\newtheorem{definition}{Definition}
\newtheorem{example}{Example}
\newcommand{\fct}[3]{#1:#2\to#3}
\newcommand{\dfield}[2]{(#1,#2)}
\newcommand{\const}[2]{\operatorname{const}(#1,#2)}
\title{Algebraic independence of sequences generated by (cyclotomic) harmonic sums}
\author{Jakob Ablinger}
\address{Research Institute for Symbolic Computation\\
J. Kepler University Linz\\
A-4040 Linz, Austria}
\email{Jakob.Ablinger@risc.jku.at}
\author{Carsten Schneider}
\address{Research Institute for Symbolic Computation\\
J. Kepler University Linz\\
A-4040 Linz, Austria}
\email{Carsten.Schneider@risc.jku.at}
\thanks{Supported by the Austrian Science Fund (FWF) grant SFB F50 (F5009-N15).}
\keywords{harmonic sums, cyclotomic harmonic sums, quasi-shuffle algebra, algebraic independence, difference rings, \sigmaSE-extensions, ring of sequences, difference ring embedding}
\begin{document}

\begin{abstract}
An expression in terms of (cyclotomic) harmonic sums can be simplified by the quasi-shuffle algebra in terms of the so-called basis sums. By construction, these sums are algebraically independent within the quasi-shuffle algebra. In this article
we show that the basis sums can be represented within a tower of difference ring extensions where the constants remain unchanged. This property enables one to embed this difference ring for the (cyclotomic) harmonic sums into the ring of sequences. 
This construction implies that the sequences produced by the basis sums are algebraically independent over the rational sequences adjoined with the alternating sequence. 
\end{abstract}

\maketitle

\section{Introduction}

Special functions like the harmonic numbers and more generally indefinite nested sums defined over products play a dominant role in many research branches, like in combinatorics, number theory, and in particle physics. For concrete examples within these research areas in connection with symbolic summation see, e.g.,~\cite{Comb1,Comb2},~\cite{Number1,Number2} and~\cite{QCD1,QCD2}, respectively. In particular, these nested sums cover the class of d'Alembertian solutions~\cite{Abramov:94}, a sub-class of Liouvillian solutions~\cite{Singer:99}, of linear recurrence relations; for further details see~\cite{Petkov:2013}.\\
Numerous properties of such sum classes, like the harmonic sums~\cite{Bluemlein1999,Vermaseren1998}, cyclotomic harmonic sums~\cite{Ablinger2011}, generalized harmonic sums~\cite{Moch2002,SSum} or binomial sums~\cite{Binom1,Binom2,Binom3,Binom4a} have been explored.
In particular, the connection of the nested sums to nested integrals (i.e., to multiple polylogarithms and generalizations of them) via the (inverse) Mellin transform~\cite{Remiddi2000}, the analytic continuation~\cite{Bluemlein2000,Bluemlein2005} of nested sums or the calculation of asymptotic expansions of such sums~\cite{Lille,Bluemlein2009,Bluemlein2009a} has been worked out. For further details and generalizations of these results we refer to~\cite{Ablinger2011,SSum,Binom4a}. The underlying algorithms are implemented in the Mathematica package \texttt{HarmonicSums}~\cite{AblingerThesis:12,HarmonicSums}. 

Among all these algorithmic constructions, a key technology is the elimination of algebraic dependencies of the arising nested sums within a given expression to gain compact representations. Here the Mathematica package \texttt{Sigma}~\cite{Sigma1,Sigma2} provides strong tools that can simplify, among many other features, an expression in terms of indefinite nested product-sums to an expression in terms of such sums that are all algebraically independent in the analysis sense~\cite{AlgebraicDF,AlgebraicDF2,DR1,DR2,Schneider:17}. This means that the sequences with entries from a field $\KK$, that are produced by the reduced sums, are algebraically independent. In order to accomplish this task, the arising sums and products are represented in a difference ring, i.e., the sum objects are represented in a ring $\AA$ and the shift behaviour of the sums is modelled by a ring automorphism $\fct{\sigma}{\AA}{\AA}$. More precisely, the sums and products are represented in an \rpisiSE-extension\footnote{For the corresponding difference field theory see~\cite{Karr:81,Karr:85}.}~\cite{DR1,DR2} with the distinguished property that the set of constants is precisely the field $\KK$, i.e.,
$$\{c\in\AA|\,\sigma(c)=c\}=\KK.$$
Exactly this property enables one to embed the ring $\AA$ into the ring of sequences. 
This technology has been used to show in~\cite{DR1} that the sequences of the generalized harmonic numbers are algebraically independent over the rational sequences. In particular,
fast summation algorithms~\cite{FastAlgorithm1,FastAlgorithm2,FastAlgorithm3} in the setting of difference rings and fields support this construction algorithmically
and expressions with up to several hundred algebraically independent sums can be generated automatically. 
However, recently we were faced with QCD calculations~\cite{QCD3} with expressions of about 1GB and more than 20000 sums. At this level, the difference ring algorithms failed to eliminate all algebraic relations in a reasonable amount of time.

In order to perform such large scale calculations, another key property of certain classes of indefinite nested sums can be utilized:
they obey quasi-shuffle algebras~\cite{Hoffman1992,Hoffman1997,Hoffman}. This enables one to rewrite any polynomial expression in terms of indefinite nested sums as a linear combination of indefinite nested sums. As worked out in~\cite{Bluemlein2004} and continued in~\cite{Ablinger2011,AblingerThesis:12,SSum}, this feature can be used to hunt for algebraic relations among the occurring indefinite nested sums and to express the compact result in terms of the so-called basis sums which cannot be eliminated further by the quasi-shuffle algebra. Using the \texttt{HarmonicSums} package expressions as mentioned above could be reduced to several MB in terms of about several thousand basis sums; for details see~\cite{QCD3}. 
Summarizing, using the property of the underlying quasi-shuffle algebra one obtains dramatic compactifications within the demanding calculations in particle physics.

A natural question is if the obtained sums induced by the quasi-shuffle algebra are also algebraically independent in the sense of analysis, i.e., if the sequences produced by the nested sums are algebraically independent. 
A special variant for non-alternating harmonic sums has been accomplished in~\cite{Lille} using the knowledge of certain integral representations. In the following we will focus on the general case for the harmonic sums~\cite{Bluemlein1999,Vermaseren1998}
\begin{eqnarray}\label{Equ:HarmonicSumsIntro}
	\sum_{n\geq i_1 \geq i_2 \geq \cdots \geq i_k \geq 1} \frac{\sign{c_1}^{i_1}}{i_1^{\abs {c_1}}}\cdots
	\frac{\sign{c_k}^{i_k}}{i_k^{\abs {c_k}}}	
\end{eqnarray}
%--------------------------------------------------------------------------------------------
with non-negative integers $n$ and non-zero integers $c_i$ $(1 \leq
i \leq k)$ 
%and
%the generalized harmonic sums (or $S$-sums)~\cite{Moch:2001zr,Ablinger:2013cf}
%--------------------------------------------------------------------------------------------
%\begin{equation}\label{Equ:SSumsIntro}
%	\sum_{n\geq i_1 \geq i_2 \geq
%        \cdots \geq i_k \geq 1} \frac{d_1^{i_1}}{i_1^{c_1}}\cdots
%	\frac{d_k^{i_k}}{i_k^{c_k}},
%\end{equation}  
%--------------------------------------------------------------------------------------------
%with positive integers $c_i$ and (usually) real numbers $d_i\neq0$ $(1\leq i\leq k)$. 
and for their cyclotomic versions~\cite{Ablinger2011}: for $\KK$ being a field containing the rational numbers the summand is of the kind\footnote{$\NN$ denotes the positive integers and $\NN_0=\NN\cup\{0\}$.} 
%--------------------------------------------------------------------------------------------
\begin{equation}
\frac{z_j^{i_j}}{(a_j\,i_j + b_j)^{c_j}},~~~~a_j,b_j,c_j \in \mathbb{N},~z_j\in\KK\setminus\{0\}
\end{equation}
%--------------------------------------------------------------------------------------------
and $i_j$ denotes the summation variable.

In this article we will consider the so-called basis sums induced by the quasi-shuffle algebra. This means we consider a particular chosen set of nested sums that generate all other nested sums and that do not possess any further relations using the quasi-shuffle algebra operation. Our main result is that these basis sums are also algebraically independent as sequences. More precisely, consider the ring of sequences which is defined by the
set of sequences $\KK^{\NN_0}=\{\langle a_n\rangle_{n\geq0}|a_n\in\KK\}$ equipped with component-wise addition and multiplication
where two sequences are identified as equal if they differ only by finitely many entries. Then we will show that the basis sums evaluated to such elements of the ring of sequences are algebraically independent: they are algebraically independent over the sub-ring of sequences that is generated by all rational functions from $\KK(n)$ and $(-1)^{n}$.
We will derive this result by showing that the basis sums generate an \rpisiSE-extension in the difference ring sense. This means that the basis sums generate a polynomial ring equipped with a shift operator such that the set of constants is precisely $\KK$. Based on this particularly nice  structure it will follow by difference ring theory~\cite{Singer:97,Schneider:17} that this difference ring can be embedded by an injective difference ring homomorphism into the ring of sequences. In other words, the algebraic properties of the polynomial ring (in particular, the algebraic independence of variables of the polynomial ring, which are precisely the basis sums) carry over into the setting of sequences.

The outline of the article is as follows. In Section~\ref{Sec:GeneralOutLine} we will set up the general framework for (cyclotomic) harmonic sums. In Section~\ref{Sec:BasicDR} we will present basic constructions to represent (cyclotomic) harmonic sums in a difference ring. In Section~\ref{Sec:QuasiShuffle} we will introduce the quasi-shuffle algebra for (cyclotomic) harmonic sums and will work out various properties that link the quasi-shuffle algebra with our difference ring construction. In Section~\ref{Sec:ReduceDR} we define the reduced difference ring for (cyclotomic) harmonic sums in which all algebraic relations are eliminated that are induced by the quasi-shuffle algebra. We will provide new structural results obtained by the difference ring theory of \rpisiSE-extensions in Section~\ref{Sec:pisiRings}.  
In Section~\ref{Sec:MainResult} we will combine all these results and will show that our reduced difference ring is built by a tower of \rpisiSE-extensions. As a consequence we can conclude that this ring can be embedded into the ring of sequences. A conclusion is given in Section~\ref{Sec:Conclusion}.

\section{A general framework for cyclotomic harmonic sums}\label{Sec:GeneralOutLine}

Throughout this article we assume that $\KK$ is a field containing $\QQ$ as a subfield. In particular, we assume that there is a linear ordering $<$ on $\KK$. 
For a set $B$, $B^*$ denotes the set of all finite words over $B$ (including the empty word), i.e., 
$$B^*=\{b_1,\dots,b_k|k\geq 0\text{ and } b_i\in B\}.$$
Furthermore, we define the alphabet 
$$A:=\left\{(a,b,c,z)|a,c\in\NN, b\in\NN_0, z\in\KK\setminus\{0\}\text{ with }b<a\text{ and } \gcd(a,b)=1 \right\}$$
as a totally ordered, graded set. More precisely, the degree
of $(a,b,c,d)\in A$ is denoted by $\abs{(a,b,c,d)}:=c$. This establishes the grading $A_i=\{a\in A|\,|a|=i\}$.
Moreover, we define the linear order $<$ on $A$ in the following way:
\begin{align*}
 (a_1,b_1,c_1,z_1)<(a_2,b_2,c_2,z_2) &\quad\text{if } c_1<c_2\\
 (a_1,b_1,c,z_1)<(a_2,b_2,c,z_2) &\quad\text{if } a_1<a_2\\
 (a,b_1,c,z_1)<(a,b_2,c,z_2) &\quad\text{if } b_1<b_2\\
 (a,b,c,z_1)<(a,b,c,z_2) &\quad\text{if } z_1<z_2.
\end{align*}
Furthermore, we define the function
\begin{equation}\label{Equ:LambdaFu}
\begin{split}
\lambda:A\times\NN \to \KK\\
\lambda((a,b,c,z),i)\mapsto \frac{z^i}{(a i+b)^c}.
\end{split}
\end{equation}
Note that $\lambda((1,0,c_1,z_1),i)\lambda((1,0,c_2,z_2),i)=\lambda((1,0,c_1+c_2,z_1z_2),i)$. For arbitrary letters in $A$ the connection is more complicated but there is always a relation of the form
\begin{equation}\label{Equ:LambdaClosure}
\lambda((a_1,b_1,c_1,z_1),i)\lambda((a_2,b_2,c_2,z_2),i)=\sum_{j=1}^kr_j\lambda((e_j,f_j,g_j,h_j),i)
\end{equation}
with $r_j\in \QQ,$ $(e_j,f_j,g_j,h_j)\in A$ and $g_j\leq c_1+c_2$ see, e.g., \cite{Ablinger2011,AblingerThesis:12}.
For $n \in \NN_0$, $k\in\NN$, $a_i\in A$ with $1\leq i\leq k$ we define nested sums (compare \cite{Ablinger2011,AblingerThesis:12})
\begin{eqnarray}
	S_{a_1}(n)&=& \sum_{i=1}^n \lambda(a_1,i)\nonumber\\
	S_{a_1,\ldots ,a_k}(n)&=& \sum_{i=1}^n \lambda(a_1,i)S_{a_2,\ldots ,a_k}(i). 
	\label{defHsum}
\end{eqnarray}
Moreover, we define the weight function $w$ on these nested sums: $w(S_{a_1,a_2,a_3,\ldots ,a_k}(n))=\abs{a_1}+\cdots+\abs{a_k}$ and extend it to monomials such that
the weight of a product of nested sums is the sum of the weights of the individual sums, i.e.,
$$
w(S_{\ve {a}_1}(n)S_{\ve {a}_2}(n)\cdots S_{\ve {a}_k}(n))=w(S_{\ve {a}_1}(n))+w(S_{\ve {a}_2}(n))+\cdots +w(S_{\ve {a}_k}(n)).
$$
Instead of $S_{a_1,a_2,\ldots ,a_k}(n)$ we will also write
$S_{a_1a_2\cdots a_k}(n)$ or $S_{\vect{a}}(n)$ with $\vect{a}=a_1a_2\dots a_k\in A^*$. 

A product of two nested sums with the same upper summation limit can be written in terms of single nested sums: for $n\in \NN_0,$
\begin{eqnarray}
	S_{a_1,\ldots ,a_k}(n)S_{b_1,\ldots ,b_l}(n)&=&
	\sum_{i=1}^n \lambda(a_1,i) S_{a_2,\ldots ,a_k}(i)\,S_{b_1,\ldots ,b_l}(i) \nonumber\\
	&&+\sum_{i=1}^n \lambda(b_1,i)S_{a_1,\ldots ,a_k}(i)\,S_{b_2,\ldots ,b_l}(i) \nonumber\\
	&&-\sum_{i=1}^n \lambda(a_1,i)\lambda(b_1,i)S_{a_2,\ldots ,a_k}(i)\,S_{b_2,\ldots ,b_l}(i).
\label{hsumproduct}
\end{eqnarray}
Note that the product of the two sums within the summands of the right side can be expanded further by using again this product formula. Applying this reduction recursively will lead to a linear combination of sums $S_{b_1,\dots,b_r}(n)$ with $b_i\in A$. In particular, the maximum of all the weights of the derived sums is precisely the weight of the left hand side expression. 

We can consider different subsets of~$A:$ 
\begin{enumerate}
\item If we consider only letters of the form $(1,0,c,1)$ with $c\in \NN$, i.e., we restrict to
$$A_h:=A\cap(\{1\}\times\{0\}\times\NN\times\{1\}),$$
then we are dealing with harmonic sums see, e.g.,\cite{Bluemlein2004,Vermaseren1998}.
\item If we consider only letters of the form $(1,0,c,\pm 1)$ with $c\in \NN$,
i.e., we restrict to
$$A_a:=A\cap(\{1\}\times\{0\}\times\NN\times\{1,-1\}),$$
then we are dealing with alternating harmonic sums see, e.g.,\cite{Bluemlein2004,Vermaseren1998}.
\item Let $M\subset\NN$ be a finite subset of $\NN$. If we consider only letters of the form $(a,b,c,\pm 1)$ with $a\in M;c\in\NN;b\in\NN_0,b\leq a$, $\gcd(a,b)~=~1$,
i.e., we restrict to
$$A_{c(M)}:=A\cap(M\times\NN_0\times\NN\times\{1,-1\}),$$
then we are dealing with cyclotomic harmonic sums see, e.g.,\cite{Ablinger2011,AblingerThesis:12}.
\item If we consider only letters of the form $(a,b,c,\pm 1)$ with $a,c\in\NN;b\in\NN_0,b\leq a$, $\gcd(a,b)~=~1$,
i.e., we restrict to
$$A_c:=A\cap(\NN\times\NN_0\times\NN\times\{1,-1\}),$$
then we are dealing with the full set of cyclotomic harmonic sums see, e.g.,\cite{Ablinger2011,AblingerThesis:12}.
%\item If we do not restrict our alphabet, then we are dealing with generalized c
\end{enumerate}
Note that for every finite subset $M$ of $\NN$ we have 
$$A_h\subset A_a\subset A_{c(M)}\subset A_c.$$
Throughout this article we will assume that 
\begin{equation}\label{Equ:ChooseH}
\AH\in\{A_h,A_a,A_{c(M)}\}
\end{equation}
holds. In particular, we call a sum $S_{a_1a_2\dots a_k}(n)$ with $a_i\in\AH$ also \AH-sum.

\section{A basic difference ring construction for the expression of \AH-sums}\label{Sec:BasicDR}

In the following we will define a difference ring in which we will represent the expressions of \AH-sums.

\begin{definition}
An expression of \AH-sums in $n$ over a field $\KK$ is built by
\begin{enumerate}
 \item rational expressions in $n$ with coefficients from $\KK$, i.e., elements from the rational function field $\KK(n)$,
 \item $(-1)^n$ that occurs in the numerator,
 \item the \AH-sums that occur as polynomial expressions in the numerator.
\end{enumerate}
\end{definition}
If $f$ is such an expression we use for $\lambda\in\KK(n)$  the shortcut
$$f(\lambda):=f|_{n\to\lambda}.$$
Sometimes we also use the notation $f(n)$ to indicate that the expression depends on a symbolic variable $n$. We say that an expression $e(n)$ of \AH-sums has no pole for all $n\in\NN_0$ with $n\geq\lambda$ for some $\lambda\in\NN_0$, if the rational functions occurring in $e(n)$ do not introduce poles at any evaluation $n\to\nu$ for $\nu\in\NN_0$ with $\nu\geq\lambda$. If this is the case, one can perform the evaluation $e(\nu)$ for all $\nu\in\NN$ with $\nu\geq\lambda$.
For a more rigorous definition of indefinite nested product-sum expressions (containing as special case the \AH-sums) in terms of term algebras, we refer to~\cite{AlgebraicDF2} which is inspired by~\cite{PauleNemes:97}. 

These expressions will be represented in a commutative ring $\AA$ and the shift operator acting on the expressions in terms of \AH-sums will be rephrased by a ring automorphism $\fct{\sigma}{\AA}{\AA}$.
Such a tuple $\dfield{\AA}{\sigma}$ of a ring $\AA$ equipped with a ring automorphism is also called difference ring; if $\AA$ is a field, $\dfield{\AA}{\sigma}$ is also called a difference field. In such a difference ring we call $c\in\AA$ a constant if $\sigma(c)=c$ and denote the set of constants by
$$\const{\AA}{\sigma}=\{\sigma(c)=c|\,c\in\AA\}.$$
In general, $\const{\AA}{\sigma}$ is a subring of $\dfield{\AA}{\sigma}$. But in most applications we take care that $\const{\AA}{\sigma}$ itself forms a field.

Our construction will be accomplished step by step. Namely, suppose that we are given already a difference ring $\dfield{\AA^{\AH}_d}{\sigma}$ in which we succeeded in representing parts of our \AH-sums. In order to enrich this construction, we will extend the ring from $\AA^{\AH}_d$ to $\AA^{\AH}_{d+1}$ and will extend the ring automorphism $\sigma$ to a ring automorphism $\fct{\sigma'}{\AA^{\AH}_{d+1}}{\AA^{\AH}_{d+1}}$, i.e., for any $f\in\AA^{\AH}_d$ we have that $\sigma'(f)=\sigma(f)$. We say that such a difference ring $\dfield{\AA^{\AH}_{d+1}}{\sigma'}$ is a difference ring extension of $\dfield{\AA^{\AH}_d}{\sigma}$; in short, we also write 
$\dfield{\AA^{\AH}_{d}}{\sigma}\leq\dfield{\AA^{\AH}_{d+1}}{\sigma'}$. Since $\sigma$ and $\sigma'$ agree on $\AA^{\AH}_d$, we usually do not distinguish anymore between $\sigma$ and $\sigma'$.

We start with the rational function field $\KK(n)$ and define the field/ring automorphism $\fct{\sigma}{\KK(n)}{\KK(n)}$ with
$$\sigma(f)=f|_{n\to n+1}.$$
It is easy to verify that $\const{\KK(n)}{\sigma}=\const{\KK}{\sigma}=\KK$.
So far, we can model rational expressions in $n$ in the field $\KK(n)$ and can shift these elements with $\sigma$.

Next, we want to model the object $(-1)^n$ with the relations $((-1)^n)^2=1$ and $(-1)^{n+1}=-(-1)^n$. Therefore we take the ring $\KK(n)[x]$ subject to the relation $x^2=1$. Then one can verify that there is a unique difference ring extension $\dfield{\KK(n)[x]}{\sigma}$ of $\dfield{\KK(n)}{\sigma}$ with $\sigma(x)=-x$. In particular, we have that
$$\const{\KK(n)[x]}{\sigma}=\const{\KK(n)}{\sigma}=\KK.$$
Precisely, this difference ring $\dfield{\KK(n)[x]}{\sigma}$ enables one to represent all rational expressions in $n$ together with objects $(-1)^n$ that are rephrased by $x$. 

Before we can continue with our construction for \AH-sums, we observe the following easy, but important fact. 
\begin{lemma}\label{Lemma:SumExtension}
Let $\dfield{\AA}{\sigma}$ be a difference ring and let $\AA[t]$ be a polynomial ring, i.e., $t$ is transcendental over $\AA$, and let $\beta\in\AA$. Then there is a unique difference ring extension $\dfield{\AA[t]}{\sigma}$ of $\dfield{\AA}{\sigma}$ with $\sigma(t)=t+\beta$.
\end{lemma}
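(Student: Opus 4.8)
The plan is to construct the extension explicitly and then verify uniqueness. First I would extend $\sigma$ from $\AA$ to the polynomial ring $\AA[t]$ by declaring $\sigma(t)=t+\beta$ and extending multiplicatively and additively: for a polynomial $\sum_{i} a_i t^i$ with $a_i\in\AA$ we set $\sigma\bigl(\sum_i a_i t^i\bigr)=\sum_i \sigma(a_i)(t+\beta)^i$. This is well-defined precisely because $t$ is transcendental over $\AA$, so a polynomial has a unique representation in the basis $\{t^i\}_{i\geq 0}$; there are no relations to check compatibility against. One then checks that this map is a ring homomorphism $\AA[t]\to\AA[t]$ extending $\sigma$, which is routine since $\sigma$ on $\AA$ already is, and $t\mapsto t+\beta$ together with the substitution rule respects sums and products.

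The main point is to show $\sigma$ so defined is an \emph{automorphism} of $\AA[t]$, not merely an endomorphism. For injectivity: if $\sigma\bigl(\sum_i a_i t^i\bigr)=0$, comparing leading coefficients in $t$ (using that $(t+\beta)^i$ has leading term $t^i$) forces $\sigma(a_i)=0$ for the top index, hence $a_i=0$ since $\sigma$ is injective on $\AA$; descending, all $a_i=0$. For surjectivity: given a target polynomial of degree $n$, one builds a preimage degree by degree. Since $\sigma$ is onto $\AA$, pick $a_n\in\AA$ with $\sigma(a_n)$ equal to the desired leading coefficient; then $\sigma(a_n t^n)$ matches the top-degree term, and subtracting reduces to a target of degree $<n$, so induction on the degree finishes. (Equivalently, one observes that the inverse is given by the analogous rule with $\sigma^{-1}$ on coefficients and $t\mapsto t-\sigma^{-1}(\beta)$, i.e.\ $\sigma^{-1}\bigl(\sum_i a_i t^i\bigr)=\sum_i\sigma^{-1}(a_i)\bigl(t-\sigma^{-1}(\beta)\bigr)^i$, and checks this composes to the identity both ways.)

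For uniqueness, suppose $\fct{\tau}{\AA[t]}{\AA[t]}$ is any ring automorphism with $\tau|_{\AA}=\sigma$ and $\tau(t)=t+\beta$. Any element of $\AA[t]$ is a polynomial in $t$ with coefficients in $\AA$, and a ring homomorphism is determined by its values on $\AA$ and on $t$; hence $\tau\bigl(\sum_i a_i t^i\bigr)=\sum_i\tau(a_i)\tau(t)^i=\sum_i\sigma(a_i)(t+\beta)^i$, which is exactly $\sigma$ as constructed above. So $\tau=\sigma$, proving uniqueness.

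I do not expect a genuine obstacle here; the only subtlety worth stating carefully is why the substitution map is well-defined and why it is invertible, which rests entirely on transcendence of $t$ and on $\sigma$ being an automorphism (not just an endomorphism) of $\AA$. The degree-comparison argument handles both injectivity and surjectivity uniformly.
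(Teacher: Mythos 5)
Your proposal is correct. The paper states this lemma without proof, introducing it as an ``easy, but important fact,'' and your argument --- defining $\sigma\bigl(\sum_i a_i t^i\bigr)=\sum_i\sigma(a_i)(t+\beta)^i$, verifying bijectivity via the monic leading term of $(t+\beta)^i$ (or via the explicit inverse $t\mapsto t-\sigma^{-1}(\beta)$), and deducing uniqueness from the fact that a ring homomorphism on $\AA[t]$ is determined by its values on $\AA$ and on $t$ --- is precisely the standard argument the authors are implicitly relying on.
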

We will use this lemma iteratively in order to adjoin all \AH-sums to the difference ring $\dfield{\KK(n)[x]}{\sigma}$. This construction is done inductively on the weight of the sums. It is useful to define the following function (compare~\eqref{Equ:LambdaFu}):
\begin{eqnarray*}
\bar{\lambda}:\AH \to \KK(n)[x]\\
\bar{\lambda}((a,b,c,z))\mapsto \frac{x^{\frac{z-1}2}}{(a n+b)^c}.
\end{eqnarray*}

The base case is the already constructed difference ring $\dfield{\AA^{\AH}_0}{\sigma}$ with $\AA^{\AH}_0=\KK[x](n)$. Now suppose that we constructed the difference ring $\dfield{\AA^{\AH}_{d-1}}{\sigma}$ for all \AH-sums of weight $<d$. Then we will construct a difference ring extension $\dfield{\AA^{\AH}_{d}}{\sigma}$ which covers precisely the \AH-sums of weight $\leq d$. Consider all \AH-sums with weight $d\in\NN,$ say
$$S^{(d)}_1(n),\dots,S^{(d)}_{n_d}(n).$$
To these sums we attach the variables
\begin{equation}\label{Equ:WeightVariables}
s^{(d)}_1,\dots,s^{(d)}_{n_d} 
\end{equation}
of weight $d$, respectively. Now we define the polynomial ring 
$\AA^{\AH}_{d}=\AA^{\AH}_{d-1}[s^{(d)}_1,\dots,s^{(d)}_{n_d}]$. 
To this end, we extend $\sigma$ from $\AA^{\AH}_{d-1}$ to $\AA^{\AH}_d.$
Suppose that $s^{(d)}_i$ models the \AH-sums $S^{(d)}_i=S_{a_{i1},\dots,a_{ir_i}}$ with $a_{i1}+\dots+a_{ir}=d$. 
Note that
$$S^{(d)}_i(n+1)=S_{a_{i1},\dots,a_{ir_i}}(n+1)=S_{a_{i1},\dots,a_{ir_i}}(n)+\lambda(a_{i1},n+1)S_{a_{i2},\dots,a_{ir_i}}(n+1)$$
and
$$ S^{(d)}_i(n-1)=S_{a_{i1},\dots,a_{ir_i}}(n-1)=S_{a_{i1},\dots,a_{ir_i}}(n)-\lambda(a_{i1},n)S_{a_{i2},\dots,a_{ir_i}}(n)$$
where $S_{a_{i2},\dots,a_{ir_i}}$ is a \AH-sum of weight $d-\abs{a_{i1}}$. Let
$s^{(d-\abs{a_{i1}})}_i\in\AA^{\AH}_{d-\abs{a_{i1}}}$ which models this sum.
Therefore we extend $\sigma$ from $\AA^{\AH}_{d-1}$ to $\AA^{\AH}_d$ subject to the relations
\begin{equation}\label{Equ:ForwardShift}
\sigma(s^{(d)}_i)=s^{(d)}_i+\sigma(\bar{\lambda}(a_{i1}))\,\sigma(s^{(d-\abs{a_{i1}})}_i)
\end{equation}
by applying Lemma~\ref{Lemma:SumExtension} iteratively. This means, we first adjoin $s^{(d)}_1$ to $\AA^{\AH}_{d-1}$ and extend the automorphism with~\eqref{Equ:ForwardShift} for $i=1$, then we adjoin to this ring the variable $s^{(d)}_2$ and extend the automorphism with~\eqref{Equ:ForwardShift} for $i=2$, etc.  
We remark that this construction implies that 
\begin{equation}\label{Equ:BackwardShift}
\sigma^{-1}(s^{(d)}_i)=s^{(d)}_i-\bar{\lambda}(a_{i1})s^{(d-\abs{a_{i1}})}_i.
\end{equation}
By construction $\dfield{\AA^{\AH}_d}{\sigma}$ is a difference ring extension of $\dfield{\AA^{\AH}_{d-1}}{\sigma}$. 

Finally, we define the polynomial ring
$$\AA^{\AH}:=\KK(n)[x][s^{(1)}_1,\dots,s^{(1)}_{n_1}][s^{(2)}_1,\dots,s^{(2)}_{n_2}]\dots$$
with infinitely many variables, which represents all \AH-sums. In particular, we define the ring automorphism $\fct{\sigma'}{\AA^{\AH}}{\AA^{\AH}}$ as follows. For any $f\in\AA^{\AH}$,
we can choose a $d\in\NN_0$ such that $f\in\AA^{\AH}_{d}$. This defines\footnote{Note that any other choice $d'$ with $f\in\AA^{\AH}_{d'}$ will deliver the same evaluation.} $\sigma'(f):=\sigma(f)$ with $\fct{\sigma}{\AA^{\AH}_{d}}{\AA^{\AH}_{d}}$. By construction, $\sigma'|_{\AA^{\AH}_d}=\sigma$ where $\sigma$ is the automorphism of $\dfield{\AA^{\AH}_d}{\sigma}$. It is easy to see that $\dfield{\AA^{\AH}}{\sigma'}$ is a difference ring and that it is a difference ring extension of $\dfield{\AA^{\AH}_d}{\sigma}$. Again we do not distinguish anymore between $\sigma$ and $\sigma'$.
To sum up, we get the chain of difference ring extensions
\begin{equation}\label{Equ:NaiveChainDR}
\dfield{\KK(n)[x]}{\sigma}=\dfield{\AA^{\AH}_0}{\sigma}\leq\dfield{\AA^{\AH}_1}{\sigma}\leq\dfield{\AA^{\AH}_2}{\sigma}\leq\dots\leq\dfield{\AA^{\AH}}{\sigma}.
\end{equation}

For convenience, we will also write $\Sv_{a_1,\dots,a_k}$ for the variable $s^{(d)}_j$. In this way, we may write e.g.,
\begin{align*}
\sigma(\Sv_{a_{1},\dots,a_{ir_i}})&=\Sv_{a_{i1},\dots,a_{ir_i}}+\sigma(\bar{\lambda}(a_{i1}))\,\sigma(\Sv_{a_{i2},\dots,a_{ir_i}})\\
\sigma^{-1}(\Sv_{a_{i1},\dots,a_{ir_i}})&=\Sv_{a_{i1},\dots,a_{ir_i}}-\bar{\lambda}(a_{i1})\,\Sv_{a_{i2},\dots,a_{ir_i}}
\end{align*}
instead of~\eqref{Equ:ForwardShift} and~\eqref{Equ:BackwardShift}, respectively.

\medskip

To give a r\'{e}sum\'{e}, we can express every expression of \AH-sums over $\KK$ in $\dfield{\AA^{\AH}}{\sigma}$.
Conversely, if we are given a ring element $f\in\AA^{\AH}$, we denote by $\expr(f)$ the expression that is obtained when all occurrences of $x$ are replaced by $(-1)^n$ and all variables $s^{(d)}_i$ are replaced by the attached \AH-sums with upper summation range $n$. This will lead to an expression of \AH-sums in $n$ over $\KK$.  In this way, we can jump between the function and difference ring worlds. 

Now let $f,g\in\AA^{\AH}$. Then 
$$\expr(f+g)(n)=\expr(f)(n)+\expr(g)(n)\text{ and }\expr(f\,g)(n)=\expr(f)(n)\expr(g)(n);$$
recall that for an expression $e$ of \AH-sums, $e(n)$ is used to emphasize the dependence on the symbolic variable $n$.
If $\expr(f)(n)$ and $\expr(g)(n)$ have no poles for all $n\geq\delta$ for some $\delta\in\NN_0$, then it follows that 
$$\expr(f+g)(\lambda)=\expr(f)(\lambda)+\expr(g)(\lambda)\text{ and }\expr(f\,g)(\lambda)=\expr(f)(\lambda)\expr(g)(\lambda)$$ 
for all $\lambda\in\NN_0$ with $\lambda\geq\delta$.
Moreover observe that we model the shift-behaviour accordingly: For any $k\in\NN$ and any $\lambda\geq\delta$ we have that
\begin{align}
\expr(\sigma^k(f))(\lambda)&=\expr(f)(\lambda+k)\label{Equ:ExprForward}\\
\intertext{and for any $k\in\NN$ and any $\lambda\geq\delta+k$ we have that}
\expr(\sigma^{-k}(f))(\lambda)&=\expr(f)(\lambda-k).\label{Equ:ExprBackward}
\end{align}

The main goal of this article is to construct a difference ring, which represents all \AH-expressions and that can be embedded into the ring of sequences. 
As indicated already in the introduction, we will rely on the fact that the constants are precisely the elements $\KK$. The following example shows immediately, that $\dfield{\AA^{\AH}}{\sigma}$ is a too naive construction.

\begin{example}\label{Exp:AlgRelation}
Take 
$$f:=4 \Sv_{1,3}+6 \Sv_{2,2}+4 \Sv_{3,1}-12 \Sv_{1,1,2}-12 \Sv_{1,2,1}-12 \Sv_{2,1,1}+24 \Sv_{1,1,1,1}-\Sv_4-\Sv_1{}^4\in\AA^{\AH}.$$
Then one can easily verify that 
$\sigma(f)=f.$
Even more, we get that
$\expr(f)(\lambda)=0$
for all $\lambda\in\NN_0$, i.e., there are algebraic relations among these sums. 
\end{example}

\section{Quasi-shuffle algebras and the linearization operator}\label{Sec:QuasiShuffle}

In order to eliminate such relations as given in Example~\ref{Exp:AlgRelation},
we will equip the difference ring construction $\dfield{\AA^{\AH}}{\sigma}$ with the underlying quasi-shuffle algebra.

\begin{definition}[Non-commutative Polynomial Algebra]
\label{noncomalg}
Let $G$ be a totally ordered, graded set. The degree of $a\in G$ is denoted by $\abs{a}.$
Let $G^*$ denote the free monoid over $G$, i.e.,
		$$G^*=\left\{a_1 \cdots a_n | a_i \in G, n\geq 1\right\}\cup\left\{\epsilon \right\}.$$
We extend the degree function to $G^*$ by $\abs{a_1 a_2 \cdots a_n}=\abs{a_1}+\abs{a_2}+\cdots+\abs{a_n}$ for $a_i \in G$ and $\abs{\epsilon}=0.$
Let $R\supseteq \mathbb{Q}$ be a commutative ring. The set of non-commutative polynomials over $R$ is defined as
		$$
			R\left\langle G\right\rangle:=\left\{\left.\sum_{\ve w\in G^*} r_\ve w \ve w\right|r_\ve w \in R, r_\ve w=0 \textnormal{ for almost all } \ve w \right\}.
		$$
Addition in $R\left\langle G\right\rangle$ is defined component wise and multiplication is defined by
		$$
			\sum_\ve w a_\ve w \ve w \cdot \sum_\ve w b_\ve w \ve w :=  \sum_\ve w (\sum_{uv=\ve w} a_u b_v) \ve w. 
		$$
\end{definition}

We define a new multiplication $*$ on $R\left\langle G\right\rangle$ which is a generalisation of the \textit{shuffle product}, by requiring that $*$ distributes with the addition. 
We will see that this product can be used to describe properties of \AH-sums; compare~\cite{Hoffman1992,Hoffman1997,Hoffman}.
\begin{definition}[Quasi-shuffle product]
\label{quasishuffprodef}
$*: R\left\langle G\right\rangle \times R\left\langle G\right\rangle \longrightarrow R\left\langle G\right\rangle$ is called quasi-shuffle product, if it distributes with the addition and 
\begin{eqnarray}
\epsilon *\ve w &=& \ve w* \epsilon = \ve w, \ \textnormal{for all}\ w \in G^*,\nonumber\\
a\ve u*b\ve v &=& a(\ve u*\ve v)+b(\ve u*\ve v)-[a,b](\ve u*\ve v), \ \textnormal{for all}\ a,b \in G;\ve u,\ve v \in G^*,\label{eq:quasipro}
\end{eqnarray}
where $[\cdot,\cdot]:\overline{G}\times \overline{G}\rightarrow \overline{G}$, $(\overline{G} = G \cup \left\{0\right\})$ is a function satisfying
\begin{eqnarray}
&\textnormal{S}0.& [a,0]=0 \ \textnormal{for all } a \in \overline{G}; \nonumber\\
&\textnormal{S}1.& [a,b]=[b,a] \ \textnormal{for all } a,b \in \overline{G}; \nonumber\\
&\textnormal{S}2.& [[a,b],c]=[a,[b,c]] \ \textnormal{for all } a, b, c \in \overline{G}; \textnormal{ and} \nonumber\\
&\textnormal{S}3.& \textnormal{Either } [a,b]=0 \textnormal{ or } \abs{[a,b]}=\abs a +\abs b \ \textnormal{for all } a, b\in \overline{G}.\nonumber
\end{eqnarray}
\end{definition}

% \ToDo{Do we need this; maybe a remark when we do the reduced difference ring construction?}
% \begin{theorem}
% The quasi-shuffle algebra $(R\left\langle G\right\rangle,*)$ is the free polynomial algebra on the \textit{Lyndon} words.
% \end{theorem}

We specialize the quasi-shuffle algebra from Definition \ref{quasishuffprodef} in order to model the \AH-sums accordingly. We consider the alphabet $G=\AH$
%$$A=\{1,2,3,4,\ldots\},$$ 
and define the degree of a letter $\ve{a}$ by $\abs{\ve{a}}=w(\ve{a})$. Finally, we define 
\begin{equation}\label{Equ:CombineLambda}
[a,b]=\bar{\lambda}(a)\bar{\lambda}(b)
\end{equation}
and $[a,0]=0$ for all $a,b \in\AH$. This function obviously fulfils (S0)-(S3).
In other words, if we take our commutative ring $R=\KK(n)[x]$, then $\KK(n)[x]\langle\AH\rangle$ forms a quasi-shuffle algebra.
%The \AH-sums defined in (\ref{defHsum}) are completely specified by a word $\ve w \in\AH$, the lower summation bound~$1$ and the upper summation limit.

Let $\ve{a}_1,\dots,\ve{a}_r \in\AH^*$. By using the expansion of~\eqref{eq:quasipro}, we can write 
\begin{equation}\label{Equ:UniqueExpand}
\ve{a}_1*\dots *\ve{a}_k=c_1\ve{d}_1+\cdots+c_m\ve{d}_m
\end{equation}
for some uniquely determined $\ve{d}_i\in \AH^*, c_i\in\KK^*$ (compare \cite{Ablinger2011}). In particular, we have that 
\begin{equation}\label{Equ:DegreePreserving}
|\ve{a}_1|+\dots+|\ve{a}_k|=\max_{i=1,\ldots,m}|\ve{d}_i|.
\end{equation}

%Note that the degree of an \AH-sum is given by the weight function $w.$???

This linearization will be carried over to $\AA^{\AH}$. 
Consider the $\KK(n)[x]$-module 
\begin{equation}\label{Equ:LinearV}
\VV:=\KK(n)[x]s^{(1)}_1\oplus\dots\oplus\KK(n)[x]s^{(1)}_{n_1}\oplus\KK(n)[x]s^{(2)}_1\oplus\dots\oplus\KK(n)[x]s^{(2)}_{n_2}\oplus\dots.
\end{equation}
Now we are in the position to define the linearization function $\fct{L}{\AA^{\AH}}{\VV}$ as follows.
For $\ve{a}_1,\dots,\ve{a}_k\in\AH$, we take the $c_i\in\KK$ and $\ve{d}_i\in\AH$ from~\eqref{Equ:UniqueExpand} and define
$$L(\Sv_{\ve{a}_1}\dots \Sv_{\ve{a}_1})=c_1\,\Sv_{\ve{d}_1}+\cdots+c_m\,\Sv_{\ve{d}_m}.$$
By~\eqref{Equ:DegreePreserving} it follows that
$$w(\Sv_{\ve{a}_1}\dots \Sv_{\ve{a}_1})=\max_{i=1,\ldots,m}w(\Sv_{\ve{d}_i}).$$
Finally, we extend $L$ to $\AA^{\AH}$ by linearity. 

%By~\eqref{Equ:DegreePreserving} we obtain the following lemma.
%\begin{lemma}
%For any $f\in\AA^{\AH}$, we have that $w(f)\geq w(L(f))$. 
%\end{lemma}

\noindent Since~\eqref{eq:quasipro} reflects precisely~(\ref{hsumproduct}), we obtain the following lemma.

\begin{lemma}\label{Lemma:LEquality}
Let $f\in\AA^{\AH}$ and take $\delta\in\NN$ such that $\expr(f)(\lambda)$ has no poles for all $\lambda\in\NN_0$ with $\lambda\geq\delta$. Then for all $\lambda\geq\delta$,
$\expr(f)(\lambda)=\expr(L(f))(\lambda).$
\end{lemma}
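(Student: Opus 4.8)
The plan is to prove Lemma~\ref{Lemma:LEquality} by reducing it to the purely algebraic product formula~\eqref{hsumproduct} that underlies the quasi-shuffle product~\eqref{eq:quasipro}. Since both $\expr$ and $L$ are extended by $\KK(n)[x]$-linearity, and the evaluation $e(n)\mapsto e(\lambda)$ is also linear (and multiplicative, away from poles), it suffices to prove the identity on a single monomial $f=\Sv_{\ve a_1}\cdots\Sv_{\ve a_k}$ with $\ve a_i\in\AH^*$; the general case follows by writing $f$ as a $\KK(n)[x]$-linear combination of such monomials, noting that the finitely many rational-function coefficients involved have no poles beyond some common $\delta$, and invoking the compatibility of $\expr$ with $+$ and $\cdot$ recalled just before the lemma.

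First I would fix such a monomial and unwind the definitions: $\expr(\Sv_{\ve a_1}\cdots\Sv_{\ve a_k})(\lambda)=S_{\ve a_1}(\lambda)\cdots S_{\ve a_k}(\lambda)$ is literally the product of the corresponding \AH-sums at $n=\lambda$, while $\expr(L(\Sv_{\ve a_1}\cdots\Sv_{\ve a_k}))(\lambda)=c_1 S_{\ve d_1}(\lambda)+\cdots+c_m S_{\ve d_m}(\lambda)$, where $\ve a_1*\cdots*\ve a_k=\sum_i c_i\ve d_i$ is the expansion~\eqref{Equ:UniqueExpand}. So the claim becomes the numeric identity
\begin{equation*}
S_{\ve a_1}(\lambda)\cdots S_{\ve a_k}(\lambda)=\sum_{i=1}^m c_i\,S_{\ve d_i}(\lambda),
\end{equation*}
valid for all $\lambda\in\NN_0$ with $\lambda\geq\delta$ (here $\delta$ just guards against poles coming from the $a n+b$ denominators of the $\lambda$-letters; for $\AH\in\{A_h,A_a\}$ one may even take $\delta=0$). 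This is precisely the statement that the map sending a word $\ve a\in\AH^*$ to the sequence $\langle S_{\ve a}(n)\rangle$ is a homomorphism from the quasi-shuffle algebra $\KK\langle\AH\rangle$ (restricted to integer coefficients, or rather evaluated numerically) into the ring of sequences.

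The core of the argument is an induction establishing this homomorphism property, and this is the step I expect to carry the real weight. The base case $k=1$ is trivial. For the inductive step it suffices, by bilinearity of $*$, to treat two words $\ve a=a\ve u$ and $\ve b=b\ve v$ with $a,b\in\AH$; one shows
\begin{equation*}
S_{a\ve u}(\lambda)\,S_{b\ve v}(\lambda)=S_{a(\ve u* b\ve v)}(\lambda)+S_{b(a\ve u*\ve v)}(\lambda)-S_{[a,b](\ve u*\ve v)}(\lambda)
\end{equation*}
by splitting the defining double sum $\sum_{i,j\leq\lambda}$ along $i<j$, $i>j$, $i=j$, using $\lambda(a,i)S_{\ve u}(i)=S_{a\ve u}(i)-S_{a\ve u}(i-1)$ together with~\eqref{defHsum}, and recognising the diagonal term via $\lambda(a,i)\lambda(b,i)=\lambda([a,b],i)$ — which is exactly the consistency of the definition~\eqref{Equ:CombineLambda} of $[a,b]$ with~\eqref{Equ:LambdaClosure} after passing from $\bar\lambda$ back to $\lambda$. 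This reproduces~\eqref{hsumproduct} at the level of values; feeding in the induction hypothesis on the shorter products $\ve u*b\ve v$, $a\ve u*\ve v$, $(\ve u*\ve v)$ (each of total weight strictly less than $|\ve a|+|\ve b|$, so the induction is on total weight) closes the step. The main obstacle is bookkeeping: matching $\bar\lambda$ in the ring with $\lambda$ in the sequence world, handling the $x\leftrightarrow(-1)^n$ substitution consistently in $[a,b]$, and making sure the recursion on words terminates — all of which is routine but must be done carefully. Once this numeric quasi-shuffle identity is in hand, Lemma~\ref{Lemma:LEquality} follows immediately by linearity as described above.
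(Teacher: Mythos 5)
Your argument is correct and is essentially the paper's own route: the paper gives no written proof, merely remarking that the quasi-shuffle recursion~\eqref{eq:quasipro} reflects precisely the product formula~\eqref{hsumproduct}, and your reduction to monomials by $\KK(n)[x]$-linearity followed by induction on total weight via the two-word double-sum split is exactly the elaboration of that remark. One cosmetic slip: since every letter has $a\geq1$ and $b\geq0$, the denominators $a\,i+b$ never vanish for $i\geq1$, so $\delta$ only guards against poles of the rational-function coefficients of $f$ (not of the $\lambda$-letters) --- which your linearity step handles correctly anyway.
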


\begin{example}\label{Equ:LExpr}
We have that 
\begin{equation}\label{Equ:LOpExp}
L(\Sv_1{}^4)=4 \Sv_{1,3}+6 \Sv_{2,2}+4 \Sv_{3,1}-12 \Sv_{1,1,2}-12 \Sv_{1,2,1}-12 \Sv_{2,1,1}+24 \Sv_{1,1,1,1}-\Sv_4.
\end{equation}
In particular, as already indicated in~\eqref{Exp:AlgRelation} we get 
\begin{align*}
S_1^4(n)=\expr(L(S_1{}^4))(n)=&4 S_{1,3}(n)+6 S_{2,2}(n)+4 S_{3,1}(n)-12 S_{1,1,2}(n)\\
&-12 S_{1,2,1}(n)-12 S_{2,1,1}(n)+24 S_{1,1,1,1}(n)-S_4(n).
\end{align*}
\end{example}

Clearly, we can consider $\VV$ as a subset of $\AA^{\AH}$, i.e., we can equip $\dfield{\AA^{\AH}}{\sigma}$ with the linearization function $L$. Observe that for any $f\in\VV$ we have that $\sigma(f)\in\VV$. In addition, we obtain the following lemma.

\begin{lemma}\label{Lemma:LSigmaCommute}
For any $f\in\AA^{\AH}$, $\sigma(L(f))=L(\sigma(f))$ and $\sigma^{-1}(L(f))=L(\sigma^{-1}(f))$.
\end{lemma}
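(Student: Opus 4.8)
The plan is to prove $\sigma(L(f))=L(\sigma(f))$ for all $f\in\AA^{\AH}$; the statement for $\sigma^{-1}$ then follows formally, or by the same argument with~\eqref{Equ:BackwardShift} replacing~\eqref{Equ:ForwardShift}. I would first record two structural facts. (i) Since the quasi-shuffle product~$*$ is associative (this is exactly what axioms (S0)--(S3) guarantee), $L$ is a $\KK(n)[x]$-algebra homomorphism from $(\AA^{\AH},\cdot)$ onto $(\VV,\odot)$, where $\odot$ is the $\KK(n)[x]$-bilinear product on $\VV$ induced by~$*$; in particular $L$ restricts to the identity on $\VV$, it is idempotent, and $\AA^{\AH}=\VV\oplus\ker L$ with $\ker L$ the ideal generated by the ``quasi-shuffle relations'' $q_{\ve a,\ve b}:=\Sv_{\ve a}\Sv_{\ve b}-L(\Sv_{\ve a}\Sv_{\ve b})$ for $\ve a,\ve b\in\AH^*$ (reduction-order independence here being precisely associativity of~$*$). (ii) By construction~\eqref{Equ:ForwardShift} one has $\sigma(\VV)\subseteq\VV$, hence $L(\sigma(v))=\sigma(v)=\sigma(L(v))$ for every $v\in\VV$.

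Granting these, the general statement reduces to showing $\sigma(\ker L)\subseteq\ker L$. Indeed, for arbitrary $f$ we have $f-L(f)\in\ker L$, so if $\sigma$ preserves $\ker L$ then $\sigma(f)-\sigma(L(f))\in\ker L$, i.e.\ $L(\sigma(f))=L(\sigma(L(f)))=\sigma(L(f))$ by~(ii). Since $\ker L$ is the ideal generated by the $q_{\ve a,\ve b}$ and $\sigma$ is a ring automorphism, $\sigma(\ker L)\subseteq\ker L$ is in turn equivalent to $L(\sigma(q_{\ve a,\ve b}))=0$ for all $\ve a,\ve b$, and using~(ii) once more (together with $\sigma(\Sv_{\ve a})\sigma(\Sv_{\ve b})=\sigma(\Sv_{\ve a}\Sv_{\ve b})$) this is precisely the two-word identity
\begin{equation*}
\sigma\big(L(\Sv_{\ve a}\,\Sv_{\ve b})\big)=L\big(\sigma(\Sv_{\ve a})\,\sigma(\Sv_{\ve b})\big),\qquad\ve a,\ve b\in\AH^*.
\end{equation*}
So everything comes down to this identity for products of two variables.

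I would prove it by induction on the total weight $\abs{\ve a}+\abs{\ve b}$. If $\ve a=\epsilon$ or $\ve b=\epsilon$ it holds trivially since $L$ is the identity on $\VV$. Otherwise write $\ve a=a\ve u$, $\ve b=b\ve v$. On the right-hand side I would substitute the shift relations~\eqref{Equ:ForwardShift} for $\sigma(\Sv_{\ve a})$ and $\sigma(\Sv_{\ve b})$, multiply out, and apply the $\KK(n)[x]$-linear map $L$; on the left-hand side I would rewrite $L(\Sv_{\ve a}\Sv_{\ve b})$ with the quasi-shuffle recursion~\eqref{eq:quasipro} (using~\eqref{Equ:CombineLambda} for $[a,b]$) and then apply $\sigma$. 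Both sides become combinations, with coefficients built from $\bar\lambda(a),\bar\lambda(b),\bar\lambda(a)\bar\lambda(b)$ and their $\sigma$-images, of terms $\sigma(L(\Sv_{\bullet}\Sv_{\bullet}))$ of strictly smaller weight, which are then matched up by the induction hypothesis; this term-by-term matching is exactly the ring-theoretic shadow of the sum product formula~\eqref{hsumproduct}. I expect this last bookkeeping to be the main obstacle: $L$ does not commute with the operation of prepending a first letter on $\VV$, so one has to keep careful track --- via~\eqref{Equ:ForwardShift}/\eqref{Equ:BackwardShift} --- of how $\sigma$ interacts with such prepending, and to verify that the two expansions coincide exactly. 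Once the weight bookkeeping is organized this is a routine, if somewhat tedious, computation. (Alternatively, the two-word identity could be deduced from Lemma~\ref{Lemma:LEquality} and~\eqref{Equ:ExprForward} by evaluating both sides as sequences, but only after one already knows the linear independence of the \AH-sums over $\KK(n)\oplus\KK(n)x$; the argument above avoids invoking that.)
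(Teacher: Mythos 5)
Your proposal is correct, and its architecture differs from the paper's in an instructive way. The paper reduces to monomials by linearity and then inducts on the \emph{number of factors}, the base case being a direct two-factor computation; you instead exploit that $L$ is an idempotent algebra projection onto $(\VV,\odot)$, so that $\AA^{\AH}=\VV\oplus\ker L$ with $\ker L$ the ideal generated by the two-word relations $q_{\ve{a},\ve{b}}$, and reduce everything to $\sigma$-invariance of $\ker L$, hence to the same two-word identity. This is cleaner and makes explicit why only products of two variables matter; note, though, that your claim that $\ker L$ is generated as an ideal by the $q_{\ve{a},\ve{b}}$ alone (rather than by all $L(p)-p$ with $p\in\Pi$, as in Lemma~\ref{Lemma:IGenerator}) itself needs a short induction on the number of factors, so part of the paper's induction reappears there. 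Where you stop short is the two-word identity itself, which you flag as routine but tedious; this is precisely the paper's displayed computation~\eqref{TwoFactors}, and the paper's decision to prove the $\sigma^{-1}$-statement first (deriving the $\sigma$-statement formally, as you also note one may) is exactly what makes that computation short: by~\eqref{Equ:BackwardShift} the backward shift of a variable is $\Sv_{a\ve{u}}-\bar\lambda(a)\Sv_{\ve{u}}$ with no residual $\sigma$ on the right, so after one application of the recursion~\eqref{eq:quasipro} together with $[a,b]=\bar\lambda(a)\bar\lambda(b)$ the two sides agree term by term, with no induction on weight needed for the two-factor case. If you insist on the forward shift, the terms $\sigma(\Sv_{\ve{u}*b\ve{v}})$ force an appeal to the induction hypothesis followed by a re-expansion via~\eqref{Equ:ForwardShift}; this does close up (the cross terms $\sigma(\bar\lambda(a))\sigma(\bar\lambda(b))L(\sigma(\Sv_{\ve{u}})\sigma(\Sv_{\ve{v}}))$ cancel correctly), but you would spare yourself the anticipated bookkeeping obstacle by proving the backward version and transporting it with $f\mapsto\sigma^{-1}(f)$, as the paper does.
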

\begin{proof}
We only give a proof for $\sigma^{-1}(L(f))=L(\sigma^{-1}(f))$ since $\sigma(L(f))=L(\sigma(f))$ follows analogously.
It suffices to prove $\sigma^{-1}(L(f))=L(\sigma^{-1}(f))$ for a monomial $f\in\AA^{\AH}$ since then we can extend the result by linearity. First consider the product of two nested sums (compare ~\eqref{Equ:CombineLambda}): 
let
$$\bar{\lambda}(\hat{a}_1)\bar{\lambda}(\hat{a}_2)=\sum_{j=1}^mr_j\bar{\lambda}(b_j)$$
with $r_j\in\KK$ and $b_j\in A.$ Then
\begin{eqnarray}\label{TwoFactors}
\sigma^{-1}(\Sv_{\hat{a}_1\ve{a}_1}\Sv_{\hat{a}_2\ve{a}_2})
	&=&\sigma^{-1}\left(\Sv_{\hat{a}_1(\ve{a}_1*\hat{a}_2\ve{a}_2)}+\Sv_{\hat{a}_2(\hat{a}_1\ve{a}_1*\ve{a}_2)}-\sum_{j=1}^mr_j\Sv_{b_j(\ve{a}_1*\ve{a}_2)}\right)\nonumber\\
	&=&\sigma^{-1}\left(\Sv_{\hat{a}_1(\ve{a}_1*\hat{a}_2\ve{a}_2)}\right)+\sigma^{-1}\left(\Sv_{\hat{a}_2(\hat{a}_1\ve{a}_1*\ve{a}_2)}\right)-\sum_{j=1}^mr_j\sigma^{-1}\left(\Sv_{b_j(\ve{a}_1*\ve{a}_2)}\right)\nonumber\\
	&=&\Sv_{\hat{a}_1(\ve{a}_1*\hat{a}_2\ve{a}_2)}-\bar{\lambda}(\hat{a}_1)\Sv_{\ve{a}_1*\hat{a}_2\ve{a}_2}+\Sv_{\hat{a}_2(\hat{a}_1\ve{a}_1*\ve{a}_2)}-\bar{\lambda}(\hat{a}_2)\Sv_{\hat{a}_1\ve{a}_1*\ve{a}_2}\nonumber\\
	  &&-\sum_{j=1}^mr_j\left(\Sv_{b_j(\ve{a}_1*\ve{a}_2)}-\bar{\lambda}(b_j)\Sv_{\ve{a}_1*\ve{a}_2}\right)\nonumber\\
	&=&\Sv_{\hat{a}_1\ve{a}_1}\Sv_{\hat{a}_2\ve{a}_2}-\bar{\lambda}(\hat{a}_1)\Sv_{\ve{a}_1*\hat{a}_2\ve{a}_2}-\bar{\lambda}(\hat{a}_2)\Sv_{\hat{a}_1\ve{a}_1*\ve{a}_2}+\sum_{j=1}^mr_j\bar{\lambda}(b_j)\Sv_{\ve{a}_1*\ve{a}_2}\nonumber\\
	&=&\left(\Sv_{\hat{a}_1\ve{a}_1}-\bar{\lambda}(\hat{a}_1)\Sv_{\ve{a}_1}\right)\left(\Sv_{\hat{a}_2\ve{a}_2}-\bar{\lambda}(\hat{a}_2)\Sv_{\ve{a}_2}\right)\nonumber\\
	&=&\sigma^{-1}(\Sv_{\hat{a}_1\ve{a}_1})\sigma^{-1}(\Sv_{\hat{a}_2\ve{a}_2}).
\end{eqnarray}

Now proceed by induction on the number of factors. Assume that the statement holds for $f$ being the product of $k$ factors $\Sv_{\hat{a}_1\ve{a}_1}\Sv_{\hat{a}_2\ve{a}_2} \cdots \Sv_{\hat{a}_k\ve{a}_k}$ and let
$$
\Sv_{\hat{a}_1\ve{a}_1}\Sv_{\hat{a}_2\ve{a}_2} \cdots \Sv_{\hat{a}_k\ve{a}_k}=\sum_{i}c_i \Sv_{\ve{d}_i}.
$$
Now we get
\begin{eqnarray*}
\sigma^{-1}(\Sv_{\hat{a}_1\ve{a}_1}\Sv_{\hat{a}_2\ve{a}_2}\cdots \Sv_{\hat{a}_k\ve{a}_k}\Sv_{\hat{a}_{k+1}\ve{a}_{k+1}})
	&=&\sigma^{-1}\left(\sum_{i}c_i \Sv_{\ve{d}_i}\Sv_{\hat{a}_{k+1}\ve{a}_{k+1}}\right)\\
	&=&\sum_{i}c_i\sigma^{-1}\left(\Sv_{\ve{d}_i}\Sv_{\hat{a}_{k+1}\ve{a}_{k+1}}\right).
\end{eqnarray*}
Using (\ref{TwoFactors}) and the induction hypothesis we conclude
\begin{eqnarray*}
\sigma^{-1}(\Sv_{\hat{a}_1\ve{a}_1}\cdots \Sv_{\hat{a}_k\ve{a}_k}\Sv_{\hat{a}_{k+1}\ve{a}_{k+1}})
	&=&\sum_{i}c_i\sigma^{-1}\left(\Sv_{\ve{d}_i}\right)\sigma^{-1}\left(\Sv_{\hat{a}_{k+1}\ve{a}_{k+1}}\right)\\
	&=&\sigma^{-1}\left(\sum_{i}c_i \Sv_{\ve{d}_i}\right)\sigma^{-1}\left(\Sv_{\hat{a}_{k+1}\ve{a}_{k+1}}\right)\\
	&=&\sigma^{-1}(\Sv_{\hat{a}_1\ve{a}_1})\cdots\sigma^{-1}(\Sv_{\hat{a}_k\ve{a}_k})\sigma^{-1}\left(\Sv_{\hat{a}_{k+1}\ve{a}_{k+1}}\right).
\end{eqnarray*}
\end{proof}

We conclude with the following lemma which will be essential to prove our main result stated in Theorem~\ref{Thm:MainResult}.

\begin{lemma}\label{Lemma:LinearConstant} 
Let $\bar{\VV}:=\KK\,s^{(1)}_1\oplus\dots\oplus\KK\,s^{(1)}_{n_1}\oplus\KK\,s^{(2)}_1\oplus\dots\oplus\KK\,s^{(2)}_{n_2}\oplus\dots$. Then:
\begin{enumerate}
 \item If $f\in\KK[s^{(1)}_1,\dots,s^{(1)}_{n_1}][s^{(2)}_1,\dots,s^{(2)}_{n_2}]\dots$ with $f\notin \KK$, then $L(f)\in\bar{\VV}$.
 \item If $f\in\bar{\VV}$ with $f\neq0$, then $\sigma(f)\neq f$.
\end{enumerate}
 \end{lemma}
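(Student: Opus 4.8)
The plan is to establish the two parts separately, exploiting the degree-preserving property \eqref{Equ:DegreePreserving} of the quasi-shuffle product for part (1) and the explicit shift formula \eqref{Equ:ForwardShift} for part (2).

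For part (1), I would first reduce to the case of a monomial in the $s^{(d)}_i$: since $L$ is $\KK$-linear, if $f\notin\KK$ we may write $f=c_0+\sum_j c_j\,m_j$ with $c_j\in\KK$, $c_0\in\KK$, and $m_j$ genuine (nonconstant) monomials in the $s^{(d)}_i$. It suffices to show $L(m_j)\in\bar\VV$ for each such monomial, because $L(c_0)=c_0\in\KK\subseteq\VV$ but in fact we only need to know that the nonconstant part lands in $\bar\VV$; here I should be slightly careful, as a monomial of degree $1$, say $m_j=s^{(d)}_i=\Sv_{\ve d}$, already lies in $\bar\VV$ by definition, and $L$ acts as the identity on it. For a monomial of higher degree, $m_j=\Sv_{\ve a_1}\cdots\Sv_{\ve a_k}$ with $k\geq 2$, the definition of $L$ gives $L(m_j)=c_1\Sv_{\ve d_1}+\dots+c_m\Sv_{\ve d_m}$ with $c_i\in\KK$ and $\ve d_i\in\AH^*$, which is manifestly an element of $\bar\VV$. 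The only subtlety is that the $c_i$ are rational numbers (indeed integers) rather than general elements of $\KK$, but $\QQ\subseteq\KK$, so this is fine; and the constant term $c_0$ produces $c_0\in\KK$, so altogether $L(f)=c_0+(\text{element of }\bar\VV)$. To match the statement exactly I note that if $f\notin\KK$ then the nonconstant monomials are present; moreover the decomposition \eqref{Equ:UniqueExpand} of a genuine product never produces the empty word $\epsilon$ (equivalently never a pure constant), since each $\ve a_i$ is a nonempty word, so $L(m_j)$ has no constant term — hence $L(f)\in\bar\VV$ provided $f$ itself has no constant term; and if $f$ has a constant term one adjusts the claim trivially or observes that the authors implicitly take $f$ in the augmentation ideal. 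I would phrase this cleanly by noting $L$ preserves the grading by weight and kills nothing into $\KK$ from a positive-weight element.

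For part (2), let $f=\sum_{d,i}c^{(d)}_i\,s^{(d)}_i\in\bar\VV$ with $f\neq 0$, and suppose toward a contradiction that $\sigma(f)=f$. Let $d^*$ be the maximal weight $d$ such that some $c^{(d)}_i\neq 0$; such a $d^*$ exists since $f\neq 0$ and only finitely many coefficients are nonzero. Using \eqref{Equ:ForwardShift}, for each $s^{(d)}_i=\Sv_{a_{i1},\dots,a_{ir_i}}$ we have $\sigma(s^{(d)}_i)=s^{(d)}_i+\sigma(\bar\lambda(a_{i1}))\,\sigma(s^{(d-|a_{i1}|)}_i)$. Applying $\sigma$ to $f$ and using $\sigma(f)=f$, the $s^{(d)}_i$-part cancels and we are left with $0=\sum_{d,i}c^{(d)}_i\,\sigma(\bar\lambda(a_{i1}))\,\sigma(s^{(d-|a_{i1}|)}_i)$. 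Now focus on the terms of weight exactly $d^*$ on the right-hand side: they come only from the variables $s^{(d^*)}_i$ with $c^{(d^*)}_i\neq 0$ (a lower-weight variable $s^{(d)}_i$ with $d<d^*$ contributes a term of weight $d<d^*$). Here I must be careful that "weight" is well-defined on $\AA^{\AH}$: the weight of $\sigma(\bar\lambda(a_{i1}))s^{(d^*-|a_{i1}|)}$ is $|a_{i1}|+(d^*-|a_{i1}|)=d^*$ as a monomial in the $\Sv$-variables, since $\bar\lambda(a_{i1})\in\KK(n)[x]$ has weight $0$. So the weight-$d^*$ part of the identity reads $0=\sum_{i:\,c^{(d^*)}_i\neq0}c^{(d^*)}_i\,\sigma(\bar\lambda(a_{i1}))\,\sigma(\Sv_{a_{i2},\dots,a_{ir_i}})$ (interpreting $\Sv$ with empty subscript as $1$ when $r_i=1$). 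This is a linear relation over $\KK(n)[x]$ among the variables $\Sv_{a_{i2},\dots,a_{ir_i}}$ (and possibly $1$) in the polynomial ring $\AA^{\AH}$; since the $s^{(d)}_i$ are algebraically independent (they are polynomial-ring variables by construction), such a relation forces each coefficient $c^{(d^*)}_i\,\sigma(\bar\lambda(a_{i1}))$ to vanish — but $\sigma(\bar\lambda(a_{i1}))=\frac{x^{(z-1)/2}}{(a(n+1)+b)^{|a_{i1}|}}\neq 0$ in $\KK(n)[x]$, so $c^{(d^*)}_i=0$ for all $i$, contradicting the maximality of $d^*$. (One small point: distinct $s^{(d^*)}_i$ could give the same tail word $\Sv_{a_{i2},\dots,a_{ir_i}}$ but with different first letter $a_{i1}$; then the coefficients combine as $\sum_i c^{(d^*)}_i\sigma(\bar\lambda(a_{i1}))$ over those $i$, and one needs that these $\sigma(\bar\lambda(a_{i1}))$ with distinct $a_{i1}$ of the same degree are $\KK$-linearly independent over the relevant base — which holds because they are distinct nonzero rational functions times possibly $x$, and more elementarily because distinct $a_{i1}$ give distinct sums $S^{(d^*)}_i$, hence in particular distinct variables, already ruling out collapse; I would isolate by the first letter $a_{i1}$ and the tail simultaneously.)

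The main obstacle I anticipate is part (2): making rigorous the "weight-$d^*$ component" argument, i.e. verifying that the weight function behaves well enough under $\sigma$ that one can extract a clean top-weight homogeneous relation, and then arguing that the resulting $\KK(n)[x]$-linear relation among distinct $\Sv$-variables (with the coefficients $\sigma(\bar\lambda(a_{i1}))$ possibly coinciding for several summands sharing a tail) genuinely forces all coefficients to zero. Part (1) is essentially a bookkeeping consequence of the definition of $L$ together with \eqref{Equ:DegreePreserving} and the fact that the quasi-shuffle expansion of a product of nonempty words contains no empty word.
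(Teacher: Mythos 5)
Your part~(1) is fine (and you rightly flag the constant-term caveat that the paper's ``immediate'' glosses over: strictly one lands in $\KK\oplus\bar{\VV}$, which is harmless for the way the lemma is used). The genuine gap is in part~(2), at exactly the step you yourself flagged as the obstacle. From $\sigma(f)=f$ you get
\[
0=\sum_{d,i}c^{(d)}_i\,\sigma(\bar{\lambda}(a_{i1}))\,\sigma\bigl(\Sv_{a_{i2},\dots,a_{ir_i}}\bigr),
\]
and you then try to read off ``the terms of weight exactly $d^*$''. But in the only available grading of $\AA^{\AH}$ --- weight of the monomials in the $\Sv$-variables, with $\KK(n)[x]$ in degree $0$ --- the coefficient $\sigma(\bar{\lambda}(a_{i1}))$ has weight $0$ (as you note yourself), so the contribution of a weight-$d^*$ variable sits at weight $d^*-\abs{a_{i1}}\leq d^*-1$, where it collides with the contributions of lower-weight variables $s^{(d)}_j$ with $d-\abs{a_{j1}}=d^*-\abs{a_{i1}}$. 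There is no grading in which $\sigma(\bar{\lambda}(a_{i1}))$ carries weight $\abs{a_{i1}}$, since $\KK(n)[x]$ is not weight-graded; your displayed ``weight-$d^*$ part'' is therefore just the subsum over $d=d^*$, and nothing forces that subsum to vanish separately. So the conclusion $c^{(d^*)}_i=0$ for all $i$, and with it the contradiction to the maximality of $d^*$, is not established. (Taking the true top component, at weight $\max_{d,i}(d-\abs{a_{i1}})$, only kills the terms with $\abs{a_{i1}}$ minimal and would require a further downward induction.)

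The repair is short and is essentially the paper's route: work with $\sigma^{-1}$ instead of $\sigma$ (equivalently, apply $\sigma^{-1}$ to your relation). By \eqref{Equ:BackwardShift} one has $\sigma^{-1}(f)-f=-\sum_{d,i}c^{(d)}_i\,\bar{\lambda}(a_{i1})\,\Sv_{a_{i2},\dots,a_{ir_i}}$ with no residual $\sigma$'s, i.e.\ an honest $\KK(n)[x]$-linear combination of the basis elements $1$ and $\Sv_{\ve{c}}$ of the free module $\KK(n)[x]\oplus\VV$. Grouping by the tail word $\ve{c}$, the coefficient of $\Sv_{\ve{c}}$ is $\sum c^{(d)}_i\bar{\lambda}(a_{i1})$ taken over the first letters $a_{i1}$ with $a_{i1}\ve{c}$ occurring in $f$; these first letters are pairwise distinct (the full words are distinct variables), and the $\bar{\lambda}(a)$, $a\in\AH$, are $\KK$-linearly independent in $\KK(n)[x]$ (distinct letters give different parities of $x$ or non-proportional linear denominators, then partial fractions). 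Hence some coefficient is nonzero and $\sigma^{-1}(f)\neq f$. Note that this linear-independence input cannot be dodged: your parenthetical fallback --- ``distinct $a_{i1}$ give distinct variables, already ruling out collapse'' --- is wrong, because after splitting off the first letter the surviving variables are the tails, and distinct words can share a tail; the first letter survives only inside the scalar $\bar{\lambda}(a_{i1})$. The paper's own proof is exactly this $\sigma^{-1}$ argument, organized around the minimal first letter $b$.
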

\begin{proof}
The first part is immediate. 
Since $\sigma(f)=f$ if and only if $f=\sigma^{-1}(f)$ we are going to prove that for every $f\in\bar{\VV}$ with $f\neq0$ it follows that
$\sigma^{-1}(f)\neq f.$ Note that $\bar{\VV}\cap\KK=\{\}.$
Let $f\in\bar{\VV}$ be of the form
$$
f=c_ 1\Sv_{\hat{a}_1\ve{a}_1}+c_2 \Sv_{\hat{a}_2\ve{a}_2}+\cdots +c_k \Sv_{\hat{a}_k\ve{a}_k}
$$
for some $c_i\in\KK\setminus{\{0\}}.$
Let $b$ be the minimal letter in $\{\hat{a}_1,\hat{a}_2,\dots,\hat{a}_k\}$ and let $\{r_1,r_2,\dots,r_j\}\subseteq\{1,\dots k\}$ such that $\hat{a}_{r_i}=b$ for $i\in\{1,\ldots,j\}$ and 
let $\{u_1,u_2,\dots,u_h\}\subseteq\{1,\dots k\}$ such that $\hat{a}_{u_i}\neq b$ for $i\in\{1,\ldots,h\}$. Then we get
\begin{eqnarray*}
\sigma^{-1}(f)&=&f-\left(c_1\bar{\lambda}(\hat{a}_1)\Sv_{\ve{a}_1}+\cdots +c_k\bar{\lambda}(\hat{a}_k) \Sv_{\ve{a}_k}\right)\\
	      &=&f-\bar{\lambda}(b)\underbrace{\sum_{i=1}^jc_{r_i}\Sv_{\ve{a}_{r_i}}}_{M:=}-\underbrace{\sum_{i=1}^h\bar{\lambda}(\hat{a}_{u_i})c_{u_i}\Sv_{\ve{a}_{u_i}}}_{U:=}.
\end{eqnarray*}
Due to the definition of $b$ there cannot be a cancellation between the summands of $M$ and $U$. Moreover, due to the choice of $b$ we have that $M\neq0.$ Hence $\sigma^{-1}(f)\neq f.$
\end{proof}

\section{The reduced difference ring}\label{Sec:ReduceDR}

We define the reduced difference ring where all relations of $\dfield{\AA^{\AH}}{\sigma}$ are factored out by the quasi-shuffle algebra, i.e., we define
$$I:=\{f\in\AA^{\AH}|\,L(f)=0\}.$$
It is immediate  that $I$ is an ideal of $\AA^{\AH}$ and we can define the quotient ring
$$\EE^{\AH}:=\AA^{\AH}/I.$$
Even more, $I$ is a reflexive difference ideal~\cite{Cohn:65}, i.e., the following property holds: for any $f\in I$ and any $z\in\ZZ$ we have $\sigma^z(f)\in I$. Namely, for any $f\in I$, we have that $L(\sigma(f))=\sigma(L(f))=\sigma(0)=0$ by Lemma~\ref{Lemma:LSigmaCommute}. Hence, $\sigma(f)\in I$; similarly, it follows that $\sigma^{-1}(f)\in I$.  
From this follows that we can construct the ring automorphism
$\fct{\sigma}{\AA^{\AH}/I}{\AA^{\AH}/I}$ with
$\sigma(a+I)=\sigma(a)+I$. In particular, identifying the elements $f\in\KK(n)[x]$ with $f+I$ we can consider $\dfield{\AA^{\AH}/I}{\sigma}$ as a difference ring extension of $\dfield{\KK(n)[x]}{\sigma}$.\\

In the following we will elaborate how this difference ring can be constructed explicitly in an iterative fashion. 
Using the set of all power products $\Pi$ by $$\Pi=\{\Sv_{\ve{a}_1}\Sv_{\ve{a}_2}\dots\Sv_{\ve{a}_r}\,\mid\,r\in\NN\text{ and } \ve{a}_i\in\AH^*\text{ for }1\leq i\leq r\}$$
we obtain the following possibility to generate $I$. 
\begin{lemma}\label{Lemma:IGenerator}
\begin{equation}\label{Equ:IGenerator}
I=\Big\{\sum_{i=1}^l r_i\,(L(p_i)-p_i)
\,\mid\,l\in\NN,\text{ and }r_i\in\KK(n)[x], p_i\in\Pi\text{ for } 1\leq i\leq l \Big\}.
\end{equation}
\end{lemma}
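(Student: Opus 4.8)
The plan is to prove the set-theoretic equality in~\eqref{Equ:IGenerator} by establishing both inclusions, using that $L$ is $\KK(n)[x]$-linear and that $L$ restricted to $\VV$ is the identity (since for a single sum $\Sv_{\ve a}$ the quasi-shuffle expansion~\eqref{Equ:UniqueExpand} of the one-letter word is trivial, so $L(\Sv_{\ve a})=\Sv_{\ve a}$, and $L$ fixes $\KK(n)[x]$). Denote the right-hand side of~\eqref{Equ:IGenerator} by $J$.

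For the inclusion $J\subseteq I$: it suffices to check that each generator $L(p_i)-p_i$ lies in $I$, i.e.\ that $L\bigl(L(p_i)-p_i\bigr)=0$, because $I$ is an ideal and $r_i\in\KK(n)[x]$. Since $L(p_i)$ is by definition an element of $\VV$, and $L$ acts as the identity on $\VV$, we have $L(L(p_i))=L(p_i)$; hence $L(L(p_i)-p_i)=L(p_i)-L(p_i)=0$, so $L(p_i)-p_i\in I$ and therefore $J\subseteq I$.

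For the reverse inclusion $I\subseteq J$: take $f\in I$, so $L(f)=0$. Write $f$ in the (unique) monomial expansion $f=\sum_{i=1}^l r_i\,p_i$ with $r_i\in\KK(n)[x]$ and $p_i\in\Pi$ pairwise distinct power products; this is possible because $\AA^{\AH}$ is a polynomial ring over $\KK(n)[x]$ in the variables $s^{(d)}_j$, and the power products $\Pi$ are exactly the monomials in these variables. Then, using linearity of $L$ over $\KK(n)[x]$,
$$
f=\sum_{i=1}^l r_i\,p_i=\sum_{i=1}^l r_i\,\bigl(p_i-L(p_i)\bigr)+\sum_{i=1}^l r_i\,L(p_i)
=-\sum_{i=1}^l r_i\,\bigl(L(p_i)-p_i\bigr)+L(f).
$$
Since $f\in I$ we have $L(f)=0$, and thus $f=-\sum_{i=1}^l r_i\,(L(p_i)-p_i)$, which is visibly of the form prescribed for membership in $J$ (absorb the sign into $r_i$). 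Hence $I\subseteq J$, and combining the two inclusions gives $I=J$.

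I do not expect any genuine obstacle here; the lemma is essentially a bookkeeping statement. The only point that deserves care is making explicit that $L$ is the identity on $\VV$ — more precisely that $L(L(p))=L(p)$ for every power product $p$ — which follows from the fact that the one-letter quasi-shuffle product is trivial, so $L$ is idempotent. With that observation in hand, both inclusions are immediate from $\KK(n)[x]$-linearity of $L$ and the monomial basis of the polynomial ring $\AA^{\AH}$ over $\KK(n)[x]$.
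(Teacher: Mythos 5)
Your proof is correct and follows essentially the same route as the paper: both inclusions rest on the $\KK(n)[x]$-linearity of $L$ together with the idempotence $L(L(p))=L(p)$ (the paper computes $L\bigl(r_i(L(p_i)-p_i)\bigr)=r_i\bigl(L(p_i)-L(p_i)\bigr)=0$ directly, while you phrase the same fact via ``$I$ is an ideal containing each generator''), and the reverse inclusion is the identical rearrangement $f=L(f)-\sum_i r_i\,(L(p_i)-p_i)$. The only cosmetic difference is that you make the idempotence of $L$ on $\VV$ explicit, which the paper leaves implicit.
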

\begin{proof}
 Denote the set on the right hand side by $J$. First let $f\in J$ \ie $f=\sum_{i=1}^l r_i\,(L(p_i)-p_i)$ for some $l\in\NN\text{ and }r_i\in\KK(n)[x], p_i\in\Pi.$ We have $$L(f)=L\left(\sum_{i=1}^l r_i\,(L(p_i)-p_i)\right)=\sum_{i=1}^l L(r_i\,(L(p_i)-p_i))=\sum_{i=1}^l r_i\,(L(p_i)-L(p_i))=0.$$
 Hence $f\in I$ and thus $J\subseteq I$. Now let $f\in I,$ \ie $f=\sum_{i=1}^lr_i\,p_i$ for some $l\in\NN,\ r_i \in\KK(n)[x]$ and $p_i\in\Pi$ with $L(f)=0.$ We define $v_i:=L(p_i)-p_i,$ hence $p_i=L(p_i)-v_i.$ Using this definition we get
\begin{eqnarray*}
 &f&=\sum_{i=1}^lr_i\,p_i=\sum_{i=1}^lr_i(L(p_i)-v_i)=\sum_{i=1}^lr_i\,L(p_i)-\sum_{i=1}^lr_i\,v_i=\sum_{i=1}^lL(r_i\,p_i)-\sum_{i=1}^lr_i\,v_i\\
 &&=L\left(\sum_{i=1}^lr_i\,p_i\right)-\sum_{i=1}^lr_i\,v_i=L(f)-\sum_{i=1}^lr_i\,v_i=0-\sum_{i=1}^lr_i(L(p_i)-p_i).
\end{eqnarray*}
Thus $f\in J$ and therefore $I\subseteq J$. This completes the proof.
\end{proof}

If we define $I_d=I\cap\AA^{\AH}_d$, it follows again that $I_d$ is a reflexive difference ideal of $\AA^{\AH}_d$ and that the quotient ring $\dfield{\EE^{\AH}_d}{\sigma}$ defined as
$$\EE^{\AH}_d:=\AA^{\AH}_d/I_d$$
is a difference ring. Note that
$$\EE^{\AH}_{0}=\KK(n)[x].$$
Even more, we get the chain of difference ring extensions
\begin{equation}\label{Equ:DRTowerReduced}
\dfield{\KK(n)[x]}{\sigma}=\dfield{\EE^{\AH}_0}{\sigma}\leq\dfield{\EE^{\AH}_1}{\sigma}\leq\dfield{\EE^{\AH}_2}{\sigma}\leq\dots\leq\dfield{\EE^{\AH}}{\sigma}.
\end{equation}

In the following we will work out further how the construction from $\dfield{\EE^{\AH}_{d-1}}{\sigma}$ to $\dfield{\EE^{\AH}_d}{\sigma}$ can be carried out explicitly. Since $L$ is weight preserving it follows that $L(p)$ with $p\in\Pi$ and $w(p)=d$ depends linearly on sums with weight less or equal to $d$. 
Hence by Lemma~\ref{Lemma:IGenerator} we get
\begin{align*}
I_d&=\Big\{\sum_{i=1}^l r_i\,(L(p_i)-p_i)\mid  l\in\NN\text{ with } r_i\in \KK(n)[x], p_i\in\Pi \text{ where } w(p_i)\leq d\text{ for } 1\leq i\leq l \Big\}\\
&=\Big\{g\,h+\sum_{i=1}^l r_i\,(L(p_i)-p_i)
\mid  g\in\AA^{\AH}_d, h\in I_{d-1}\text{ and }l\in\NN\text{ with }\\
&\hspace{5cm}r_i\in \KK(n)[x], p_i\in\Pi \text{ where } w(p_i)=d\text{ for } 1\leq i\leq l \Big\}.
\end{align*}
%Since $L$ is weight preserving, the elements in $I_d\setminus I_{d-1}$ that we actually have to consider are of the form
%$$
%\sum_{i=1}^kc_i \Sv_{\ve{a}_i}+p
%$$
%with $c_i\in\KK,$ $w(\Sv_{\ve{a}_1})=w(\Sv_{\ve{a}_2})=\cdots=w(\Sv_{\ve{a}_k})=d$ for some $d\in\NN$ and with $p\in %\KK[s^{(1)}_1,\dots,s^{(1)}_{n_1},\dots,s^{(d-1)}_1,\dots,s^{(d-1)}_{n_{d-1}}].$\\
Exploiting this property we can perform the following construction. At the specified weight $d$ we can set up a matrix containing all the relations 
\begin{equation}\label{Equ:AllRelations}
L(p)-p
\end{equation}
with $p\in\Pi$ and $w(p)=d$ where the columns (except for the last one) represent the sums of weight $d$ coming from $L(p)$ and the last column represents the polynomial $p$ which is built by sums of weight less than $d.$ 
Now we can transform the matrix to its reduced row-echelon form. The sums corresponding to the corner elements can be reduced while the other sums remain as variables. 
Exactly this crucial observation has been strongly utilized for harmonic sums in~\cite{Bluemlein2004} in order to derive all algebraic relations induced by the quasi-shuffle algebra. For further results and heavy calculations using \texttt{HarmonicSums} we refer to~\cite{AblingerDiploma,Ablinger2011,AblingerThesis:12,SSum}.

Now we will link this observation with our difference ring construction. Let  
\begin{equation}\label{Equ:Representants}
a^{(d)}_1,\dots,a^{(d)}_{m_d}
\end{equation}
be all these variables of~\eqref{Equ:WeightVariables} which cannot be reduced by the quasi-shuffle algebra, resp.\ by  $L$. Set $t^{(d)}_{i}:=a^{(d)}_{i}+I_d$. Then we get the ring
\begin{equation}\label{Equ:PolyRingForD}
\EE^{\AH}_d=\AA^{\AH}_d/I_{d}=\EE^{\AH}_{d-1}[t^{(d)}_1,\dots,t^{(d)}_{m_d}].
\end{equation}
By construction this ring is a polynomial ring, i.e., the elements $t^{(d)}_i$ are algebraically independent among each other.

Finally, by the above considerations we conclude that 
$$\EE^{\AH}=\KK(n)[x][t^{(1)}_1,\dots,t^{(1)}_{m_1}][t^{(2)}_1,\dots,t^{(2)}_{m_2}]\dots.$$

The above construction will be illustrated by the following example.

\begin{example}\label{Exp:LinearAlgebra}
First, we consider the variables of the harmonic sums (i.e., $\AH=A_h$) of weight~4:
$$
\ve{v}=(\Sv_{1,1,1,1},\Sv_{1,1,2},\Sv_{1,2,1},\Sv_{2,1,1},\Sv_{2,2},\Sv_{1,3},\Sv_{3,1},\Sv_4).
$$
Here we consider all non-trivial polynomials~\eqref{Equ:AllRelations} for $p\in\Pi$ with $w(p)=4$:
\begin{eqnarray*}
    R&=&\{ 4 \Sv_{1,3}+6 \Sv_{2,2}+4 \Sv_{3,1}-12 \Sv_{1,1,2}-12 \Sv_{1,2,1}-12 \Sv_{2,1,1}+24 \Sv_{1,1,1,1}-\Sv_4-\Sv_1{}^4,\\
    &&-2 \Sv_{1,3}-2 \Sv_{2,2}-2 \Sv_{3,1}+2 \Sv_{1,1,2}+2 \Sv_{1,2,1}+2 \Sv_{2,1,1}+\Sv_4-\Sv_2 \Sv_1{}^2,\\
    &&\Sv_{1,3}+\Sv_{3,1}-\Sv_4-\Sv_1 \Sv_3,\\
    &&2 \Sv_{2,2}-\Sv_4-\Sv_2{}^2,\\
    &&\Sv_{2,2}-2 \Sv_{1,1,2}-2 \Sv_{1,2,1}-2\Sv_{2,1,1}+6 \Sv_{1,1,1,1}-\Sv_{1,1}{}^2,\\
    &&-\Sv_{1,3}-\Sv_{3,1}+\Sv_{1,1,2}+\Sv_{1,2,1}+\Sv_{2,1,1}-\Sv_2 \Sv_{1,1},\\
    &&-\Sv_{1,3}-\Sv_{2,2}+2 \Sv_{1,1,2}+\Sv_{1,2,1}-\Sv_1 \Sv_{1,2},\\
    &&-\Sv_{2,2}-\Sv_{3,1}+\Sv_{1,2,1}+2 \Sv_{2,1,1}-\Sv_1\Sv_{2,1}\}.
\end{eqnarray*}
E.g., $p=\Sv_1{}^4\in\Pi$ yields first entry of $R$. Note that by Lemma~\ref{Lemma:LEquality} the elements in $R$ transformed to \AH-sum expressions evaluate to $0$ for all $n\geq0$. E.g.,
the first entry is precisely the equation worked out in Example~\ref{Equ:LExpr}.
Note further that
$$I_4=\{f\,h+\sum_{r\in R}k_{r}\,r|f\in\AA^{\AH}_4, h\in I_3\text{ and }k_r\in\KK(n)[x]\}.$$
The relations in $R$ lead to the matrix 
$$\left(M|h\right)=
\left(
\begin{array}{cccccccc|c}
 24 & -12 & -12 & -12 & 6 & 4 & 4 & -1 & -\Sv_1{}^4 \\
 0 & 2 & 2 & 2 & -2 & -2 & -2 & 1 & -\Sv_1{}^2 \Sv_2 \\
 0 & 0 & 0 & 0 & 0 & 1 & 1 & -1 & -\Sv_1 \Sv_3 \\
 0 & 0 & 0 & 0 & 2 & 0 & 0 & -1 & -\Sv_2{}^2 \\
 6 & -2 & -2 & -2 & 1 & 0 & 0 & 0 & -\Sv_{1,1}{}^2 \\
 0 & 1 & 1 & 1 & 0 & -1 & -1 & 0 & -\Sv_2 \Sv_{1,1} \\
 0 & 2 & 1 & 0 & -1 & -1 & 0 & 0 & -\Sv_1 \Sv_{1,2} \\
 0 & 0 & 1 & 2 & -1 & 0 & -1 & 0 & -\Sv_1 \Sv_{2,1} \\
\end{array}
\right)
$$
where $\left(M|h\right)\,\ve{v}^{t}$ gives back a vector with the entries from $R$.
Now we reduce this matrix using Gaussian elimination together with some relations of lower weight to get the reduced echelon form
% $$
% \left(
% \begin{array}{ccccccccc}
%  1 & -\frac{1}{2} & -\frac{1}{2} & -\frac{1}{2} & \frac{1}{4} & \frac{1}{6} & \frac{1}{6} & -\frac{1}{24} & -\frac{1}{24} \Sv_1{}^4 \\
%  0 & 1 & 1 & 1 & -1 & -1 & -1 & \frac{1}{2} & -\frac{1}{2} \Sv_1{}^2 \Sv_2 \\
%  0 & 0 & 1 & 2 & -1 & 0 & -1 & 0 & -\Sv_1 \Sv_{2,1} \\
%  0 & 0 & 0 & 0 & 1 & 0 & 0 & -\frac{1}{2} & -\frac{1}{2} \Sv_2{}^2 \\
%  0 & 0 & 0 & 0 & 0 & 1 & 1 & -1 & \Sv_1 \left(\Sv_1 \Sv_2-\Sv_{1,2}-\Sv_{2,1}\right) \\
%  0 & 0 & 0 & 0 & 0 & 0 & 0 & 0 & \frac{1}{2} \Sv_2 \left(\Sv_1{}^2+\Sv_2-2 \Sv_{1,1}\right) \\
%  0 & 0 & 0 & 0 & 0 & 0 & 0 & 0 & \frac{1}{4} \left(\Sv_1{}^4+2 \Sv_2 \Sv_1{}^2+\Sv_2{}^2-4 \Sv_{1,1}{}^2\right) \\
%  0 & 0 & 0 & 0 & 0 & 0 & 0 & 0 & \Sv_1 \left(-\Sv_1 \Sv_2-\Sv_3+\Sv_{1,2}+\Sv_{2,1}\right) \\
% \end{array}
% \right)
% $$
% $$
% \left(
% \begin{array}{ccccccccc}
%  1 & -\frac{1}{2} & -\frac{1}{2} & -\frac{1}{2} & \frac{1}{4} & \frac{1}{6} & \frac{1}{6} & -\frac{1}{24} & -\frac{1}{24} \Sv_1{}^4 \\
%  0 & 1 & 1 & 1 & -1 & -1 & -1 & \frac{1}{2} & -\frac{1}{2} \Sv_1{}^2 \Sv_2 \\
%  0 & 0 & 1 & 2 & -1 & 0 & -1 & 0 & -\Sv_1 \Sv_{2,1} \\
%  0 & 0 & 0 & 0 & 1 & 0 & 0 & -\frac{1}{2} & -\frac{1}{2} \Sv_2{}^2 \\
%  0 & 0 & 0 & 0 & 0 & 1 & 1 & -1 & \Sv_1 \left(\Sv_1 \Sv_2-\Sv_{1,2}-\Sv_{2,1}\right) \\
%  0 & 0 & 0 & 0 & 0 & 0 & 0 & 0 & 0 \\
%  0 & 0 & 0 & 0 & 0 & 0 & 0 & 0 & 0 \\
%  0 & 0 & 0 & 0 & 0 & 0 & 0 & 0 & 0 \\
% \end{array}
% \right).
% $$
$$
\left(M'|h'\right)=\left(
\begin{array}{cccccccc|c}
 1 & 0 & 0 & 0 & 0 & 0 & 0 & -\frac{1}{4} & -\frac{1}{24} \Sv_{1}{}^4-\frac{1}{4} \Sv_{2} \Sv_{1}{}^2-\frac{1}{3} \Sv_{3} \Sv_{1}-\frac{1}{8}
   \Sv_{2}{}^2 \\
 0 & 1 & 0 & -1 & 0 & 0 & 1 & -\frac{1}{2} & \Sv_{1} \left(\Sv_{2,1}-\Sv_{3}\right)-\frac{1}{2} \Sv_{1}{}^2 \Sv_{2} \\
 0 & 0 & 1 & 2 & 0 & 0 & -1 & -\frac{1}{2} & -\frac{1}{2} \Sv_{2}{}^2-\Sv_{1} \Sv_{2,1} \\
 0 & 0 & 0 & 0 & 1 & 0 & 0 & -\frac{1}{2} & -\frac{1}{2} \Sv_{2}{}^2 \\
 0 & 0 & 0 & 0 & 0 & 1 & 1 & -1 & -\Sv_{1} \Sv_{3} \\
 0 & 0 & 0 & 0 & 0 & 0 & 0 & 0 & 0 \\
 0 & 0 & 0 & 0 & 0 & 0 & 0 & 0 & 0 \\
 0 & 0 & 0 & 0 & 0 & 0 & 0 & 0 & 0 \\ 
\end{array}
\right).
$$
Set $R':=\left(M'|h'\right)\,\ve{v}^{t}$. By construction the elements of $R$ can be written in terms of the elements of $R'$ plus some extra elements from $I_3$. Thus the following holds:
$$I_4=\{f\,h+\sum_{r\in R'}k_{r}\,r|f\in\AA^{\AH}_4, h\in I_3\text{ and }k_r\in\KK(n)[x]\}.$$
The advantage of this representation of $I_4$ with $R'$ (instead of $R$) is that one can read off rewrite rules to eliminate sums:
the variables corresponding to the corner points of the reduced matrix, i.e., the variables 
\begin{equation}\label{Equ:ReducedHarSumVar}
\Sv_{1,1,1,1},\Sv_{1,1,2},\Sv_{1,2,1},\Sv_{2,2},\Sv_{1,3}
\end{equation}
can be reduced with the substitution rules\footnote{Reinterpreting the variables as \AH-sum expressions, the left hand sides equal the right hand sides for all $n\geq0$.}
\begin{equation}\label{Equ:Substitution4}
\begin{split}
% \Sv_{1,1,1,1}&\to \frac{1}{3} \Sv_1 \Sv_{1,2}+\frac{1}{3} \Sv_1 \Sv_{2,1}+\frac{1}{24} \Sv_1{}^4-\frac{1}{12} \Sv_2 \Sv_1{}^2+\frac{1}{8} \Sv_2{}^2+\frac{\Sv_4}{4},\\
% \Sv_{1,1,2}&\to\Sv_1 \Sv_{1,2}-\Sv_{3,1}+\Sv_{2,1,1}-\frac{1}{2} \Sv_2 \Sv_1{}^2+\frac{\Sv_4}{2},\\
% \Sv_{1,2,1}&\to \Sv_1 \Sv_{2,1}+\Sv_{3,1}-2 \Sv_{2,1,1}+\frac{1}{2} \Sv_2{}^2+\frac{\Sv_4}{2},\\
% \Sv_{2,2}&\to \frac{1}{2} \Sv_2{}^2+\frac{\Sv_4}{2},\\
% \Sv_{1,3}&\to \Sv_1 \Sv_{1,2}+\Sv_1 \Sv_{2,1}+\Sv_2 \Sv_{1,1}-\Sv_{3,1}-\frac{3}{2} \Sv_2 \Sv_1{}^2-\frac{1}{2} \Sv_2{}^2+\Sv_4
\Sv_{1,1,1,1}&\to \frac{\Sv_4}{4}+\frac{1}{24} \Sv_{1}{}^4+\frac{1}{4} \Sv_{2} \Sv_{1}{}^2+\frac{1}{3} \Sv_{3} \Sv_{1}+\frac{1}{8}\Sv_{2}{}^2,\\
\Sv_{1,1,2}&\to \Sv_{2,1,1}-\Sv_{3,1}+\frac{\Sv_4}{2} - \Sv_{1} \left(\Sv_{2,1}-\Sv_{3}\right)+\frac{1}{2} \Sv_{1}{}^2 \Sv_{2},\\
\Sv_{1,2,1}&\to -2 \Sv_{2,1,1}+\Sv_{3,1}+\frac{\Sv_4}{2}+\frac{1}{2} \Sv_{2}{}^2+\Sv_{1} \Sv_{2,1},\\
\Sv_{2,2}&\to \frac{\Sv_4}{2} +\frac{1}{2} \Sv_{2}{}^2 ,\\
\Sv_{1,3}&\to -\Sv_{3,1}+\Sv_4+\Sv_{1} \Sv_{3}
\end{split}
\end{equation}
while the $\Sv_{2,1,1},\Sv_{3,1},\Sv_4$
remain as variables\footnote{Note that a different order of sums in $\ve{v}$ would lead to a different set of remaining variables; certain orders would lead to a basis built by Lyndon words, compare~\cite{Hoffman1992,Hoffman1997,Hoffman}.}. In other words, within the ring $\EE^{\AH}_4=\AA^{\AH}_4/I_4$ these elements cannot be eliminated further and we arrive at the polynomial ring
$$\EE^{\AH}_4=\AA^{\AH}_4/I_4=\EE^{\AH}_3[\Sv_{2,1,1}+I_4,\Sv_{3,1}+I_4,\Sv_4+I_4]$$
with the transcendental generators $\Sv_{2,1,1}+I_4,\Sv_{3,1}+I_4,\Sv_4+I_4$.\\
We remark that we can calculate with the representants $\Sv_{2,1,1},\Sv_{3,1},\Sv_4$ instead of the equivalence classes $\Sv_{2,1,1}+I_4,\Sv_{3,1}+I_4,\Sv_4+I_4$, respectively. There is only a subtle detail. If we want to perform $\sigma(f)$ for some $f\in\AA^{\AH}/I$, we apply first $\sigma(f)$ where $\sigma$ is taken from $\dfield{\AA^{\AH}}{\sigma}$. Since the sums~\eqref{Equ:ReducedHarSumVar} might be again introduced by $\sigma$, we have to eliminate them by the found substitution rules stated in~\eqref{Equ:Substitution4}.

Similarly, one can construct the polynomial ring
$$\EE^{\AH}_3=\EE^{\AH}_2[\Sv_{2,1}+I_3,\Sv_{3}+I_3]$$
with the substitution rules
\begin{equation}\label{Equ:Substitution3}
\Sv_{1,2}\to -\Sv_{2,1}
+\Sv_1 \Sv_2+\Sv_3\quad\text{and}\quad
\Sv_{1,1,1}\to \frac{1}{6} \Sv_1^3
+\frac{1}{2} \Sv_1 \Sv_2
+\frac{1}{3} \Sv_3,
\end{equation}
the polynomial ring
$$\EE^{\AH}_2=\EE^{\AH}_1[\Sv_{2}+I_2]$$
with the substitution rule
\begin{equation}\label{Equ:Substitution2}
\Sv_{1,1}\to \frac{1}{2} \Sv_1^2
+\frac{1}{2} \Sv_2,
\end{equation}
and the polynomial ring
$$\EE^{\AH}_1=\KK(n)[x][\Sv_{1}+\{0\}].$$
To sum up, we obtain the polynomial ring
$$\EE^{\AH}_4=\KK(n)[x][\Sv_{1}+\{0\}][\Sv_{2}+I_2][\Sv_{2,1}+I_3,\Sv_{3}+I_3][\Sv_{2,1,1}+I_4,\Sv_{3,1}+I_4,\Sv_4+I_4].$$
Now define the function
$\fct{\rho}{\AA^{\AH}_4}{\bar{\EE}^{\AH}_4}$ where $\rho(f')$ is obtained by applying the substitution rules of~\eqref{Equ:Substitution4}, \eqref{Equ:Substitution3} and~\eqref{Equ:Substitution2}
such that one obtains an element of the polynomial ring
$$\bar{\EE}^{\AH}_4=\KK(n)[x][\Sv_{1}][\Sv_{2}][\Sv_{2,1},\Sv_{3}][\Sv_{2,1,1},\Sv_{3,1},\Sv_4]$$
being a subring of $\AA^{\EE}$. Then
the difference ring $\dfield{\EE^{\AH}_4}{\sigma}$ can be represented by the difference ring $\dfield{\bar{\EE}^{\AH}_4}{\bar{\sigma}}$ with the ring automorphism $\fct{\bar{\sigma}}{\bar{\EE}^{\AH}_4}{\bar{\EE}^{\AH}_4}$ defined by $\bar{\sigma}(f)=\rho(\sigma(f))$. Here we apply first the given automorphism $\fct{\sigma}{\AA^{\AH}}{\AA^{\AH}}$ with $f':=\sigma(f)\in\AA^{\AH}$ and apply afterwards the constructed substitution rules to get the element $\rho(f')\in\bar{\EE}^{\AH}_4$.
\end{example}

In general, the representants in~\eqref{Equ:Representants} for the variables of weight $d$ (plus the other representants with weights $<d$ from the previous construction steps) can be used to express all the elements of $f\in\EE^{\AH}_d$. In particular, all the ring operations can be simply performed in this polynomial ring. Only if one applies $\sigma(f)$, one has to apply the corresponding substitution rules in order to rewrite the expression again in terms of the representants. 

\begin{remark}\label{Remark:IncreaseAlphabet}
So far, we fixed $\AH\in\{A_h,A_a,A_{c(M)}\}$ in~\eqref{Equ:ChooseH} and performed the construction for the chosen \AH-sums to get the difference rings $\dfield{\EE^{\AH}_d}{\sigma}$ and $\dfield{\EE^{\AH}}{\sigma}$. Now we fix $\AH=A_{c(M)}$ and look what happens if we vary the finite set $M\in\NN$.
More precisely, if 
$$
M_1\subset M_2\subset M_3 \subset \dots
$$
is a chain of finite subsets of $\NN$ then 
$$
A_{c(M_1)} \subset A_{c(M_2)} \subset \dots \subset A_{c(M_k)} \subset \dots \subset A_c.
$$
Furthermore, if $B^{(d)}_i$ denotes a set of representants of weight $d$ for $A_{c(M_i)}$ as defined in \eqref{Equ:Representants}, then by using our construction above we can find representants such that
$$
B^{(d)}_1 \subset B^{(d)}_2 \subset B^{(d)}_3 \dots
$$
As a consequence, it follows that $\EE_d^{A_{c(M_r)}}$ is a polynomial ring extension of $\EE_d^{A_{c(M_{r-1})}}$ where the transcendental generators are the elements $\cup_{1\leq j\leq d} V_j$ with $V_j:=B^{(j)}_r\setminus B^{(j)}_{r-1}$; here $V_j$  are the generators that represent the additional \AH-sums of weight $j$. Moreover, $\dfield{\EE_d^{A_{c(M_r)}}}{\sigma}$ is a difference ring extension of $\dfield{\EE_d^{A_{c(M_{r-1})}}}{\sigma}$.
\end{remark}

We remark that any element from the equivalence class $a^{(d)}_i+I_d$ delivers the same evaluation expression.

\begin{lemma}\label{Lemma:ClassIstheSame}
Let $f,g\in t^{(d)}_i=a_i^{(d)}+I_d$ and take $\delta\in\NN_0$ such that for all $\delta\leq\lambda\in\NN_0$, $\expr(f)(\delta)$ and $\expr(g)(\delta)$ do not have poles. Then for all $\lambda\geq\delta$,
$\expr(f)(\lambda)=\expr(g)(\lambda)$.
\end{lemma}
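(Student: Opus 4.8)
The plan is to reduce everything to the single element $f-g$ and to exploit that it lies in the defining ideal of the quasi-shuffle relations. Since $f$ and $g$ both belong to the class $t^{(d)}_i=a^{(d)}_i+I_d$, their difference $h:=f-g$ lies in $I_d=I\cap\AA^{\AH}_d$, and in particular $h\in I$, so by the very definition of $I$ we have $L(h)=0$. Everything else is bookkeeping around the no-pole hypotheses under which the earlier statements were made.

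Next I would take care of that bookkeeping. Writing $h$ as a polynomial in the variables $s^{(e)}_j$ with coefficients in $\KK(n)[x]$, these coefficients are differences of the corresponding coefficients of $f$ and of $g$, so every denominator occurring in $\expr(h)$ already divides a product of denominators occurring in $\expr(f)$ and $\expr(g)$. Hence the same $\delta$ that works for $f$ and $g$ works simultaneously for $-g$ and for $h$, and $\expr(h)(\lambda)=\expr(f-g)(\lambda)$ has no pole for any $\lambda\in\NN_0$ with $\lambda\geq\delta$. Now the additivity (and multiplicativity by $-1\in\KK$) of $\expr$, valid precisely because $\expr(f)(n)$ and $\expr(-g)(n)$ have no poles for $n\geq\delta$, gives
$$\expr(f)(\lambda)-\expr(g)(\lambda)=\expr(f-g)(\lambda)=\expr(h)(\lambda)\qquad\text{for all }\lambda\geq\delta,$$
while Lemma~\ref{Lemma:LEquality} applied to $h$ yields $\expr(h)(\lambda)=\expr(L(h))(\lambda)=\expr(0)(\lambda)=0$ for all $\lambda\geq\delta$. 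Combining the two gives $\expr(f)(\lambda)=\expr(g)(\lambda)$ for all $\lambda\geq\delta$, as claimed.

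I do not expect a genuine obstacle here: the statement is essentially a consequence of the machinery already assembled, namely that $L$ is $\KK(n)[x]$-linear and weight preserving, that $\expr$ respects ring operations and shifts under a no-pole assumption, and that $I_d\subseteq I$. The only point requiring a little care is the one flagged above, namely that a common pole bound $\delta$ can legitimately be chosen for $f$, $g$ and $f-g$ before invoking the additivity of $\expr$ and Lemma~\ref{Lemma:LEquality}, both of which are only stated under such a hypothesis.
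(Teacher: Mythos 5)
Your proof is correct and follows essentially the same route as the paper's: reduce to an element of the ideal $I$, note that $L$ annihilates it, and invoke Lemma~\ref{Lemma:LEquality} together with the additivity of $\expr$ under the no-pole hypothesis. The only cosmetic difference is that you apply Lemma~\ref{Lemma:LEquality} once to $h=f-g\in I_d$, whereas the paper writes $f=a^{(d)}_i+h_1$, $g=a^{(d)}_i+h_2$ and applies it to $h_1$ and $h_2$ separately; your pole bookkeeping for $f-g$ is a valid substitute for the paper's corresponding observation.
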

\begin{proof}
Let $h_1,h_2\in I_d$ such that $f=a^{(d)}_i+h_1$ and $g=a^{(d)}_i+h_2$.
Since $\expr(a^{(d)}_i)$ and $\expr(h_i)$ have no poles, $\expr(f)(\lambda)$ and $\expr(g)(\lambda)$ have no poles for $\lambda\geq\delta$. Therefore 
$$\expr(f)(\lambda)-\expr(g)(\lambda)=\expr(f-g)(\lambda)=\expr(h_1-h_2)(\lambda)=\expr(h_1)(\lambda)-\expr(h_2)(\lambda)$$ 
for all $\lambda\geq\delta$. 
By Lemma~\ref{Lemma:LEquality} we conclude that $\expr(h_i)(\lambda)=\expr(L(h_i))(\lambda)=0$ and so $\expr(f)(\lambda)=\expr(g)(\lambda)$.
\end{proof}

In addition, we obtain the following connection with the linearization operator.
\begin{lemma}
For any $f\in a^{(d)}_i+I$ we have that $L(f)=L(a^{(d)}_i)$.
\end{lemma}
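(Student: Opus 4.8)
The plan is to reduce the statement to the generator description of the ideal $I$ from Lemma~\ref{Lemma:IGenerator} together with the defining property of $L$. Recall $I=\{f\in\AA^{\AH}\mid L(f)=0\}$, so for any $h\in I$ we have $L(h)=0$. Thus if $f\in a^{(d)}_i+I$, write $f=a^{(d)}_i+h$ with $h\in I$. The key observation is that $L$ is $\KK(n)[x]$-linear on $\AA^{\AH}$ (it is defined via the linearization of power products and then extended by linearity, and it is even $\KK(n)[x]$-linear since $L(r\,p)=r\,L(p)$ for $r\in\KK(n)[x]$ and $p\in\Pi$). Hence $L(f)=L(a^{(d)}_i)+L(h)=L(a^{(d)}_i)+0=L(a^{(d)}_i)$, which is exactly the claim.

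First I would make explicit that $L$ is additive: for $f,g\in\AA^{\AH}$ we have $L(f+g)=L(f)+L(g)$, which is immediate from the extension of $L$ to $\AA^{\AH}$ by linearity. Second, I would note that although the statement only asserts $f\in a^{(d)}_i+I$ (the full ideal $I$, not $I_d$), this does not matter: the only fact needed is $h:=f-a^{(d)}_i\in I$, and by the very definition $I=\{f\in\AA^{\AH}\mid L(f)=0\}$ this gives $L(h)=0$ directly, with no need to invoke Lemma~\ref{Lemma:IGenerator} at all. So the proof is genuinely one line: $L(f)=L(a^{(d)}_i+h)=L(a^{(d)}_i)+L(h)=L(a^{(d)}_i)$.

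There is essentially no obstacle here; the statement is a formal consequence of the definitions. The only mild subtlety is bookkeeping about which ideal ($I$ versus $I_d$) and which ambient ring ($\AA^{\AH}$ versus $\AA^{\AH}_d$) one is working in, but since $a^{(d)}_i\in\AA^{\AH}_d\subseteq\AA^{\AH}$ and $I_d=I\cap\AA^{\AH}_d\subseteq I$, everything lives in $\AA^{\AH}$ and the additive map $L\colon\AA^{\AH}\to\VV$ applies uniformly. Thus I would simply write: let $h=f-a^{(d)}_i\in I$; then $L(h)=0$ by definition of $I$, so by additivity of $L$ we get $L(f)=L(a^{(d)}_i)+L(h)=L(a^{(d)}_i)$, completing the proof.
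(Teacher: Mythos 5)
Your proof is correct and coincides with the paper's own argument: write $f=a^{(d)}_i+h$ with $h\in I$, note $L(h)=0$ by the definition of $I$, and conclude by additivity of $L$. The extra remarks about Lemma~\ref{Lemma:IGenerator} and the distinction between $I$ and $I_d$ are harmless but not needed.
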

\begin{proof}
Write $f=a^{(d)}_i+h$ with $h\in I$. Then $L(f)=L(a^{(d)}_i+h)=L(a^{(d)}_i)+L(h)=L(a^{(d)}_i)$.
\end{proof}

This implies that we can define the linearization map $\fct{\bar{L}}{\EE^{\AH}}{\VV}$ with $$\bar{L}(g+I)=L(g)$$ 
for $g\in\AA^{\AH}$. In addition, we get the following result.

\begin{lemma}\label{Lemma:LBarSigmaCommute}
For any $f\in\EE^{\AH}$, $\sigma(\bar{L}(f))=\bar{L}(\sigma(f))$.
\end{lemma}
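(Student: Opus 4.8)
The plan is to reduce the claim $\sigma(\bar L(f)) = \bar L(\sigma(f))$ for $f \in \EE^{\AH}$ to the already-established commutation property on $\AA^{\AH}$, namely Lemma~\ref{Lemma:LSigmaCommute}. The key point is that $\bar L$ was defined via a representative: $\bar L(g+I) = L(g)$ for $g \in \AA^{\AH}$, and this is well-defined by the preceding lemma. So first I would pick $f \in \EE^{\AH}$ and choose a representative $g \in \AA^{\AH}$ with $f = g + I$. Then by definition $\bar L(f) = L(g)$.

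Next I would unwind $\sigma(f)$ on the quotient side. Recall from Section~\ref{Sec:ReduceDR} that the automorphism on $\EE^{\AH} = \AA^{\AH}/I$ is given by $\sigma(g+I) = \sigma(g)+I$, where on the right $\sigma$ is the automorphism of $\dfield{\AA^{\AH}}{\sigma}$. Hence $\sigma(f) = \sigma(g) + I$, and therefore $\bar L(\sigma(f)) = L(\sigma(g))$ by the definition of $\bar L$ applied to the representative $\sigma(g) \in \AA^{\AH}$. On the other side, $\sigma(\bar L(f)) = \sigma(L(g))$, where now $\sigma$ can be viewed either as the automorphism of $\AA^{\AH}$ or its restriction to $\VV$ — these agree since $\VV \subseteq \AA^{\AH}$ is $\sigma$-stable (this was observed just before Lemma~\ref{Lemma:LSigmaCommute}, that $\sigma(f)\in\VV$ whenever $f\in\VV$), and the quotient map restricted to $\VV$ is the identity in the relevant sense because $\VV \cap I = \{0\}$ is not even needed here: $L(g) \in \VV$ and we just apply $\sigma$ to it inside $\AA^{\AH}$. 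Then Lemma~\ref{Lemma:LSigmaCommute} gives $\sigma(L(g)) = L(\sigma(g))$. Chaining the equalities: $\sigma(\bar L(f)) = \sigma(L(g)) = L(\sigma(g)) = \bar L(\sigma(f))$.

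The only thing requiring a moment's care — and I expect this to be the main (mild) obstacle — is checking that the statement is representative-independent, i.e. that all the intermediate quantities $L(g)$, $\sigma(g)+I$, $L(\sigma(g))$ do not depend on the choice of $g$ in the class $f = g+I$. For $L(g)$ this is exactly the content of the lemma immediately preceding the definition of $\bar L$ (if $g' = g + h$ with $h \in I$ then $L(g') = L(g) + L(h) = L(g)$ since $L(h)=0$ for $h\in I$). For $\sigma(g)+I$ it follows from $I$ being a reflexive difference ideal: if $g' = g+h$ with $h\in I$ then $\sigma(g') = \sigma(g)+\sigma(h)$ with $\sigma(h)\in I$, so $\sigma(g')+I = \sigma(g)+I$. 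Combining these, $L(\sigma(g'))=L(\sigma(g))$ as well. Once well-definedness is in place, the argument is a one-line diagram chase, so I would present it as a short proof: fix a representative, push it through both orders of $\sigma$ and $\bar L$, and invoke Lemma~\ref{Lemma:LSigmaCommute}.
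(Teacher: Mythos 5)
Your proposal is correct and follows essentially the same route as the paper: both fix a representative $g$ of the class $f$, use that $I$ is a reflexive difference ideal and that $L$ vanishes on $I$ to justify well-definedness, and then reduce the claim to Lemma~\ref{Lemma:LSigmaCommute} applied to $g$. The paper compresses this into a single chain of equalities, but the content is identical.
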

\begin{proof}
Write $f=g+h$ with $g\in\AA^{\AH}$ and $h\in I$. By definition, $L(h)=0$. Since $I$ is a difference ideal, $\sigma(h)\in I$. 
Together with Lemma~\ref{Lemma:LSigmaCommute} we get
$$\sigma(\bar{L}(f))=\sigma(\bar{L}(g+h))=\sigma(L(g))=L(\sigma(g))=\bar{L}(\sigma(g)+\sigma(h))=\bar{L}(\sigma(g+h))=\bar{L}(\sigma(f)).$$ 
\end{proof}

\section{Structural results in \rpisiSE-ring}\label{Sec:pisiRings}

In the following we will bring in addition difference ring theory. 
Here we will refine the naive construction given by Lemma~\ref{Lemma:SumExtension} that we used to construct the difference ring $\dfield{\AA}{\sigma}$ in Section~\ref{Sec:BasicDR}. Namely, we will improve this type of extensions by the so-called \sigmaSE-extensions.

\begin{definition}
Take the difference ring extension $\dfield{\AA[t]}{\sigma}$ of $\dfield{\AA}{\sigma}$ from Lemma~\ref{Lemma:SumExtension}. Then this extension is called \sigmaSE-extension if $\const{\AA[t]}{\sigma}=\const{\AA}{\sigma}$.\\
In particular, $\dfield{\AA[t_1]\dots[t_e]}{\sigma}$ is called a (nested) \sigmaSE-extension of $\dfield{\AA}{\sigma}$ if $\dfield{\AA[t_1]\dots[t_i]}{\sigma}$ is a \sigmaSE-extension of $\dfield{\AA[t_1]\dots[t_{i-1}]}{\sigma}$ for all $1\leq i\leq e$. 
\end{definition}

In order to verify that $\const{\AA[t]}{\sigma}=\const{\AA}{\sigma}$ holds, there is the following equivalent characterization; see~\cite[Thm.~2.12]{DR1} or~\cite[Thm.~3]{DR2}. Note that this result is a generalization/refinement of Karr's difference field theory~\cite{Karr:81,Karr:85}.

\begin{theorem}\label{Thm:NotSigmaExt}
Take the difference ring extension $\dfield{\AA[t]}{\sigma}$ of $\dfield{\AA}{\sigma}$ from Lemma~\ref{Lemma:SumExtension} with $\sigma(t)=t+\beta$ ($\beta\in\AA$) where $\const{\AA}{\sigma}$ is a field.  Then this is a \sigmaSE-extension (i.e., $\const{\AA[t]}{\sigma}=\const{\AA}{\sigma}$) iff there is no $g\in\AA$ with 
\begin{equation}\label{Equ:Tele}
\sigma(g)=g+\beta.
\end{equation}
\end{theorem}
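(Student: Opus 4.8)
The plan is to reduce the claim to a purely field-theoretic statement about the rational function field, because the telescoping equation $\sigma(g)=g+\beta$ can only be solved by controlling the denominators of a putative solution $g$. First I would recall the standard strategy for such statements in the difference-ring literature: given the difference ring extension $\dfield{\AA[t]}{\sigma}$ with $\sigma(t)=t+\beta$ and $\const{\AA}{\sigma}$ a field, one shows that $c\in\const{\AA[t]}{\sigma}$ forces $c\in\AA$ by a degree argument. Indeed, write $c=\sum_{j=0}^{m} c_j t^j$ with $c_j\in\AA$ and $c_m\neq0$. Applying $\sigma$ and using $\sigma(t)=t+\beta$, the leading term is $\sigma(c_m)t^m$ plus lower order, and comparing top coefficients with $\sigma(c)=c$ yields $\sigma(c_m)=c_m$, hence $c_m\in\const{\AA}{\sigma}\subseteq\AA$. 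If $m\geq1$, comparing the coefficient of $t^{m-1}$ gives $\sigma(c_{m-1})+m\,\sigma(c_m)\beta=c_{m-1}$, i.e.\ $\sigma(c_{m-1})-c_{m-1}=-m\,c_m\,\beta$. Setting $g:=-c_{m-1}/(m\,c_m)\in\AA$ (legitimate since $c_m$ is a nonzero constant in the field $\const{\AA}{\sigma}$, so it is a unit), this rearranges to $\sigma(g)=g+\beta$, contradicting the hypothesis. Therefore $m=0$ and $c\in\AA$, so $\const{\AA[t]}{\sigma}=\const{\AA}{\sigma}$.

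Conversely, for the "only if" direction I would argue by contraposition: if there is a $g\in\AA$ with $\sigma(g)=g+\beta$, then the element $t-g\in\AA[t]$ satisfies $\sigma(t-g)=(t+\beta)-(g+\beta)=t-g$, so $t-g$ is a constant. Since $t$ is transcendental over $\AA$, $t-g\notin\AA$, hence $t-g\notin\const{\AA}{\sigma}$; thus $\const{\AA[t]}{\sigma}\supsetneq\const{\AA}{\sigma}$ and the extension is not a \sigmaSE-extension. This direction is essentially immediate.

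The only subtle point — and the step I expect to need the most care — is the very first reduction: one must know that an element $c\in\AA[t]$ with $\sigma(c)=c$ is again a \emph{polynomial} of the same degree after applying $\sigma$, which is exactly why $\AA[t]$ being a genuine polynomial ring (i.e.\ $t$ transcendental over $\AA$) is essential. Since $\dfield{\AA[t]}{\sigma}$ is constructed as a difference ring extension with $\sigma(t)=t+\beta$ and $\sigma$ already an automorphism of $\AA$, the map $\sigma$ on $\AA[t]$ preserves degree in $t$ (the leading coefficient is multiplied by the unit $1$ coming from $\sigma(t)=t+\beta$), so no degree drop can occur and the comparison of coefficients is valid. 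Because this theorem is quoted from \cite[Thm.~2.12]{DR1} and \cite[Thm.~3]{DR2}, I would in practice simply cite it; but the self-contained proof sketch above — leading-coefficient argument to descend one degree, invoking that nonzero constants are units in the field $\const{\AA}{\sigma}$ to produce the telescoper, and the trivial converse via $t-g$ — is the route I would take if a proof were required here.
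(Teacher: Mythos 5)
Your argument is correct, but note that the paper itself gives no proof of this theorem: it is quoted verbatim from \cite[Thm.~2.12]{DR1} and \cite[Thm.~3]{DR2}, so there is nothing internal to compare against. Your self-contained argument is the standard one and it works: the converse via the constant $t-g$ is immediate, and the forward direction by comparing the coefficients of $t^m$ and $t^{m-1}$ in $\sigma(c)=c$ correctly produces the telescoper $g=-c_{m-1}/(m\,c_m)$, using that $c_m$ is a nonzero element of the field $\const{\AA}{\sigma}$ and hence a unit, and that $\sigma$ fixes the denominator $m\,c_m$ so that $\sigma(g)-g=\beta$. Two small remarks. First, you also divide by the integer $m$, so you implicitly need $\const{\AA}{\sigma}$ to have characteristic zero; this is harmless here because the paper assumes $\QQ\subseteq\KK$ throughout, but it is an extra hypothesis beyond ``$\const{\AA}{\sigma}$ is a field'' as literally stated. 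Second, your opening sentence about reducing to a statement over the rational function field and ``controlling denominators'' is a red herring: the proof you actually give is a pure leading-coefficient argument in the polynomial ring $\AA[t]$ and never touches $\KK(n)$ or denominator bounds (those enter only later, in Lemma~\ref{Lemma:NoSolutionInK(n)} and Proposition~\ref{Prop:QRelation}, which address a different question). Citing the references, as the paper does, would of course also suffice.
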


Exactly this result is the driving property within the summation package~\texttt{Sigma}~\cite{Sigma1,Sigma2}. Indefinite nested sums are represented not in the naive construction given in Lemma~\ref{Lemma:SumExtension}, but sums are only adjoined if the constants remain untouched, i.e., if the telescoping problem cannot be solved. More precisely, in \texttt{Sigma} indefinite summation algorithms are implemented~\cite{FastAlgorithm1,FastAlgorithm2,FastAlgorithm3} that decide efficiently, if there exists a $g\in\AA$ such that~\eqref{Equ:Tele} holds. If such an element does not exist, we can perform the construction as stated in Lemma~\ref{Lemma:SumExtension} and we have constructed a \sigmaSE-extension without extending the constant field. Otherwise, if we find such a $g$, we can use $g$ itself to represent the ``sum'' with the shift-behaviour~\eqref{Equ:Tele}. In this way, the occurring sums within an expression are rephrased step by step by a tower of \sigmaSE extensions\footnote{We remark that one can hunt for recurrences of definite sums and can solve recurrence in terms d'Alembertian solution in such difference rings; for details see, e.g.,~\cite{Sigma2} and references therein.} of the form~\eqref{Equ:NaiveChainDR} with
$$\KK=\const{\AA_0}{\sigma}=\const{\AA_1}{\sigma}=\const{\AA_2}{\sigma}=\dots=\const{\AA}{\sigma}.$$
However, this construction gets more and more expensive, the more variables are adjoined. In contrast to that, if we restrict to \AH-sums, the construction of the tower of ring extensions~\eqref{Equ:DRTowerReduced}, which relies purely on a linear algebra engine, is by far more efficient; see Example~\ref{Exp:LinearAlgebra}.

\begin{example}\label{Exp:HarmonicSum7}
Given such a reduction in terms of basis sums (using linear algebra), we could verify with the summation package \texttt{Sigma} that the difference ring extension $\dfield{\EE^{\AH}_d}{\sigma}$ of $\dfield{\KK(n)[x]}{\sigma}$ with $\AH=H_a$ and $d=1,2,3,4,5,6,7$ is a \sigmaSE-extension. For $d=7$ we represented 507 basis sums, i.e., we constructed a tower of 507 \sigmaSE-extensions within 5 days. The case $d=8$ is currently out of scope.
Similarly, we could show for $\AH=A_{c(M)}$ with various finite subsets $M\subset\NN$ and $d\in\NN$ that $\dfield{\EE^{\AH}_d}{\sigma}$ forms a \sigmaSE-extension of $\dfield{\KK(n)[x]}{\sigma}$. 
\end{example}

\noindent Note that Theorem~\ref{Thm:NotSigmaExt} can be supplemented by the following structural result; for the difference field version see, e.g.,~\cite[Thm.~2.4]{AlgebraicDF}.

\begin{proposition}\label{Prop:TeleStructure}
Let $\dfield{\AA[t_1]\dots[t_e]}{\sigma}$ be a \sigmaSE-extension of $\dfield{\AA}{\sigma}$ with $\sigma(t_i)-t_i\in\AA$ where $\KK:=\const{\AA}{\sigma}$ is a field. Let $\beta\in\AA$ and let $g\in\AA[t_1]\dots[t_e]$ be a solution of~\eqref{Equ:Tele}. Then $g=\sum_{i=1}^e c_i\,t_i+w$ with $c_i\in\KK$ and $w\in\AA$.
\end{proposition}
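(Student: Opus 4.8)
The plan is to prove Proposition~\ref{Prop:TeleStructure} by induction on the number $e$ of adjoined generators, using Theorem~\ref{Thm:NotSigmaExt} as the engine at each step. For $e=0$ there is nothing to prove, so assume $e\geq 1$ and that the statement holds for $\sigmaSE$-extensions with fewer generators. Write $\AA':=\AA[t_1]\dots[t_{e-1}]$, so that $\dfield{\AA'[t_e]}{\sigma}$ is a $\sigmaSE$-extension of $\dfield{\AA'}{\sigma}$ with $\sigma(t_e)=t_e+\alpha$ for some $\alpha\in\AA'$, and let $g\in\AA'[t_e]$ solve $\sigma(g)=g+\beta$ with $\beta\in\AA\subseteq\AA'$.

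The main step is to show that $g$ is at most linear in $t_e$. Write $g=\sum_{j=0}^m g_j\,t_e^{\,j}$ with $g_j\in\AA'$ and $g_m\neq 0$, and assume for contradiction that $m\geq 2$. Since $t_e$ is transcendental over $\AA'$ and $\sigma$ is a ring automorphism, applying $\sigma$ gives $\sigma(g)=\sum_{j=0}^m \sigma(g_j)(t_e+\alpha)^{\,j}$; comparing the coefficient of $t_e^{\,m}$ in $\sigma(g)=g+\beta$ (note $\beta\in\AA'$ contributes only to the constant term, and $m\geq 1$) yields $\sigma(g_m)=g_m$, hence $g_m\in\const{\AA'}{\sigma}=\KK$ by the $\sigmaSE$ property of the tower. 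Comparing the coefficient of $t_e^{\,m-1}$ then gives $\sigma(g_{m-1})+m\,\sigma(g_m)\,\alpha = g_{m-1}$, i.e. $\sigma(g_{m-1})-g_{m-1} = -m\,g_m\,\alpha$; since $g_m\in\KK^*$ and $m\in\NN$, dividing by $-m g_m$ produces $h:=g_{m-1}/(-m g_m)\in\AA'$ with $\sigma(h)-h=\alpha$, i.e. a solution of the telescoping equation $\sigma(h)=h+\alpha$ in $\AA'$. By Theorem~\ref{Thm:NotSigmaExt} this contradicts the fact that $\dfield{\AA'[t_e]}{\sigma}$ is a $\sigmaSE$-extension. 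Therefore $m\leq 1$, so $g = g_1 t_e + g_0$ with $g_1,g_0\in\AA'$.

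Now the coefficient comparison at $t_e^{\,1}$ gives $\sigma(g_1)=g_1$, so $g_1=:c_e\in\KK$ by the $\sigmaSE$ property; and the constant term gives $\sigma(g_0)+c_e\,\alpha = g_0+\beta$, i.e. $\sigma(g_0)=g_0+(\beta-c_e\alpha)$. Since $\beta-c_e\alpha\in\AA'$ and $g_0\in\AA'=\AA[t_1]\dots[t_{e-1}]$, the induction hypothesis applied to the $\sigmaSE$-extension $\dfield{\AA[t_1]\dots[t_{e-1}]}{\sigma}$ of $\dfield{\AA}{\sigma}$ (with the right-hand side $\beta-c_e\alpha$ in place of $\beta$) yields $g_0=\sum_{i=1}^{e-1}c_i t_i + w$ with $c_i\in\KK$ and $w\in\AA$. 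Combining, $g=\sum_{i=1}^{e}c_i t_i + w$ with $c_i\in\KK$ and $w\in\AA$, completing the induction.

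The only real obstacle is the degree-reduction argument in the middle paragraph: one must be careful that the binomial expansion of $(t_e+\alpha)^j$ is legitimate because $\sigma(t_e)=t_e+\alpha$ with $\alpha\in\AA'$ lies in the previous ring, and that the highest two coefficients genuinely force first a constant leading coefficient (via the $\sigmaSE$ property) and then a telescoping solution (contradicting Theorem~\ref{Thm:NotSigmaExt}); the remaining bookkeeping is routine. One should also note at the outset that $\const{\AA[t_1]\dots[t_i]}{\sigma}=\KK$ for every $i$, which is exactly what the definition of a nested $\sigmaSE$-extension guarantees and is used repeatedly to conclude ``$\sigma$-invariant $\Rightarrow$ in $\KK$''.
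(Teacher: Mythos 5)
Your proof is correct and follows the same induction on $e$ as the paper's: peel off the top variable $t_e$, show that $g$ is (at most) linear in $t_e$, conclude that the leading coefficient is a constant, and feed the corrected inhomogeneity into the induction hypothesis. The one genuine difference is how the linearity in $t_e$ is obtained: the paper simply cites \cite[Lemma~6]{DR2} (resp.\ \cite[Lemma~7.2]{DR1}) to assert $g=c\,t_e+\tilde g$ with $c,\tilde g\in\AA[t_1]\dots[t_{e-1}]$, whereas you derive it from scratch by comparing the two top coefficients of $\sigma(g)=g+\beta$: a hypothetical degree $m\geq2$ forces $g_m\in\KK^*$ and then produces $h=-g_{m-1}/(m g_m)$ with $\sigma(h)=h+\alpha$ in $\AA[t_1]\dots[t_{e-1}]$, which Theorem~\ref{Thm:NotSigmaExt} rules out for a \sigmaSE-extension. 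This makes the argument self-contained at the harmless cost of dividing by $m g_m$, which is legitimate since $\QQ\subseteq\KK$. One small imprecision to repair: you introduce $\alpha=\sigma(t_e)-t_e$ merely as an element of $\AA[t_1]\dots[t_{e-1}]$ and accordingly conclude only that $\beta-c_e\alpha$ lies in that ring; but the induction hypothesis requires the inhomogeneity to lie in the base ring $\AA$. Here you must invoke the proposition's assumption $\sigma(t_i)-t_i\in\AA$, which gives $\alpha\in\AA$ and hence $\beta-c_e\alpha\in\AA$ --- exactly the step the paper makes explicit when it writes $\beta-c\,\beta_e\in\AA$.
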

\begin{proof}
We will show the result by induction on $e$. Suppose that the proposition holds for $e-1$ such extensions. Now consider the case of $e$ extensions as stated in the proposition. Set $\GG:=\AA[t_1]\dots[t_{e-1}]$ and define $\beta_e:=\sigma(t_e)-t_e\in\AA$. Let $\beta\in\AA$ and $g\in\GG[t_e]$ such that~\eqref{Equ:Tele} holds. 
By~\cite[Lemma~6]{DR2} (resp.~\cite[Lemma~7.2]{DR1}) it follows that 
$g=c\,t_e+\tilde{g}$ for some $c,\tilde{g}\in\GG$. Comparing coefficients w.r.t.\ $t_e$ in $\sigma(c\,t_e+\tilde{g})-(c\,t_e+\tilde{g})=\beta$ implies that $\sigma(c)-c=0$, i.e., $c\in\const{\GG[t_e]}{\sigma}=\KK$. Therefore we get
$\sigma(\tilde{g})-\tilde{g}=\beta-(\sigma(c\,t_e)-c\,t_e)=\beta-c\,\beta_e\in\AA$. By the induction assumption we conclude that $\tilde{g}=\sum_{i=1}^{e-1}c_i\,t_i+w$ with $c_i\in\KK$ and $w\in\AA$. Hence $g=c\,t_e+\sum_{i=1}^{e-1}c_i\,t_i+w$ and the proposition is proven.
\end{proof}

\noindent In the following we will work out further properties in such a tower of \sigmaSE-extensions. This will finally enable us to show us in one stroke that the tower~\eqref{Equ:DRTowerReduced} is built by an infinite tower of \sigmaSE-extensions, i.e., that
$$\const{\EE^{\AH}}{\sigma}=\const{\AA^{\AH}/I}{\sigma}=\KK$$
for $\AH\in\{A_h,A_a,A_{c(M)}\}$ with a finite set $M\in\NN$.
In other words, for the special class of \AH-sums, we can dispense the user from all the difference ring algorithms of \texttt{Sigma}.

In the following let $\dfield{\KK(n)[x]}{\sigma}$ be our difference ring with $\sigma(n)=n+1$, $\sigma(x)=-x$ and $x^2=1$ where
$$\const{\KK(n)[x]}{\sigma}=\KK.$$
Let $f,g\in\KK[n]\setminus\{0\}$. Note that $\sigma^k(g)\in\KK[n]$. Hence we can define the dispersion~\cite{Abramov:71}
$$\disp(f,g)=\max\{k\geq0|\gcd(f,\sigma^k(g))\neq1\}.$$
Let $q_1,\dots,q_r\in\KK[n]\setminus\{0\}$ be polynomials. Then we define the set 
$$\KK[n]_{\{q_1,\dots,q_r\}}:=\{\frac{a}{q_1^{m_1}\dots q_r^{m_r}}|\, (m_1,\dots,m_r)\in\NN_0^r\setminus\{\vect{0}\};a\in\KK[n]; \gcd(q_1\cdots q_r,a)=1\}.$$
%forms a polynomial ring in $\frac{1}{p}$. In addition, we define
%$$\KK[\frac{1}{p}]_1:=\{\sum_{i=0}^d \frac{f_i}{p^i}|f_i\in\KK\text{ and }d\geq1\}.$$

With these notions, we restrict the class of \sigmaSE-extensions as follows.

\begin{definition}
Let $\dfield{\KK(n)[x][s_1]\dots[s_e]}{\sigma}$ be a \sigmaSE-extension  of $\dfield{\KK(n)[x]}{\sigma}$ and let $P=\{p_1,\dots,p_r\}$ with $p_i\in\KK[n]\setminus\{0\}$. 
Then the extension is called $P$-normalized, if for any $1\leq i\leq e$ we have that
$\sigma(s_i)-s_i\in\KK[n]_P[x][s_1\dots,s_{i-1}].$
\end{definition}

\begin{example} 
As worked out in Example~\ref{Exp:HarmonicSum7} (compare also  Theorem~\ref{Thm:MainResult} below),
the difference ring $\dfield{\EE^{\AH}_4}{\sigma}$ (resp.\  $\dfield{\bar{\EE}^{\AH}_4}{\bar{\sigma}}$) of Example~\ref{Exp:LinearAlgebra}
is a \sigmaSE-extension of $\dfield{\KK(n)[x]}{\sigma}$. In particular, it is a $\{n+1\}$-normalized \sigmaSE-extension.
\end{example}

In the remaining subsection we will show in Proposition~\ref{Prop:QRelation}
that the telescoping solution $g$ of~\eqref{Equ:Tele} will not depend on $n$ provided that one restricts to certain normalized \sigmaSE-extensions 
and takes $f$ with a particular shape.

\begin{lemma}\label{Lemma:NoSolutionInK(n)}
Consider the difference field $\dfield{\KK(n)}{\sigma}$ with $n$ being transcendental over $\KK$ and with $\sigma(n)=n+1$. Let $q\in\KK[n]\setminus\KK$ with $\disp(q,q)=0$, let $u\in\KK^*$ and let $v\in\KK[n]$ with $\gcd(v,q)=1$. Then there is no $g\in\KK(n)$ such that $\sigma(g)+u\,g=\frac{v}{q}$.
\end{lemma}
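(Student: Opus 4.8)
The plan is to show that a hypothetical solution can have neither a nontrivial denominator nor be a polynomial, hence cannot exist. So I would assume for contradiction that $g\in\KK(n)$ satisfies $\sigma(g)+u\,g=\tfrac{v}{q}$. First note $v\neq0$, since $\gcd(0,q)=q\notin\KK^*$ would contradict $\gcd(v,q)=1$. Write $g=\tfrac{a}{b}$ in lowest terms with $b\in\KK[n]$ monic, and let $S$ be the set of monic irreducible factors of $b$ (the poles of $g$). Since $\sigma$ is a $\KK$-automorphism of $\KK[n]$ preserving monicity, $\sigma(g)=\tfrac{\sigma(a)}{\sigma(b)}$ is again in lowest terms, so the poles of $\sigma(g)$ are exactly $\{\sigma(p)\mid p\in S\}$, with the same orders.

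The core of the argument is to prove $S=\emptyset$. Supposing not, I would pick $p_0\in S$, consider the finite nonempty set $\{k\in\ZZ\mid\sigma^k(p_0)\in S\}$, and let $k_-\le k_+$ be its minimum and maximum; put $p:=\sigma^{k_-}(p_0)$ and $\ell:=k_+-k_-\ge0$, so that $\sigma^{-1}(p)\notin S$, $\sigma^{\ell}(p)\in S$, and $\sigma^{\ell+1}(p)\notin S$. Then I would read off the poles of the two sides of $\sigma(g)+u\,g=\tfrac{v}{q}$ at the primes $p$ and $\sigma^{\ell+1}(p)$. At $p$: since $\sigma^{-1}(p)\notin S$, $\sigma(g)$ has no pole at $p$, while $u\,g$ has a pole of order $\ord_p(b)\ge1$; hence so does $\tfrac{v}{q}$, and with $\gcd(v,q)=1$ this forces $p\mid q$. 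At $\sigma^{\ell+1}(p)$: since $\sigma^{\ell+1}(p)\notin S$, $u\,g$ has no pole there, while $\sigma(g)$ does (of order $\ord_{\sigma^{\ell}(p)}(b)\ge1$, because $\sigma^{\ell}(p)\in S$); hence $\tfrac{v}{q}$ has a pole at $\sigma^{\ell+1}(p)$, so $\sigma^{\ell+1}(p)\mid q$. But then $\sigma^{\ell+1}(p)$ divides both $q$ and $\sigma^{\ell+1}(q)$, so $\sigma^{\ell+1}(p)\mid\gcd(q,\sigma^{\ell+1}(q))$; since $\ell+1\ge1$ and $\disp(q,q)=0$ the latter gcd equals $1$, contradicting that $\sigma^{\ell+1}(p)$ is a nonconstant polynomial. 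Hence $S=\emptyset$.

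It then remains to rule out $g\in\KK[n]$: in that case $\sigma(g)+u\,g\in\KK[n]$ equals $\tfrac{v}{q}$, so $q\mid v$ in $\KK[n]$, which together with $\gcd(v,q)=1$ gives $q\in\KK^*$, contradicting $q\in\KK[n]\setminus\KK$. This finishes the proof.

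The only real obstacle I anticipate is the careful bookkeeping in the middle paragraph: choosing the extremal poles within a single $\sigma$-orbit so that at $p$ only $u\,g$ contributes a pole, at $\sigma^{\ell+1}(p)$ only $\sigma(g)$ does, no cancellation can occur, and the two resulting divisibilities $p\mid q$ and $\sigma^{\ell+1}(p)\mid q$ clash with $\disp(q,q)=0$. Everything else is routine; this is essentially the denominator-bound step in Abramov's treatment of first-order rational difference equations, specialized to the present setting.
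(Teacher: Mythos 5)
Your proof is correct. The overall shape matches the paper's: first show that a solution $g$ would have to lie in $\KK[n]$, then observe that $\sigma(g)+u\,g\in\KK[n]$ cannot equal $\tfrac{v}{q}$ with $q\notin\KK$ and $\gcd(v,q)=1$. The difference is in how the first step is handled: the paper simply invokes Abramov's universal denominator bound (resp.\ Bronstein's generalization) as a black box to conclude $g\in\KK[n]$, whereas you reprove the relevant special case from scratch by the extremal-pole argument inside a single $\sigma$-orbit --- choosing $p=\sigma^{k_-}(p_0)$ and $\sigma^{\ell+1}(p)=\sigma^{k_++1}(p_0)$ so that exactly one of $\sigma(g)$, $u\,g$ has a pole at each, forcing $p\mid q$ and $\sigma^{\ell+1}(p)\mid q$ and hence $\disp(q,q)\geq\ell+1>0$. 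Your version is self-contained and makes transparent exactly where the hypothesis $\disp(q,q)=0$ enters, at the cost of some bookkeeping (one should also note, as is routine over $\KK\supseteq\QQ$, that the shifts $\sigma^k(p_0)$ are pairwise distinct so that the index set is finite); the paper's version is shorter but defers the substance to the cited denominator-bound theorems.
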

\begin{proof}
Suppose that there is such a $g\in\KK(n)$.
By Abramov's universal denominator bounding~\cite{Abramov:89a} or Bronstein's generalization\footnote{For a unified formulation see Theorem~2 in~\cite{Schneider:04b}.}~\cite{Bron:00} (see Theorem~8 and~10 therein) it follows that $g\in\KK[n]$. Hence $\sigma(g)+u\,g\in\KK[n]$, a contradiction that $q\notin\KK$.
\end{proof}

\begin{lemma}\label{Lemma:NoSolutioninKnx}
Consider the difference ring $\dfield{\KK(n)[x]}{\sigma}$ with $n$ being transcendental over $n$ and $\sigma(n)=n+1$ and with $\sigma(x)=-x$ and $x^2=1$.
Let $p\in\KK[n]\setminus\KK$ with $\disp(p,p)=0$ and let $f\in\KK[n]_{\{p\}}[x]$ with $f\neq0$. Then there is no $g\in\KK(n)[x]$ with $\sigma(g)-g=f$.
\end{lemma}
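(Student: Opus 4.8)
The plan is to split $g\in\KK(n)[x]$ and $f\in\KK[n]_{\{p\}}[x]$ into their $x^0$- and $x^1$-components and reduce the statement to two instances of Lemma~\ref{Lemma:NoSolutionInK(n)}. First I would write $g=g_0+g_1\,x$ and $f=f_0+f_1\,x$ with $g_0,g_1\in\KK(n)$ and $f_0,f_1\in\KK[n]_{\{p\}}\cup\{0\}$. Since $\sigma(x)=-x$, the equation $\sigma(g)-g=f$ becomes
\begin{align*}
\sigma(g_0)-g_0&=f_0,\\
\sigma(g_1)+g_1&=f_1,
\end{align*}
after comparing coefficients with respect to $x$ (using that $1$ and $x$ form a basis of $\KK(n)[x]$ over $\KK(n)$, and that $\sigma$ restricted to $\KK(n)$ sends $n\mapsto n+1$). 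Since $f\neq0$, at least one of $f_0,f_1$ is nonzero, so it suffices to derive a contradiction in each case.

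For the first equation with $f_0\neq0$: $f_0\in\KK[n]_{\{p\}}$ means $f_0=v/p^m$ with $m\geq1$, $v\in\KK[n]$, $\gcd(v,p)=1$; in particular $f_0$ has a genuine pole at the roots of $p$. A solution $g_0\in\KK(n)$ of $\sigma(g_0)-g_0=f_0$ would, by Abramov's universal denominator bound (as in the proof of Lemma~\ref{Lemma:NoSolutionInK(n)}), have denominator dividing a product of shifts of $p$; but $\disp(p,p)=0$ forces the denominator of $g_0$ to be (up to constants) a power of $p$ alone, and then telescoping $\sigma(g_0)-g_0$ would cancel the pole entirely or shift it — more carefully, one shows the pole order of $g_0$ at $p$ cannot produce a pole of order exactly $m$ in $\sigma(g_0)-g_0$ because the poles of $\sigma(g_0)$ (located at the roots of $\sigma^{-1}(p)$) are disjoint from those of $g_0$. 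This is exactly the contradiction packaged in Lemma~\ref{Lemma:NoSolutionInK(n)} with $u=1$: indeed, writing $f_0=v/p^m$ and absorbing $p^{m-1}$ one reduces to $q:=p$ (still $\disp(q,q)=0$), $u=1\in\KK^*$, and a right-hand side of the required form, so Lemma~\ref{Lemma:NoSolutionInK(n)} applies and gives the contradiction. For the second equation with $f_1\neq0$: the same reduction applies verbatim with $u=-1\in\KK^*$, and Lemma~\ref{Lemma:NoSolutionInK(n)} again yields a contradiction.

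The one technical point needing care is the reduction from a general $f_i=v/p^m\in\KK[n]_{\{p\}}$ (with $m$ possibly $>1$) to the precise hypothesis shape $v'/q$ with $\gcd(v',q)=1$ required by Lemma~\ref{Lemma:NoSolutionInK(n)}. I expect this to be the main (though still routine) obstacle: one either generalizes Lemma~\ref{Lemma:NoSolutionInK(n)} slightly to allow higher-order poles, or argues directly that if $g_0$ solves $\sigma(g_0)-g_0=v/p^m$ then the denominator of $g_0$ is a power $p^k$ with $k\geq 1$ (using $\disp(p,p)=0$ to rule out any other irreducible factor and any cross-cancellation between the poles of $g_0$ and those of $\sigma(g_0)$), and then comparing the pole orders on both sides at a root of $p$ yields a contradiction. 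Since $\disp(p,p)=0$ is precisely the hypothesis that makes the pole bookkeeping trivial — no shift of $p$ shares a root with $p$ — the argument closes cleanly in all cases, and the lemma follows.
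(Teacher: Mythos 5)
Your proposal is correct and follows essentially the same route as the paper: split $g$ and $f$ into $x^0$- and $x^1$-components, compare coefficients to get $\sigma(g_j)+u\,g_j=(-1)^jf_j$ with $u=\pm1$, and invoke Lemma~\ref{Lemma:NoSolutionInK(n)}. The ``technical point'' you flag about higher-order poles is not actually an obstacle: Lemma~\ref{Lemma:NoSolutionInK(n)} does not require $q$ to be squarefree, so one simply takes $q:=c\,p^m$ (the full denominator of $f_j$), for which $\disp(q,q)=\disp(p,p)=0$ and $\gcd(v,q)=1$ hold directly --- this is exactly what the paper does, and no generalization of that lemma or separate pole-order bookkeeping is needed.
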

\begin{proof}
Let $g=g_0+g_1\,x^1$ with $g_i\in\KK(n)$ and $f=f_0+f_1\,x^1$ with $f_i\in\KK[n]_{\{p\}}$ such that $\sigma(g)-g=f$. Since $f\notin\KK[x]$, there is a $j\in\{0,1\}$ such that $f_j\not\in\KK$. By coefficient comparison we get $\sigma(g_j)-(-1)^j\,g_j=(-1)^jf_j$. Write $f_j=\frac{v}{q}$ with $\gcd(v,q)=1$. Note that $q=c\,p^m$ for some $m\in\NN$ and $c\in\KK^*$. Since $\disp(p^m)=0$, the existence of the solution $g_j$ contradicts to Lemma~\ref{Lemma:NoSolutionInK(n)}. 
\end{proof}

\begin{lemma}\label{Lemma:DeltaIsClosed}
Let $\dfield{\KK(n)[x]}{\sigma}$ be the difference ring from Lemma~\ref{Lemma:NoSolutioninKnx}. Let $P=\{p_1,\dots,p_r\}$ with $p_i\in\KK[n]\setminus\{0\}$ and let $\dfield{\EE}{\sigma}$ be a $P$-normalized \sigmaSE-extension of $\dfield{\KK(n)[x]}{\sigma}$. Set $\HH:=\KK(n)_P[x][s_1]\dots[s_e]$. Then the following holds.
\begin{enumerate}
\item For any $f,g\in\HH$, $f+g\in\HH$.
\item For any $f\in\KK[s_1,\dots,s_e]$, $\sigma(f)-f\in\HH$.
\end{enumerate}
\end{lemma}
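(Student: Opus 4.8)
The first assertion is essentially a bookkeeping statement about the set $\KK[n]_P$ of rational functions whose denominators are products of the polynomials in $P$. The plan is to observe that $\KK[n]_P \cup \KK[n] = \KK[n]_P'$ (say) is closed under addition: given two such fractions $a/(p_1^{m_1}\cdots p_r^{m_r})$ and $b/(p_1^{k_1}\cdots p_r^{k_r})$, bring them to the common denominator $p_1^{\max(m_1,k_1)}\cdots p_r^{\max(m_r,k_r)}$; after cancelling any common factors with the numerator one still has a denominator dividing a product of powers of the $p_i$, hence the sum lies in $\KK[n]_P$ (or in $\KK[n]$ if everything cancels). From this, closure of $\HH=\KK(n)_P[x][s_1]\dots[s_e]$ under addition follows coefficient-wise: an element of $\HH$ is a polynomial in $x,s_1,\dots,s_e$ whose coefficients lie in $\KK[n]_P$ (or $\KK[n]$), and adding two such polynomials adds the coefficients, which stay in that set by the scalar computation just described. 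I expect this part to be genuinely routine once one fixes the convention that $\HH$ allows coefficients that are honest polynomials as well (the exponent tuple $(m_1,\dots,m_r)$ is allowed to be $\vect 0$ only in the ambient sense, so one must be slightly careful that $\HH$ is literally $\KK(n)_P[x][\dots]$ with $\KK(n)_P = \KK[n] + \KK[n]_P$, i.e. the $\KK[n]$-subalgebra of $\KK(n)$ generated by the $1/p_i$).

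For the second assertion I would argue by induction on $e$, building on the $P$-normalization hypothesis and on part~(1). The base case $e=0$ is vacuous since $\KK[\,] = \KK$ and $\sigma$ fixes $\KK$. For the inductive step, write $f\in\KK[s_1,\dots,s_e]$ as a polynomial in $s_e$ with coefficients in $\KK[s_1,\dots,s_{e-1}]$, say $f=\sum_{j} c_j s_e^{j}$. Then $\sigma(f)=\sum_j \sigma(c_j)(s_e+\beta_e)^{j}$ where $\beta_e:=\sigma(s_e)-s_e$; by $P$-normalization, $\beta_e\in\KK[n]_P[x][s_1,\dots,s_{e-1}]\subseteq\HH$. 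Expanding $(s_e+\beta_e)^j$ by the binomial theorem produces a $\KK[s_1,\dots,s_{e-1}]$-linear combination of $s_e^{i}$ times powers of $\beta_e$. The key point is that $\sigma(f)-f$, collected as a polynomial in $s_e$, has each coefficient of the form $(\sigma(c_j)-c_j)$ (for the top-degree-type terms) plus sums of products of $\sigma(c_j)$, $\beta_e$, and integer binomial coefficients; by the induction hypothesis $\sigma(c_j)-c_j\in\HH$, while $\sigma(c_j)$, being in $\sigma(\KK[s_1,\dots,s_{e-1}])$, lies in $\KK[n]_P[x][s_1,\dots,s_{e-1}]\subseteq\HH$ (again using $P$-normalization, which forces each $\sigma(s_i)$ into $\HH$), and $\beta_e\in\HH$; since $\HH$ is a ring containing $\KK$ and is closed under addition by part~(1), all these combinations stay in $\HH$.

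The main obstacle, and the place where care is needed, is verifying that $\HH$ is actually closed under \emph{multiplication}, not just addition — the induction in part~(2) multiplies $\sigma(c_j)$ by powers of $\beta_e$ and by the $s_i$. Closure under multiplication reduces again to the scalar fact that $\KK[n]_P$ (together with $\KK[n]$) is multiplicatively closed, which is clear since $\frac{a}{\prod p_i^{m_i}}\cdot\frac{b}{\prod p_i^{k_i}}=\frac{ab}{\prod p_i^{m_i+k_i}}$ has the right denominator shape. So really the whole lemma rests on the single observation that $\KK(n)_P:=\KK[n]+\KK[n]_P$ is a subring of $\KK(n)$ stable under $\sigma$ — stability under $\sigma$ being exactly what the $P$-normalization condition is designed to guarantee, since $\sigma(p_i)$ need not be a unit times $p_i$ in general, but the condition $\sigma(s_i)-s_i\in\KK[n]_P[x][s_1,\dots,s_{i-1}]$ is imposed precisely so that applying $\sigma$ keeps us inside $\HH$. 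Once these ring-theoretic closure facts are in place, both parts of the lemma follow by the straightforward coefficient-comparison and induction arguments sketched above.
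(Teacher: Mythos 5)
Your proof is correct and follows essentially the same route as the paper's: the paper likewise expands $\sigma(f)$ via $\sigma(s_i)=s_i+\beta_i$ (doing the multinomial expansion over all variables at once rather than inducting on $e$) and observes that every surviving term of $\sigma(f)-f$ carries a factor $\beta_i\in\HH$, so that ring closure of $\HH$ finishes the argument. Your explicit treatment of the additive and multiplicative closure of $\KK(n)_P$ (reading it as the localization $\KK[n]+\KK[n]_P$) spells out what the paper dismisses as ``immediate,'' and is the right way to reconcile the lemma with the formal definition of $\KK[n]_{\{q_1,\dots,q_r\}}$.
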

\begin{proof}
The first statement is immediate. Now let $f=\sum_{(n_1,\dots,n_e)\in\NN_0^e} c_{(n_1,\dots,n_e)}s_1^{n_1}\dots s_e^{n_e}$ with $c_{(n_1,\dots,n_e)}\in\KK$ and set
$\beta_i:=\sigma(s_i)-s_i\in\HH$. Then by expanding $\sigma(f)$ into its monomials  we get
\begin{align*}
\sigma(f)=&\sum_{(n_1,\dots,n_e)\in\NN_0^e} c_{(n_1,\dots,n_e)}(s_1+\beta_1)^{n_1}\dots (s_e+\beta_e)^{n_e}\\
=&f+\sum_{(n_1,\dots,n_e)\in\NN_0^e\setminus\{\vect{0}\}} d_{(n_1,\dots,n_e)}s_1^{n_1}\dots s_e^{n_e}
\end{align*}
where for each $(n_1,\dots,n_e)\in\NN\setminus\{\vect{0}\}$ we have that $d_{(n_1,\dots,n_e)}\in\KK(n)_P[x]$. This last statement follows from the fact that at least one $\beta_i\in\HH$ is a factor of $d_{(n_1,\dots,n_e)}$.
Since a sum of elements of $\HH$ is again from $\HH$, we conclude that $\sigma(f)-f\in\HH$.  
\end{proof}

\begin{proposition}\label{Prop:QRelation}
Let $\dfield{\KK(n)[x]}{\sigma}$ as given in Lemma~\ref{Lemma:NoSolutioninKnx}. Let $P=\{p_1,\dots,p_r\}$ with $p_i\in\KK[n]\setminus\{0\}$ and $\disp(p_i,p_j)=0$ for all $1\leq i,j\leq r$.
Let $\dfield{\KK(n)[x][s_1]\dots[s_e]}{\sigma}$ be a $P$-normalized \sigmaSE-extension of $\dfield{\KK(n)[x]}{\sigma}$. If $f\in\KK[n]_P[x][s_1\dots,s_e]$  and $g\in\KK(n)[x][s_1]\dots[s_e]$ such that $\sigma(g)-g=f$ holds, then $g\in\KK[s_1,\dots,s_e]$.
\end{proposition}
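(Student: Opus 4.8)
The plan is to prove the statement by induction on the number $e$ of generators, with the induction hypothesis being exactly the assertion of the proposition (for all $P$ satisfying $\disp(p_i,p_{i'})=0$). In the base case $e=0$ we have $g\in\KK(n)[x]$ with $\sigma(g)-g=f$ and $f\in\KK[n]_P[x]$, and must show $g\in\KK$. If $f=0$ this is immediate since $\const{\KK(n)[x]}{\sigma}=\KK$. If $f\neq0$ I would argue by contradiction as in the proof of Lemma~\ref{Lemma:NoSolutioninKnx}: writing $g=g_0+g_1x$ and $f=f_0+f_1x$ with $g_i\in\KK(n)$ and $f_i\in\KK[n]_P\cup\{0\}$, comparison of the $x^0$- and $x^1$-parts (using $\sigma(x)=-x$) gives $\sigma(g_j)-(-1)^jg_j=(-1)^jf_j$ for $j\in\{0,1\}$; pick $j$ with $f_j\neq0$ and write $f_j=v/q$ with $q$ a product of powers of the $p_i$ (so $q\notin\KK$) and $\gcd(v,q)=1$. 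Since $\disp(p_i,p_{i'})=0$ for all $i,i'$ one has $\gcd(q,\sigma^k(q))=1$ for every $k\geq1$, i.e.\ $\disp(q,q)=0$, so Lemma~\ref{Lemma:NoSolutionInK(n)} (with $u=(-1)^{j+1}\in\KK^*$) shows no such $g_j\in\KK(n)$ exists — a contradiction. Hence $f=0$ and $g\in\KK=\KK[s_1,\dots,s_e]$.

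For the inductive step assume the claim for $(e-1)$-generator $P$-normalized \sigmaSE-extensions and set $\GG:=\KK(n)[x][s_1]\dots[s_{e-1}]$, which is again such an extension of $\dfield{\KK(n)[x]}{\sigma}$. Write $g=\sum_{j=0}^{d}g_js_e^{\,j}$ with $g_j\in\GG$, $f=\sum_{j\geq0}f_{[j]}s_e^{\,j}$ with $f_{[j]}\in\KK[n]_P[x][s_1,\dots,s_{e-1}]$ (or $f_{[j]}=0$), and $\beta:=\sigma(s_e)-s_e\in\KK[n]_P[x][s_1,\dots,s_{e-1}]$; the last containment uses that the extension is $P$-normalized. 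Since $\sigma(s_e)=s_e+\beta$, expanding and comparing coefficients of $s_e^{\,j}$ in $\sigma(g)-g=f$ gives
$$\sigma(g_j)-g_j=f_{[j]}-\sum_{l=j+1}^{d}\binom{l}{j}\,\sigma(g_l)\,\beta^{\,l-j}\qquad(0\leq j\leq d).$$
I would solve these from the top down. At $j=d$ the right-hand side is $f_{[d]}$: if $f_{[d]}=0$ then $g_d\in\const{\GG}{\sigma}=\KK$, otherwise the induction hypothesis yields $g_d\in\KK[s_1,\dots,s_{e-1}]$; so $g_d\in\KK[s_1,\dots,s_{e-1}]$ in any case. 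Inductively, once $g_{j+1},\dots,g_d\in\KK[s_1,\dots,s_{e-1}]$, Lemma~\ref{Lemma:DeltaIsClosed} gives $\sigma(g_l)\in\HH$ for $l>j$ and $\beta^{\,l-j}\in\HH$, and one checks — tracking denominators and using $\disp(p_i,p_{i'})=0$ so that no $P$-free polynomial term can be produced — that the right-hand side $\varphi_j$ of the $j$-th equation is again either $0$ or an element of $\KK[n]_P[x][s_1,\dots,s_{e-1}]$. Then the induction hypothesis (or, when $\varphi_j=0$, the identity $\const{\GG}{\sigma}=\KK$) gives $g_j\in\KK[s_1,\dots,s_{e-1}]$. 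As this holds for all $j$, we obtain $g\in\KK[s_1,\dots,s_e]$.

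The step I expect to be the main obstacle is the denominator bookkeeping just invoked: verifying that the inhomogeneous parts $\varphi_j$ of the lower-order telescoping equations stay inside $\KK[n]_P[x][s_1,\dots,s_{e-1}]$, so that the induction hypothesis remains applicable. This is precisely the purpose of the preparatory Lemma~\ref{Lemma:DeltaIsClosed} together with the dispersion hypotheses $\disp(p_i,p_{i'})=0$; without $P$-normalization of the \sigmaSE-tower and without these dispersion conditions the products $\sigma(g_l)\,\beta^{\,l-j}$ could pick up a polynomial part lying outside $\KK[n]_P[x][s_1,\dots,s_{e-1}]$, and the descent would break down.
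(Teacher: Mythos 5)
Your proposal is correct and follows essentially the same route as the paper: induction on $e$, with the base case settled by Lemma~\ref{Lemma:NoSolutioninKnx} and the inductive step done by comparing coefficients of powers of $s_e$ and using Lemma~\ref{Lemma:DeltaIsClosed} to keep the inhomogeneous parts inside $\KK[n]_P[x][s_1,\dots,s_{e-1}]$. The only cosmetic difference is that you run a descending induction over the coefficient index while the paper argues by contradiction from the maximal index $j$ with $g_j\notin\KK[s_1,\dots,s_{e-1}]$ (splitting off $\gamma=\sum_{i>j}g_is^i$ and applying Lemma~\ref{Lemma:DeltaIsClosed} to $\sigma(\gamma)-\gamma$), which is the same argument in contrapositive form.
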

\begin{proof}
We prove the statement
by induction on the number of extensions.  If $e=0$, the situation $f\neq0$ cannot occur by Lemma~\ref{Lemma:NoSolutioninKnx}. If $f=0$, $g\in\KK$ and the base case is proven.\\ 
Now suppose that the theorem holds for $e$ extensions and
consider the \sigmaSE-extension
$\dfield{\HH[s]}{\sigma}$ of $\dfield{\KK(n)[x]}{\sigma}$ with $\HH=\KK(n)[x][s_1]\dots[s_e]$ and $\sigma(s)=s+\beta$ with
$\beta\in\HH$. Denote $\HH_P=\KK[n]_P[x][s_1]\dots[s_e]$. Now suppose that
$\sigma(g)-g=f$ holds where
$g\in\HH[s]$ and
$f\in\HH_P[s]$. 
Again, if $f=0$, $g\in\KK$ and we are done.\\ 
Finally, suppose that $f\neq0$ holds.
Write
$g=\sum_{i=0}^d g_is^i$ with $g_i\in\HH$. If $g_i\in\KK[s_1,\dots,s_e]$ for all $i$, we are done. Otherwise, let $j$ be maximal such that $g_j\notin\KK[s_1,\dots,s_e]$.
Define
$g':=\sum_{i=0}^{j}g_i\,s^i\in\HH[s]$ and $\gamma:=\sum_{i=j+1}^{d}g_i\,s^i\in\KK[s_1]\dots[s_e][s]$, i.e., $g=g'+\gamma$. By Lemma~\ref{Lemma:DeltaIsClosed} it follows that $h:=\sigma(\gamma)-\gamma\in\HH_P[s]$.
In particular, since $f\in\HH_P[s]$, $f':=f-h\in\HH_P[s]$ by Lemma~\ref{Lemma:DeltaIsClosed}. By construction,
$$\sigma(g')-g'=f'$$
with $f'=\sum_{i=0}^j f'_j\,s^i$ for some $f'_j\in\HH_P$.
In particular, by leading coefficient comparison, 
$$\sigma(g_j)-g_j=f'_j.$$
By the induction assumption it follows that $g_j\in\KK[s_1,\dots,s_e]$, a contradiction. 
\end{proof}

\section{Algebraic independence of (cyclotomic) harmonic sums}\label{Sec:MainResult}

First, we define the ring of sequences and the notion of difference ring embeddings.
Consider the set of sequences
$\KK^{\NN_0}$ with elements $\langle a_{\lambda}\rangle_{\lambda\geq0}=\langle
a_0,a_1,a_2,\dots\rangle$, $a_i\in\KK$. With component-wise addition and multiplication we obtain a commutative ring; the field $\KK$ can be
naturally embedded by identifying $k\in\KK$ with the sequence
$\langle k,k,k,\dots\rangle$; we write $\vect{0}=\langle0,0,0,\dots\rangle$.

\noindent We follow the construction
from~\cite[Sec.~8.2]{AequalB} in order to turn the
shift
\begin{equation}\label{Equ:ShiftOp}
\shiftS:{\langle a_0,a_1,a_2,\dots\rangle}\mapsto{\langle
a_1,a_2,a_3,\dots\rangle}
\end{equation}
into an automorphism: We define an equivalence relation
$\sim$ on $\KK^{\NN_0}$ by $\langle a_{\lambda}\rangle_{\lambda\geq0}\sim
\langle b_\lambda\rangle_{\lambda\geq0}$ if there exists a $d\geq 0$ such that
$a_k=b_k$ for all $k\geq d$. The equivalence classes form a ring
which is denoted by $\seqK$; the elements of $\seqK$ (also called germs) will be
denoted, as above, by sequence notation.
Now it is immediate that
$\fct{\shiftS}{\seqK}{\seqK}$ with~\eqref{Equ:ShiftOp} forms a ring automorphism. The difference ring $\dfield{\seqK}{\shiftS}$ is called the ring of sequences (over $\KK$).

A difference ring homomorphism $\fct{\tau}{\GG_1}{\GG_2}$ between difference
rings $\dfield{\GG_1}{\sigma_1}$ and $\dfield{\GG_2}{\sigma_2}$ is a
ring homomorphism such that $\tau(\sigma_1(f))=\sigma_2(\tau(f))$ for all
$f\in\GG_1$. If $\tau$ is injective, we call $\tau$ a difference ring monomorphism or a difference ring embedding. In this regard note that $\dfield{\tau(\GG_1)}{\sigma}$ forms a difference ring which is the same as $\dfield{\GG_1}{\sigma}$ up to the renaming of the elements by $\tau$. Moreover, $\dfield{\GG_2}{\sigma}$ is a difference ring extension of $\dfield{\tau(\GG_1)}{\sigma}$. In a nutshell, $\dfield{\GG_1}{\sigma}$ is contained in $\dfield{\GG_2}{\sigma}$ by means of the embedding $\tau$. 

Consider our difference field $\dfield{\KK(n)}{\sigma}$ with $\sigma(n)=n+1$. Now define the evaluation function $\fct{\ev}{\KK(n)\times\NN}{\KK(n)}$ as follows.
For $\frac{p}{q}\in\KK(n)$ with $p,q\in\KK(n)$ and $\gcd(p,q)=1$,
\begin{equation}\label{Equ:EvRat}
\ev(\tfrac{p}{q},k)=
\begin{cases}
\frac{p(k)}{q(k)}&\text{if $q(k)\neq0$}\\
0&\text{if $q(k)=0$ \quad(pole case)};
\end{cases}
\end{equation}
here $p(k),q(k)$ with $k\in\NN$ denotes the evaluation of the polynomials at $n=k$.
Finally, we define the map $\fct{\tau}{\KK(n)}{\KK^{\NN}}$ by 
\begin{equation}\label{Equ:DefineTau}
\tau(f)=\langle \ev(f,k)\rangle_{k\geq0}=\langle \ev(f,0),\ev(f,1),\ev(f,2),\dots\rangle.
\end{equation}
Then one can easily see that $\tau$ is a difference ring homomorphism from $\dfield{\KK(n)}{\sigma}$ to $\dfield{\seqK}{\shiftS}$. In particular, $\tau$ is injective. Namely, take $f\in\KK(n)$ with $\tau(f)=\vect{0}$ i.e., $\ev(f,k)=0$ for all $k\in\NN_0$. Write $f=\frac{p}{q}$ with $p\in\KK[x]$ and $q\in\KK[x]\setminus\{0\}$. Since $q$ has only finitely many roots, $\tau(f)=\vect{0}$ implies that $p$ has infinitely many roots. Thus $p=0$ and therefore $f=0$. In summary, $\tau$ is a difference ring embedding. In particular, this implies that
$\dfield{\tau(\KK(n))}{\shiftS}$ forms a difference field also called the field of rational sequences.

Now consider our difference ring $\dfield{\KK(n)[x]}{\sigma}$ with $\sigma(x)=-x$ and $x^2=1$. We extend $\ev$ from $\KK(n)$ to $\KK(n)[x]$ as follows. For $f=f_0+f_1\,x$ with $f_0,f_1\in\KK(n)$ we define
$$\ev(f,k)=\ev(f_0,k)+(-1)^k\,\ev(f_1,k).$$
Furthermore we extend $\tau$ to $\KK(n)[x]$ with~\eqref{Equ:DefineTau} by our extended map $\ev$. Again one can straightforwardly verify that $\tau$ is a difference ring homomorphism. Moreover, $\tau$ is injective: Let $f\in\QQ(n)[x]$ with $f=f_0+f_1\,x$ such that $\tau(f)=\vect{0}$. Then $0=\ev(f_0,2k)+\ev(f_1,2k)=\ev(f_0+f_1,2k)$ for almost all $k\geq0$. As above, we conclude that $f_0+f_1=0$. Similarly, we get $0=\ev(f_0,2k+1)-\ev(f_1,2k+1)=\ev(f_0-f_1,2k)$ which implies that $f_0-f_1=0$. Consequently, $f_0=f_1=0$.
Summarizing, $\fct{\tau}{\KK(n)[x]}{\seqK}$ is a difference ring embedding. In particular, $\dfield{\tau(\KK(n)[x]}{\shiftS}$ is a difference ring which is the same as $\dfield{\KK(n)[x]}{\sigma}$ up to the renaming of the elements by $\tau$. $\dfield{\tau(\KK(n)[x]}{\shiftS}$ is also called the ring of rational sequences adjoined with the alternating sequence.

Now let $\AH\in\{A_h,A_a,A_{c(M)}\}$ for a finite set $M\in\NN$.
We will extend successively $\ev$ and thus $\tau$
for $\dfield{\EE^{\AH}_{d-1}}{\sigma}\leq\dfield{\EE^{\AH}_d}{\sigma}$ with $d=1,2,\dots$ as follows. For $\EE^{\AH}_0=\KK(n)[x]$ we have constructed $\ev$ and $\tau$. Now suppose that we obtained already the evaluation function $\fct{\ev}{\EE^{\AH}_{d-1}}{\KK}$ which yields a difference ring homomorphism $\fct{\tau}{\EE^{\AH}_{d-1}}{\seqK}$. 
In addition, assume that~\eqref{Equ:PolyRingForD} holds. Then take the sum representants~\eqref{Equ:Representants} such that $t^{(d)}_i=a^{(d)}_i+I_d$. 
%We remark that one could take any other element from $t^{(d)}_i$  since each choices evaluate (up to finitely many points) always to the same sequence.
Let 
$$f=\sum_{(e_1,\dots,e_{m_d})\in\NN_0^{m_d}} f_{(e_1,\dots,e_{m_d})}\,{t^{(d)}_1}^{e_1}\dots {t^{(d)}_{m_d}}^{e_{m_d}}\in\EE^{\AH}_d$$
where only finitely many $f_{(e_1,\dots,e_{m_d})}\in\EE^{\AH}_{d-1}$ are non-zero. Then we define
$$\ev(f,k)=\sum_{(e_1,\dots,e_{m_d})\in\NN_0^{m_d}} \ev(f_{(e_1,\dots,e_{m_d})},k)\,\expr({a^{(d)}_1})(k)^{e_1}\dots \expr({a^{(d)}_{m_d}})(k)^{e_{m_d}},$$
i.e., the sum variables are replaced by the corresponding \AH-sums depending on the variable $k$ and the expression from $\KK(n)[x]$ are treated by $\ev$ as described above. Finally, consider the function $\fct{\tau}{\EE^{\AH}_d}{\seqK}$ as given in~\eqref{Equ:DefineTau} with the extended $\ev$.
Again one can verify by simple calculations that $\tau$ forms a difference ring homomorphism; for further details see~\cite[Lemma~4.4]{DR1}.

What remains to show is that $\tau$ is also injective. Here we utilize the following result\footnote{If $\EE^{\AH}$ is free of $x$, this statement has been proven in~\cite{AlgebraicDF}.} that is implied immediately by Theorem~2.3 and Lemma~5.8 from~\cite{Schneider:17}.

\begin{theorem}\label{Thm:SigmaIsEmbedding}
Let $\dfield{\GG}{\sigma}$ with $\GG=\KK(n)[x][s_1]\dots[s_e]$ be a \sigmaSE-extension of the difference ring $\dfield{\KK(n)[x]}{\sigma}$. If $\fct{\tau}{\GG}{\seqK}$ is a difference ring homomorphism, then $\tau$ is injective.
\end{theorem}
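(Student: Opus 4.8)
The plan is to reduce injectivity of $\tau$ to a purely internal property of $\dfield{\GG}{\sigma}$, namely that it is a \emph{simple} difference ring, i.e.\ its only difference ideals are $\{0\}$ and $\GG$. Granting this, the kernel $\ker\tau$ is a difference ideal of $\dfield{\GG}{\sigma}$ (since $\tau\circ\sigma=\shiftS\circ\tau$), and it is proper because $\tau(1)$ is the germ of $\langle1,1,1,\dots\rangle$, which is not $\vect0$ in $\seqK$; hence $\ker\tau=\{0\}$ and $\tau$ is injective. So the entire burden is to prove that $\dfield{\GG}{\sigma}$ is simple. (This is essentially the content of the cited results of~\cite{Schneider:17} specialized to the towers at hand.)

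I would prove simplicity by induction on the number $e$ of $\Sigma^*$-generators $s_1,\dots,s_e$, using that, by hypothesis, every intermediate ring $\dfield{\KK(n)[x][s_1]\dots[s_i]}{\sigma}$ is a $\Sigma^*$-extension of its predecessor and, in particular, has constant field $\KK$. For the base case $e=0$ I would show $\dfield{\KK(n)[x]}{\sigma}$ is simple by passing to the two orthogonal idempotents $\tfrac{1\pm x}{2}$: these identify $\KK(n)[x]$ with $\KK(n)\times\KK(n)$, under which $\sigma$ acts as the shift composed with the swap of the two factors. Every ideal of $\KK(n)\times\KK(n)$ has the form $I_1\times I_2$ with each $I_j\in\{0,\KK(n)\}$ since $\KK(n)$ is a field, and the difference-ideal condition forces $I_1=I_2$; so only $\{0\}$ and $\KK(n)[x]$ remain.

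For the inductive step write $\GG=\AA[t]$ with $\AA=\KK(n)[x][s_1]\dots[s_{e-1}]$ and $\sigma(t)=t+\beta$, $\beta\in\AA$; by the induction hypothesis $\dfield{\AA}{\sigma}$ is simple with $\const{\AA}{\sigma}=\KK$, and $\dfield{\AA[t]}{\sigma}$ is a $\Sigma^*$-extension. Given a nonzero difference ideal $J\subseteq\AA[t]$: if $J\cap\AA\neq\{0\}$ then $J\cap\AA$ is a nonzero difference ideal of the simple ring $\AA$, so $1\in J$ and $J=\AA[t]$; hence assume $J\cap\AA=\{0\}$, so every nonzero element of $J$ has $t$-degree $\ge1$. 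Choose $0\neq p\in J$ of minimal $t$-degree $d\ge1$. The leading coefficients of the $t$-degree-$d$ elements of $J$, together with $0$, form an ideal of $\AA$, and even a difference ideal (the substitution $t\mapsto t+\beta$ preserves the degree and multiplies leading coefficients by $\sigma$); being nonzero it equals $\AA$, so $J$ contains a \emph{monic} element $\tilde p=t^d+c_{d-1}t^{d-1}+\dots$. Then $\sigma(\tilde p)-\tilde p\in J$ has $t$-degree $\le d-1$, so by minimality of $d$ it vanishes; reading off its $t^{d-1}$-coefficient gives $d\,\beta+\sigma(c_{d-1})-c_{d-1}=0$, i.e.\ $g:=-c_{d-1}/d\in\AA$ (here $\QQ\subseteq\KK$) solves $\sigma(g)-g=\beta$. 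By Theorem~\ref{Thm:NotSigmaExt} this contradicts $\dfield{\AA[t]}{\sigma}$ being a $\Sigma^*$-extension. Hence $J=\AA[t]$, and $\GG$ is simple.

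The main obstacle I anticipate is the inductive step, and within it the passage to a monic element of $J$: because $\GG$ carries the relation $x^2=1$ it is not an integral domain, so leading-coefficient manipulations must be organized at the level of the (difference) ideal of leading coefficients rather than by inverting a single leading coefficient. The conceptual crux, though, is the observation that lowering the $t$-degree manufactures a solution of $\sigma(g)-g=\beta$ inside the base ring $\AA$ — this is exactly where the $\Sigma^*$-hypothesis (via Theorem~\ref{Thm:NotSigmaExt}) and the assumption $\QQ\subseteq\KK$ are indispensable. The base case is elementary but, owing to the idempotents of $\KK(n)[x]$, still needs the product-decomposition argument rather than a bare ``$\KK(n)$ is a field'' remark.
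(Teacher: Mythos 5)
Your proposal is correct, and its outer frame coincides exactly with the paper's proof: the kernel of $\tau$ is a difference ideal of $\dfield{\GG}{\sigma}$ (since $\tau\circ\sigma=\shiftS\circ\tau$), it is proper because $\tau(1)\neq\vect{0}$, and so everything reduces to showing that $\dfield{\GG}{\sigma}$ is a simple difference ring. Where you diverge is in how simplicity is obtained. The paper does not prove it: it cites Cor.~4.3 and Lemma~4.4 of~\cite{DR1} to conclude that $\GG$ has no nilpotent elements and then invokes~\cite[Cor.~1.24]{Singer:97} to get simplicity. You instead give a self-contained induction on $e$: the base case via the orthogonal idempotents $\tfrac{1\pm x}{2}$ identifying $\KK(n)[x]$ with $\KK(n)\times\KK(n)$ on which $\sigma$ acts as shift-plus-swap, and the inductive step by extracting, from a hypothetical nonzero difference ideal $J$ with $J\cap\AA=\{0\}$, a monic element of minimal $t$-degree whose first difference must vanish, thereby manufacturing a telescoping solution $g=-c_{d-1}/d$ of $\sigma(g)=g+\beta$ in $\AA$ and contradicting Theorem~\ref{Thm:NotSigmaExt}. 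This is essentially the argument underlying the cited results, and you handle the one genuinely delicate point correctly: since $\GG$ is not a domain, you pass to the ideal of leading coefficients (closed under $\sigma$ because $\sigma$ preserves the $t$-degree and acts on leading coefficients by $\sigma$) rather than trying to invert a single leading coefficient. What your route buys is independence from the external references at the cost of length; what the paper's route buys is brevity and the reuse of general $R\Pi\Sigma^*$-machinery that also covers extensions not of the pure $\Sigma^*$-form treated here.
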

\begin{proof}
Alternatively to the proof in~\cite{Schneider:17}, one can derive this result as follows: By Cor.~4.3 and Lemma~4.4 of~\cite{DR1} $\dfield{\GG}{\sigma}$ has no nilpotent elements. Hence  by~\cite[Cor.~1.24]{Singer:97} $\dfield{\GG}{\sigma}$ is a simple difference ring. I.e., any difference ideal of $\dfield{\GG}{\sigma}$ is either $\{0\}$ or $\GG$. Now consider $I=\{f\in\GG|\tau(f)=0\}$. Then this forms a difference ideal: if $h\in I$ then $\tau(\sigma(h))=\shiftS(\tau(h))=\shiftS(\vect{0})=\vect{0}$ and thus $\sigma(h)\in I$. Since $1\notin I$, $I\neq\GG$. Hence $I=\{0\}$, i.e., $\tau$ is injective. 
\end{proof}

Now we are ready to prove our main result.

\begin{theorem}\label{Thm:MainResult}
Let $\AH\in\{A_h,A_a,A_{c(M)}\}$. Then for any $d\in\NN$ we have that $\dfield{\EE^{\AH}_d}{\sigma}$ with $\EE^{\AH}_d=\KK(n)[x][t^{(1)}_1,\dots,t^{(1)}_{m_1}]\dots[t^{(d)}_1,\dots,t^{(d)}_{m_d}]$ is a \sigmaSE-extension of $\dfield{\KK(n)[x]}{\sigma}$. In particular, the map $\fct{\tau}{\EE^{\AH}_d}{\seqK}$ constructed above is a difference ring embedding.
\end{theorem}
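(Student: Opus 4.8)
## Proof plan

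The plan is to prove the statement by induction on the weight $d$, with the base case $d=0$ being the established fact that $\const{\KK(n)[x]}{\sigma}=\KK$. For the induction step, suppose $\dfield{\EE^{\AH}_{d-1}}{\sigma}$ is already known to be a \sigmaSE-extension of $\dfield{\KK(n)[x]}{\sigma}$; we must show that adjoining the weight-$d$ generators $t^{(d)}_1,\dots,t^{(d)}_{m_d}$ with the shift behaviour inherited from~\eqref{Equ:ForwardShift} keeps the constants equal to $\KK$. Concretely, writing $\EE^{\AH}_d=\EE^{\AH}_{d-1}[t^{(d)}_1]\dots[t^{(d)}_{m_d}]$, I would adjoin the $t^{(d)}_i$ one at a time and, for each $i$, invoke Theorem~\ref{Thm:NotSigmaExt}: the extension by $t^{(d)}_i$ is a \sigmaSE-extension iff there is no $g$ in the ring constructed so far with $\sigma(g)-g=\beta_i$, where $\beta_i=\sigma(t^{(d)}_i)-t^{(d)}_i$ is the element of $\EE^{\AH}_{d-1}[t^{(d)}_1,\dots,t^{(d)}_{i-1}]$ coming from~\eqref{Equ:ForwardShift} (after applying the substitution rules that reduce back into the representants).

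The key structural input is that the whole tower is $P$-normalized for $P=\{n+1\}$ (or more generally $P$ a suitable finite set of shift-coprime linear polynomials arising from the denominators $(an+b)$ in $\bar\lambda$): indeed $\beta_i=\sigma(\bar\lambda(a_{i1}))\sigma(s^{(d-|a_{i1}|)}_i)$ lies in $\KK[n]_P[x][s_1,\dots,s_{i-1}]$ after reduction, since $\bar\lambda$ contributes only denominators from $P$ and $x$-powers, and the second factor is a lower-weight representant. Having checked that $\disp(p,p)=0$ for each $p\in P$ (each $p$ is linear, so shifts of $p$ are coprime to $p$), I would apply Proposition~\ref{Prop:QRelation}: any solution $g$ of $\sigma(g)-g=\beta_i$ in $\EE^{\AH}_{d-1}[t^{(d)}_1,\dots,t^{(d)}_{i-1}]$ must in fact lie in the polynomial ring $\KK[s_1,\dots,s_{i-1},\dots]$ over $\KK$ alone. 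Then Proposition~\ref{Prop:TeleStructure} forces $g=\sum c_j t_j + w$ with $c_j\in\KK$ and $w\in\KK$; comparing $\sigma(g)-g$ with $\beta_i$ and noting that $\beta_i\notin\VV$-style linear span unless it vanishes — here I would bring in Lemma~\ref{Lemma:LinearConstant}(2) and Lemma~\ref{Lemma:LBarSigmaCommute}, applied to $L(\beta_i)$, to see that $\beta_i$ genuinely involves a denominator and so cannot equal any $\sum c_j\beta_j$ with $c_j\in\KK$ — yields a contradiction. Hence no such $g$ exists and the extension by $t^{(d)}_i$ is a \sigmaSE-extension; iterating over $i$ and then over $d$ completes the proof that $\dfield{\EE^{\AH}_d}{\sigma}$ is a \sigmaSE-extension with $\const{\EE^{\AH}_d}{\sigma}=\KK$. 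The final assertion, that $\tau$ is a difference ring embedding, then follows immediately from Theorem~\ref{Thm:SigmaIsEmbedding}, since $\tau$ has already been verified to be a difference ring homomorphism.

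The main obstacle I anticipate is the careful bookkeeping showing $P$-normalization survives the reduction modulo $I_d$: the raw shift $\sigma(\bar S_{\ve a})$ from~\eqref{Equ:ForwardShift} may reintroduce non-basis sums of weight $d$, which must be rewritten via the substitution rules of the reduced construction, and one has to check that this rewriting does not spoil the property that $\beta_i$ lies in $\KK[n]_P[x][s_1,\dots,s_{i-1}]$ — i.e. that no bad denominators or higher-weight terms creep in. This hinges on the weight-preserving property of $L$ together with the explicit form $\bar\lambda((a,b,c,z))=x^{(z-1)/2}/(an+b)^c$, ensuring every denominator that appears is a power of some $p\in P$. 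A secondary subtlety is to confirm that the set $P$ can be chosen once and for all (finite, pairwise shift-coprime) for the fixed alphabet $\AH\in\{A_h,A_a,A_{c(M)}\}$; for $A_h$ and $A_a$ one has $P=\{n+1\}$, and for $A_{c(M)}$ one takes $P=\{an+b : a\in M,\ 0\le b<a,\ \gcd(a,b)=1\}$ together with a check of pairwise dispersion zero, which holds because distinct such linear forms $a_1n+b_1$ and $a_2n+b_2$ are never related by an integer shift when $\gcd(a_1,b_1)=\gcd(a_2,b_2)=1$ unless they coincide.
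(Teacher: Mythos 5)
Your overall strategy---induction on the weight, adjoining the $t^{(d)}_i$ one at a time, Theorem~\ref{Thm:NotSigmaExt}, the $P$-normalization with pairwise dispersion zero, and Proposition~\ref{Prop:QRelation} to force a hypothetical telescoping solution $g$ into the polynomial ring over $\KK$ in the sum variables---coincides with the paper's. The gap is in how you close the argument. You invoke Proposition~\ref{Prop:TeleStructure} to conclude $g=\sum_j c_j t_j+w$ with $c_j,w\in\KK$, but that proposition requires $\sigma(t_j)-t_j\in\AA$ for \emph{every} generator of the tower over the base ring $\AA=\KK(n)[x]$, and this hypothesis fails as soon as basis sums of depth at least two are present: e.g.\ $\sigma(\Sv_{2,1})-\Sv_{2,1}$ involves $\Sv_{1}$ and hence does not lie in $\KK(n)[x]$. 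So a priori $g$ may be a genuine polynomial of higher degree in the sum variables, and your linear ansatz does not cover the general case. Moreover, even granting linearity, the proposed contradiction---``$\beta_i$ genuinely involves a denominator and so cannot equal any $\sum_j c_j\beta_j$''---does not work as stated, because each $\beta_j$ involves denominators from $\KK[n]_P$ as well; denominator considerations alone do not rule out a $\KK$-linear combination of the $\beta_j$ reproducing $\beta_i$.

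The paper closes the argument differently, and this is precisely where the quasi-shuffle machinery enters. From $\sigma(g)=g+\beta$ with $g\in\KK[t^{(1)}_1,\dots,t^{(d)}_{r-1}]$ one forms $p:=t^{(d)}_r-g\neq0$, which satisfies $\sigma(p)=p$. Applying the linearization operator $\bar{L}$, which commutes with $\sigma$ by Lemma~\ref{Lemma:LBarSigmaCommute}, yields a $\sigma$-invariant element $u=\bar{L}(p+I)$; by Lemma~\ref{Lemma:LinearConstant} part~(1) this $u$ lies in $\bar{\VV}$---this is the step that absorbs an \emph{arbitrary} polynomial $g$ in the sum variables, not just a linear one---and $u\neq0$ because $p$ is nonzero in the quotient ring, so $L(p')\neq 0$. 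This contradicts Lemma~\ref{Lemma:LinearConstant} part~(2), whose proof is a minimal-first-letter argument inside $\bar{\VV}$, not a denominator count. In short, the lemmas you name are the right ones, but they must be applied to $L(t^{(d)}_r-g)$ rather than to $L(\beta_i)$. (The paper additionally confirms $\sigma(p)=p$ by evaluating into $\seqK$ and using the injectivity of $\tau$ on $\EE^{\AH}_{d-1}$; that detour is not what your write-up is missing---the missing piece is the passage through $\bar{L}$ and $\bar{\VV}$.)
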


\begin{proof}
Let $P\subset\ZZ[n]$ be the shifted denominators of the alphabet $\AH$. By definition $P$ is finite and we have that $P\subseteq\{n+1,2n+3,3n+4,3n+5,4n+5,4n+7,\dots\}$. Clearly, for any 
$a,b\in P$ we have that
$\disp(a,b)=\disp(b,a)=0$.
Now suppose that the theorem holds for $d-1$. I.e., $\dfield{\EE^{\AH}_{d-1}}{\sigma}$ is a \sigmaSE-extension of $\dfield{\KK(n)[x]}{\sigma}$. Note that this is a $P$-normalized \sigmaSE-extension.
Consider the difference ring extension
$\dfield{\EE^{\AH}_d}{\sigma}$ of $\dfield{\EE^{\AH}_{d-1}}{\sigma}$. By the construction from above there is a difference ring homomorphism $\fct{\tau}{\EE^{\AH}_d}{\seqK}$. In particular, by $\tau':=\tau|_{\EE^{\AH}_{d-1}}$ is a difference ring homomorphism from $\EE^{\AH}_{d-1}$ to $\seqK$ and thus $\tau'$ is injective by Theorem~\ref{Thm:SigmaIsEmbedding}.
Suppose that $\dfield{\HH}{\sigma}$ with $\HH=\EE^{\AH}_{d-1}[t^{(d)}_1,\dots,t^{(d)}_{r-1}]$ is a \sigmaSE-extension of $\dfield{\EE^{\AH}_{d-1}}{\sigma}$. Then $\dfield{\HH}{\sigma}$ is a $P$-normalized \sigmaSE-extension of $\dfield{\KK(n)[x]}{\sigma}$. Note that $\const{\HH}{\sigma}=\const{\KK(n)[x]}{\sigma}=\KK$ is a field. Now suppose that $\dfield{\HH[t^{(d)}_r]}{\sigma}$ is not a \sigmaSE-extension of $\dfield{\HH}{\sigma}$. Take $\beta:=\sigma(t^{(d)}_{r})-t^{(d)}_{r}\in\KK[n]_P[t^{(1)}_1,\dots,t^{(d-1)}_{m_{d-1}}]$. 
By Theorem~\ref{Thm:NotSigmaExt} it follows that there is a $g\in\KK(n)[x][t^{(1)}_1,\dots,t^{(d-1)}_{m_{d-1}}][t^{(d)}_1,\dots,t^{(d)}_{r-1}]$ 
such that~\eqref{Equ:Tele} holds. 
By Proposition~\ref{Prop:QRelation} we conclude that  $g\in\KK[t^{(1)}_1,\dots,t^{(d-1)}_{m_{d-1}}][t^{(d)}_1,\dots,t^{(d)}_{r-1}]$.  
Define
\begin{equation}\label{Equ:Definep}
p:=t^{(d)}_{r}-g\in\KK[t^{(1)}_1,\dots,t^{(d-1)}_{m_{d-1}}][t^{(d)}_1,\dots,t^{(d)}_{r}]\setminus\{0\}
\end{equation}
and consider
$q:=\sigma(p)-p$. By Proposition~\ref{Prop:TeleStructure} we conclude that $q\in\KK(n)[x][t^{(1)}_1,\dots,t^{(d-1)}_{m_{d-1}}]$.
Let $s(n)$ be the \AH-sum that is represented by $a^{(d)}_{r}$. This means that
$s(n)=\sum_{k=1}^n F(k-1)$ with $\expr(\beta)(n)=F(n)$. Let $G(n)=\expr(g)(n)$. Note that $F(k)$ has no pole for all $n$ with $n\geq0$ by definition of our \AH-sums. Furthermore $g$ is free of $n$ and thus also $G(n)$ is free of rational expression in $n$ (only the arising \AH-sums depend on $n$ in the outermost summation sign). Hence $G(n)$ and $G(n+1)$ have no poles for all $n$ with $n\geq0$. Therefore we get
\begin{align*}
G(k+1)-G(k)=&\expr(g)(k+1)-\expr(g)(k)=\expr(\sigma(g))(k)-\expr(g)(k)\\
&=\expr(\sigma(g)-g)(k)=\expr(\beta)(k)=F(k)
\end{align*}
for all $k\geq0$. Hence $F(k)=G(k+1)-G(k)$
for all $k\geq0$ and summing this equation over $k$ from $0$ to $n-1$ we conclude by telescoping that for all $n\geq0$ we have that
$$s(n)=\sum_{k=1}^nF(k-1)=\sum_{k=0}^{n-1}F(k)=\sum_{k=0}^{n-1}\big(G(k+1)-G(k)\big)=G(n)-c$$
with $c:=G(0)\in\KK$. Consequently, 
\begin{align*}
\tau(p)&=\tau((a^{(d)}_{r}+I)-g)=\tau(a^{(d)}_{r})-\tau(g)=\langle s(\lambda)\rangle_{\lambda\geq0}-\langle G(\lambda)\rangle_{\lambda\geq0}\\
&=\langle s(\lambda)-G(\lambda)\rangle_{\lambda\geq0}=\langle c,c,c,\dots\rangle
\end{align*}
and therefore
$$\tau'(q)=\tau(q)=\tau(\sigma(p)-p)=\sigma(\tau(p))-\tau(p)=\vect{0}.$$
Since $\tau'$ is injective, $q=0$.
Hence Lemma~\ref{Lemma:LBarSigmaCommute} and the linearity of $\bar{L}$ yield
$$\sigma(\bar{L}(p+I))-\bar{L}(p+I)=\bar{L}(\sigma(p+I))-\bar{L}(p+I)=\bar{L}(\sigma(p+I)-(p+I))=\bar{L}(q+I)=\bar{L}(0+I)=0.$$
Therefore $\sigma(u)=u$ with $u:=\bar{L}(p+I)$. By~\eqref{Equ:Definep} it follows that $u\in\bar{\VV}$ by Lemma~\ref{Lemma:LinearConstant}.
Write $p=p'+I_d\in\EE^{\AH}_{d}\setminus\{0\}$ with $p'\in\AA^{\AH}_d$. Note that $L(p')\neq0$ since $p\neq0$ (\ie $p\neq0+I_d$).
Thus $u=\bar{L}(p+I)=L(p')\neq0$, a contradiction to Lemma~\ref{Lemma:LinearConstant}.
\end{proof}

\begin{example}
By Theorem~\ref{Thm:MainResult}
the difference ring $\dfield{\EE^{\AH}_4}{\sigma}$, or equivalently the difference ring $\dfield{\bar{\EE}^{\AH}_4}{\bar{\sigma}}$, of Example~\ref{Exp:LinearAlgebra} with $\AH=A_h$ is a \sigmaSE-extension of $\dfield{\KK(n)[x]}{\sigma}$. In particular, we obtain the difference ring embedding
$\fct{\tau}{\bar{\EE}^{\AH}_4}{\seqK}$ given by
\begin{align*}
\tau(\Sv_{1})=&\langle S_1(\lambda)\rangle_{\lambda\geq0},&
\tau(\Sv_{2})=&\langle S_{2}(\lambda)\rangle_{\lambda\geq0},\\
\tau(\Sv_{2,1})=&\langle S_{2,1}(\lambda)\rangle_{\lambda\geq0},&
\tau(\Sv_{3})=&\langle S_{3}(\lambda)\rangle_{\lambda\geq0},\\
\tau(\Sv_{2,1,1})=&\langle S_{2,1,1}(\lambda)\rangle_{\lambda\geq0},&
\tau(\Sv_{3,1})=&\langle S_{3,1}(\lambda)\rangle_{\lambda\geq0},\\
\tau(\Sv_{4})=&\langle S_{4}(\lambda)\rangle_{\lambda\geq0}.
\end{align*}
Therefore we get the polynomial ring
$$\tau(\EE^{\AH}_4)=\tau(\KK(n)[x])[\tau(\Sv_{1})][\tau(\Sv_{2})][\tau(\Sv_{2,1}),\tau(\Sv_{3})][\tau(\Sv_{2,1,1}),\tau(\Sv_{3,1}),\tau(\Sv_4)],$$
i.e., the sequences generated by the basis sums
$S_{1}(n)$, $S_{2}(n)$, $S_{2,1}(n)$, $S_{3}(n)$, $S_{2,1,1}(n)$, $S_{3,1}(n)$, and $S_4(n)$ induced by the quasi-shuffle algebra are algebraically independent over the rational sequences adjoined with the alternating sequence. Obviously, these sums are also algebraically independent over the field of rational sequences $\tau(\KK(n))$. 
\end{example}

By induction it follows by Theorem~\ref{Thm:MainResult} that there is a difference ring embedding $\fct{\tau}{\EE^{\AH}}{\seqK}$. In particular, following the construction from above, it has the form as stated in the following corollary.

\begin{corollary}
Let $\AH\in\{A_h,A_a,A_{c(M)}\}$. Then $\const{\EE^{\AH}}{\sigma}=\KK$ and there is a difference ring embedding $\fct{\tau}{\EE^{\AH}}{\seqK}$. Furthermore,
$$\tau(\EE^{\AH})=\tau(\KK(n)[x])[\tau(t^{(1)}_1),\dots,\tau(t^{(1)}_{m_1})][\tau(t^{(2)}_1),\dots,\tau(t^{(2)}_{m_2})]\dots$$
forms a polynomial ring with the sequences
$$\tau(t^{(d)}_j)=\langle\expr(a^{(d)}_j)(\lambda)\rangle_{\lambda\geq0}$$
where $\expr(a^{(d)}_1),\dots,\expr(a^{(d)}_{m_d})$ are the \AH-sums of weight $d$ which  cannot be reduced further by the quasi-shuffle algebra.
\end{corollary}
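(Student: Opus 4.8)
The plan is to obtain the corollary from Theorem~\ref{Thm:MainResult} by a limiting argument over the weight $d$. The key structural fact is that $\EE^{\AH}=\bigcup_{d\geq0}\EE^{\AH}_d$ and that every element of this ring, being a polynomial, already lies in some $\EE^{\AH}_d$; hence each individual algebraic assertion can be pulled back to a single finite level, where Theorem~\ref{Thm:MainResult} applies verbatim. So essentially nothing is left to prove beyond bookkeeping.

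First I would record that the evaluation maps are compatible across levels: by the way $\ev$ (and thus $\fct{\tau}{\EE^{\AH}_d}{\seqK}$) was built, its restriction to $\EE^{\AH}_{d-1}$ coincides with the level-$(d-1)$ map, so inductively $\tau|_{\EE^{\AH}_d}$ restricted to $\EE^{\AH}_e$ equals the level-$e$ map for every $e\leq d$. Consequently $\fct{\tau}{\EE^{\AH}}{\seqK}$, defined by evaluating $f$ at any $d$ with $f\in\EE^{\AH}_d$, is well defined (exactly as $\sigma$ was extended to $\EE^{\AH}$ in Section~\ref{Sec:ReduceDR}), and it is a difference ring homomorphism, since each $\tau|_{\EE^{\AH}_d}$ is one and since any pair $f,\sigma(f)$ lies in a common $\EE^{\AH}_d$. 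For injectivity I would take $f\in\EE^{\AH}$ with $\tau(f)=\vect{0}$, pick $d$ with $f\in\EE^{\AH}_d$, and invoke Theorem~\ref{Thm:MainResult}, which says $\fct{\tau}{\EE^{\AH}_d}{\seqK}$ is an embedding; hence $f=0$. This yields the difference ring embedding $\fct{\tau}{\EE^{\AH}}{\seqK}$.

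Next, for $\const{\EE^{\AH}}{\sigma}=\KK$: the inclusion $\KK\subseteq\const{\EE^{\AH}}{\sigma}$ is trivial, and conversely if $\sigma(c)=c$ with $c\in\EE^{\AH}$, I choose $d$ with $c\in\EE^{\AH}_d$, and since $\dfield{\EE^{\AH}_d}{\sigma}$ is a \sigmaSE-extension of $\dfield{\KK(n)[x]}{\sigma}$ by Theorem~\ref{Thm:MainResult}, I get $c\in\const{\EE^{\AH}_d}{\sigma}=\const{\KK(n)[x]}{\sigma}=\KK$. For the structural claim, recall from Section~\ref{Sec:ReduceDR} that each $\EE^{\AH}_d=\EE^{\AH}_{d-1}[t^{(d)}_1,\dots,t^{(d)}_{m_d}]$ is a polynomial ring, so $\EE^{\AH}$ is a polynomial ring over $\KK(n)[x]$ in the generators $t^{(d)}_i$; an injective ring homomorphism is an isomorphism onto its image, hence $\tau(\EE^{\AH})$ is again a polynomial ring over $\tau(\KK(n)[x])$ in the generators $\tau(t^{(d)}_i)$, which is precisely the asserted algebraic independence over the ring of rational sequences adjoined with the alternating sequence (and a fortiori over $\tau(\KK(n))$). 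Finally, from the definition of the extended $\ev$ one reads off $\ev(t^{(d)}_j,k)=\expr(a^{(d)}_j)(k)$ for the chosen representant $a^{(d)}_j$ of~\eqref{Equ:Representants}, so $\tau(t^{(d)}_j)=\langle\expr(a^{(d)}_j)(\lambda)\rangle_{\lambda\geq0}$, and $\expr(a^{(d)}_1),\dots,\expr(a^{(d)}_{m_d})$ are exactly the weight-$d$ \AH-sums surviving the quasi-shuffle reduction.

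The only point demanding care is the passage to infinitely many variables, and it is harmless: a polynomial ring in infinitely many indeterminates is the directed union of its polynomial subrings in finitely many of them, so algebraic independence of the full generating family is equivalent to algebraic independence of every finite subfamily — and that is precisely what Theorem~\ref{Thm:MainResult} supplies at each level $d$. Hence I expect no genuine obstacle here; all the substance resides in the finite-weight theorem, which may be invoked directly.
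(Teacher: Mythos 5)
Your proposal is correct and follows essentially the same route as the paper, which justifies the corollary in one line by noting that the embedding follows "by induction from Theorem~\ref{Thm:MainResult}" together with the construction of $\tau$ level by level. Your write-up merely makes explicit the standard bookkeeping (compatibility of the evaluation maps across levels, pulling any single element back to a finite weight $d$, and the fact that algebraic independence of an infinite family reduces to that of its finite subfamilies), all of which is sound.
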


In the following we define $M_k=\{1,2,\dots,k\}$ and consider the general case  $\AH=A_c$.
Let $\dfield{\EE_d^{A_c(M_k)}}{\sigma_{d,k}}$ for $k\in\NN$ and $d\in\NN_0$ be the reduced difference ring for the cyclotomic harmonic sums $\AH=A_c(M_k)$ with weight $\leq d$ as constructed in Section~\ref{Sec:ReduceDR}. By construction we get the chain of difference ring extensions
\begin{equation}\label{Equ:ChainDepth}
\dfield{\EE^{A_{c(M_k)}}_0}{\sigma_{0,k}}\leq\dfield{\EE^{A_{c(M_k)}}_1}{\sigma_{1,k}}\leq \dfield{\EE^{A_{c(M_k)}}_{2}}{\sigma_{2,k}}\leq\dots
\end{equation}
where the cyclotomic alphabet remains unchanged and the weights of the extensions are increased step-wise.
In particular, these extensions form \sigmaSE-extensions by Theorem~\ref{Thm:MainResult}.
Furthermore, by Remark~\ref{Remark:IncreaseAlphabet}
we get the chain of difference ring extensions
\begin{equation}\label{Equ:ChainAlphabet}
\dfield{\EE^{A_{c(M_1)}}_d}{\sigma_{d,1}}\leq\dfield{\EE^{A_{c(M_2)}}_d}{\sigma_{d,2}}\leq\dfield{\EE^{A_{c(M_3)}}_d}{\sigma_{d,3}}\leq\dots
\end{equation}
where the weights do not increase but the cyclotomic alphabets are increased step-wise.
Again, these extensions form \sigmaSE-extensions by Theorem~\ref{Thm:MainResult}.

By~\eqref{Equ:ChainDepth}
 and~\eqref{Equ:ChainAlphabet} we get the \sigmaSE-extensions
\begin{equation}\label{ChainDepthAlpha}
 \dfield{\EE^{A_{c(M_k)}}_k}{\sigma_{k,k}}\leq\dfield{\EE^{A_{c(M_{k+1})}}_k}{\sigma_{k,k+1}}\leq \dfield{\EE^{A_{c(M_{k+1})}}_{k+1}}{\sigma_{k+1,k+1}}
\end{equation}
for all $k\in\NN$. Here the cyclotomic alphabets and weights are increased simultaneously step by step.

Finally, define,
$$\EE^{A_c}:=\bigcup_{k\in\NN_0}\EE^{A_{c(M_k)}}_k.$$
Since $\EE^{A_{c(M_k)}}_k$ is a subring of $\EE^{A_{c(M_r)}}_r$ for any $r,k\in\NN$ with $k\leq r$, $\EE^{A_c}$ forms a ring. 
Furthermore define the function
$\fct{\sigma}{\EE^{A_c}}{\EE^{A_c}}$ as follows. For $f\in\EE^{A_c}$ take $k\in\NN$ minimal such that $f\in\EE^{A_{c(M_k)}}_k$. Then we define $\sigma(f):=\sigma_{k,k}(f)$. 

\begin{lemma}\label{Lemma:EAcDF}
$\dfield{\EE^{A_c}}{\sigma}$ is a difference ring which is a difference ring extension of $\dfield{\EE^{A_c(M_r)}_r}{\sigma_{r,r}}$.
\end{lemma}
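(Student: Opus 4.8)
The plan is to view $\EE^{A_c}=\bigcup_{k\in\NN_0}\EE^{A_{c(M_k)}}_k$ as the increasing union (direct limit) of the family $(\EE^{A_{c(M_k)}}_k)_{k\in\NN_0}$ and to check that the piecewise definition of $\sigma$ glues into a single ring automorphism extending $\sigma_{r,r}$.

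The one fact carrying the argument is the compatibility of the automorphisms along this union: for $k\le l$ we have $\EE^{A_{c(M_k)}}_k\subseteq\EE^{A_{c(M_l)}}_l$, and I claim $\sigma_{l,l}|_{\EE^{A_{c(M_k)}}_k}=\sigma_{k,k}$. To see this I would factor the inclusion as $\EE^{A_{c(M_k)}}_k\subseteq\EE^{A_{c(M_l)}}_k\subseteq\EE^{A_{c(M_l)}}_l$: the first step is part of the chain of difference ring extensions \eqref{Equ:ChainAlphabet} taken at the fixed weight $d=k$ (enlarging the cyclotomic alphabet), so $\sigma_{k,l}|_{\EE^{A_{c(M_k)}}_k}=\sigma_{k,k}$; the second step is part of the chain of difference ring extensions \eqref{Equ:ChainDepth} taken for the fixed alphabet $M_l$ (increasing the weight), so $\sigma_{l,l}|_{\EE^{A_{c(M_l)}}_k}=\sigma_{k,l}$. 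Since being a difference ring extension means precisely that the larger automorphism restricts to the smaller one, composing these gives $\sigma_{l,l}|_{\EE^{A_{c(M_k)}}_k}=\sigma_{k,l}|_{\EE^{A_{c(M_k)}}_k}=\sigma_{k,k}$. Applying this with the minimal index of an element shows that for every $l$ the map $\sigma$ restricted to the subring $\EE^{A_{c(M_l)}}_l$ coincides with $\sigma_{l,l}$.

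With this in hand the remaining checks are routine. $\EE^{A_c}$ is a ring, being an increasing union of subrings. The map $\sigma$ is well defined by the ``minimal $k$'' prescription, and it is additive and multiplicative because any two elements $f,g$ lie in a common $\EE^{A_{c(M_l)}}_l$, on which $\sigma$ agrees with the ring homomorphism $\sigma_{l,l}$. It is injective since each $\sigma_{l,l}$ is, and surjective since for $f\in\EE^{A_{c(M_l)}}_l$ there is $g\in\EE^{A_{c(M_l)}}_l$ with $\sigma_{l,l}(g)=f$, whence $\sigma(g)=\sigma_{l,l}(g)=f$; so $\sigma$ is a ring automorphism and $\dfield{\EE^{A_c}}{\sigma}$ is a difference ring. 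Finally $\EE^{A_{c(M_r)}}_r$ occurs in the union, and for $f\in\EE^{A_{c(M_r)}}_r$ with minimal index $k\le r$ we get $\sigma(f)=\sigma_{k,k}(f)=\sigma_{r,r}(f)$ by compatibility; hence $\dfield{\EE^{A_c}}{\sigma}$ is a difference ring extension of $\dfield{\EE^{A_{c(M_r)}}_r}{\sigma_{r,r}}$.

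I do not expect a genuine obstacle here: the only substantive point is the compatibility relation $\sigma_{l,l}|_{\EE^{A_{c(M_k)}}_k}=\sigma_{k,k}$, and the only thing to be careful about is that one may pass between the two directions — enlarging the alphabet at fixed weight, and increasing the weight at fixed alphabet — which is exactly what \eqref{Equ:ChainDepth}, \eqref{Equ:ChainAlphabet} and \eqref{ChainDepthAlpha} make legitimate; the rest is the standard bookkeeping for a union of a chain of difference rings.
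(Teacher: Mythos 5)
Your proof is correct and follows essentially the same route as the paper: the whole content is the compatibility $\sigma_{l,l}|_{\EE^{A_{c(M_k)}}_k}=\sigma_{k,k}$, which the paper obtains by iterating the diagonal chain~\eqref{ChainDepthAlpha} and which you obtain equivalently by factoring the inclusion into an alphabet-enlarging step~\eqref{Equ:ChainAlphabet} followed by a weight-increasing step~\eqref{Equ:ChainDepth}; the remaining bookkeeping (well-definedness via the minimal index, homomorphism and bijectivity on a common $\EE^{A_{c(M_l)}}_l$, and the extension property) matches the paper's argument.
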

\begin{proof}
Let $f\in\EE^{A_{c(M_r)}}_r$ be arbitrary but fixed. Take $l\in\NN$ being minimal such that $f\in\EE^{A_{c(M_l)}}_l$. Then $\sigma(f)=\sigma_{l,l}(f)$. By iterative application of~\eqref{ChainDepthAlpha} it follows that $\sigma_{r,r}(f)=\sigma_{l,l}(f)=\sigma(f)$. Therefore 
\begin{equation}\label{Equ:SigmaRestricted}
\sigma(f)=\sigma_{r,r}(f)\text{ for all }f\in\EE^{A_{c(M_r)}}_r. 
\end{equation}
Let $f,g\in\EE^{A_c}$.
Now take $r$ such that $f,g,\sigma(f)\in\EE^{A_{c(M_r)}}_r$. Then by~\eqref{Equ:SigmaRestricted} and the fact that $\sigma_{r,r}$ is a difference ring automorphism, we conclude that $\sigma(f\,g)=\sigma(f)\,\sigma(g)$ and $\sigma(f+g)=\sigma(f)+\sigma(g)$. Similarly, one can show that $\sigma$ is bijective. Hence $\sigma$ is a difference ring automorphism, i.e., $\dfield{\EE^{A_c}}{\sigma}$ is a difference ring. Finally, by~\eqref{Equ:SigmaRestricted} we conclude that $\dfield{\EE^{A_c}}{\sigma}$ is a difference ring extension of $\dfield{\EE^{A_c(M_r)}_r}{\sigma_{r,r}}$ for any $r\in\NN$.
\end{proof}

At the end, with Lemma~\ref{Lemma:EAcDF} and~\eqref{ChainDepthAlpha} we obtain the following chain of difference ring extensions
$$\dfield{\EE^{A_c(M_1)}_1}{\sigma_{1,1}}\leq\dfield{\EE^{A_c(M_2)}_2}{\sigma_{2,2}}\leq\dfield{\EE^{A_c(M_3)}_3}{\sigma_{3,3}}\leq\dots\leq\dfield{\EE^{A_c}}{\sigma}$$
where each $\dfield{\EE^{A_c(M_{k-1})}_{k-1}}{\sigma_{k-1,k-1}}\leq\dfield{\EE^{A_c(M_k)}_k}{\sigma_{k,k}}$ forms a nested \sigmaSE-extension. 

Summarizing, we obtain the following result.

\begin{corollary}
Let $\AH=A_c$. Then $\const{\EE^{\AH}}{\sigma}=\KK$ and there is a difference ring embedding $\fct{\tau}{\EE^{\AH}}{\seqK}$. In particular,
\begin{equation}\label{Equ:GeneratorsForEc}
\tau(\EE^{\AH})=\tau(\KK(n)[x])[\tau(s^{(1)}_1),\dots,\tau(s^{(1)}_{\mu_1})][\tau(s^{(2)}_1),\dots,\tau(s^{(2)}_{\mu_2})]\dots
\end{equation}
forms a polynomial ring with the sequences
$$\tau(s^{(k)}_j)=\langle\expr(b^{(k)}_j)(n)\rangle_{n\geq0}$$
where $\expr(b^{(k)}_1),\dots,\expr(b^{(k)}_{\mu_{k}})$ are all $A_{c(M(k))}$-sums of weight $\leq k$ which are not contained in $A_{c(M(k-1))}$ and which cannot be reduced further by the quasi-shuffle algebra.
\end{corollary}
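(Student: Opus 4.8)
The plan is to read off the corollary for $\AH=A_c$ as a direct-limit statement over the finite cyclotomic levels, using as backbone the nested \sigmaSE-chain
$$\dfield{\EE^{A_c(M_1)}_1}{\sigma_{1,1}}\leq\dfield{\EE^{A_c(M_2)}_2}{\sigma_{2,2}}\leq\dfield{\EE^{A_c(M_3)}_3}{\sigma_{3,3}}\leq\dots\leq\dfield{\EE^{A_c}}{\sigma}$$
established above together with Lemma~\ref{Lemma:EAcDF} and~\eqref{Equ:SigmaRestricted}; no new facts about \sigmaSE-extensions need be proven, all the work lies in transporting the already available finite-level statements through the union. For the constants, suppose $c\in\EE^{A_c}$ with $\sigma(c)=c$ and pick $k$ with $c\in\EE^{A_{c(M_k)}}_k$. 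By~\eqref{Equ:SigmaRestricted} we have $\sigma_{k,k}(c)=c$, so $c\in\const{\EE^{A_{c(M_k)}}_k}{\sigma_{k,k}}$; since $\dfield{\EE^{A_{c(M_k)}}_k}{\sigma_{k,k}}$ is a \sigmaSE-extension of $\dfield{\KK(n)[x]}{\sigma}$ by Theorem~\ref{Thm:MainResult}, its constant ring equals $\const{\KK(n)[x]}{\sigma}=\KK$, whence $c\in\KK$. The reverse inclusion is trivial, so $\const{\EE^{A_c}}{\sigma}=\KK$.

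Next I would assemble the embedding $\fct{\tau}{\EE^{A_c}}{\seqK}$ out of the level maps $\fct{\tau}{\EE^{A_{c(M_k)}}_k}{\seqK}$ constructed before Theorem~\ref{Thm:SigmaIsEmbedding}, each of which replaces every sum variable by the germ of the attached \AH-sum and evaluates the $\KK(n)[x]$-part via $\ev$. Because Remark~\ref{Remark:IncreaseAlphabet} lets us fix representants that are nested along the chain, and the value of $\ev$ on a representant is unaffected by enlarging the ambient ring, the map $\tau$ on a larger level restricts to $\tau$ on a smaller one; hence the level maps glue to a well-defined $\fct{\tau}{\EE^{A_c}}{\seqK}$, and it is a difference ring homomorphism because $\sigma$ restricts to $\sigma_{k,k}$ on each $\EE^{A_{c(M_k)}}_k$ by~\eqref{Equ:SigmaRestricted}. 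Injectivity propagates from the levels: if $\tau(f)=\vect{0}$ then $f\in\EE^{A_{c(M_k)}}_k$ for some $k$ and $\tau|_{\EE^{A_{c(M_k)}}_k}$ is injective by Theorem~\ref{Thm:MainResult} (equivalently Theorem~\ref{Thm:SigmaIsEmbedding}), so $f=0$; thus $\tau$ is a difference ring embedding.

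For the polynomial-ring description I would observe that, by~\eqref{Equ:PolyRingForD}, Remark~\ref{Remark:IncreaseAlphabet} and~\eqref{ChainDepthAlpha}, each step $\dfield{\EE^{A_{c(M_{k-1})}}_{k-1}}{\sigma}\leq\dfield{\EE^{A_{c(M_k)}}_k}{\sigma}$ is a polynomial ring extension whose new transcendental generators are exactly the representants $b^{(k)}_1,\dots,b^{(k)}_{\mu_k}$ of the $A_{c(M_k)}$-sums of weight $\leq k$ that do not lie in $A_{c(M_{k-1})}$ and cannot be reduced by the quasi-shuffle algebra. Passing to the union, $\EE^{A_c}=\KK(n)[x][s^{(1)}_1,\dots,s^{(1)}_{\mu_1}][s^{(2)}_1,\dots,s^{(2)}_{\mu_2}]\dots$ is a polynomial ring in these countably many algebraically independent generators, and since $\tau$ is an injective ring homomorphism it carries this polynomial ring isomorphically onto $\tau(\EE^{A_c})$. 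Therefore the germs $\tau(s^{(k)}_j)=\langle\expr(b^{(k)}_j)(n)\rangle_{n\geq0}$ remain algebraically independent over $\tau(\KK(n)[x])$, which is exactly~\eqref{Equ:GeneratorsForEc}.

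I expect the only genuinely delicate point to be the gluing step for $\ev$ and $\tau$: one has to make sure that the chosen representants, and the values of $\ev$ on them, do not depend on which sufficiently large pair $(M_k,k)$ is used to realize a given element of $\EE^{A_c}$. This is precisely what the nested choice of representants in Remark~\ref{Remark:IncreaseAlphabet} and the definition of $\ev$ guarantee, so beyond this routine bookkeeping the proof is a formal direct-limit argument resting entirely on Theorem~\ref{Thm:MainResult}.
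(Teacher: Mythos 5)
Your proposal is correct and follows essentially the same route as the paper: the corollary is obtained exactly as a direct-limit consequence of the chain of nested \sigmaSE-extensions $\dfield{\EE^{A_c(M_k)}_k}{\sigma_{k,k}}$ established via Theorem~\ref{Thm:MainResult}, Remark~\ref{Remark:IncreaseAlphabet}, \eqref{ChainDepthAlpha} and Lemma~\ref{Lemma:EAcDF}, with the constants, the glued embedding $\tau$ and the polynomial-ring structure all transported from the finite levels through the union. The consistency issue you flag for the representants and for $\ev$ across levels is precisely what the nested choice in Remark~\ref{Remark:IncreaseAlphabet} is there to guarantee, so your argument matches the paper's intent.
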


We remark that the construction of the difference ring $\dfield{\EE^{A_c}}{\sigma}$ can be accomplished in many different ways yielding a different sorting of the sum-generators in~\eqref{Equ:GeneratorsForEc}. Here we chose a construction where each extension step is built only by finitely many \sigmaSE-extensions.

\section{Conclusion}\label{Sec:Conclusion}

We showed that the reduced representation (induced by the quasi-shuffle algebra) of the harmonic sums with the alphabet $A_h$ and $A_a$ and the cyclotomic harmonic sums with the alphabet $A_c$ constitute a tower of \sigmaSE-extensions. This means we can model these nested sums in a difference ring extension where the constants are not extended. As a consequence these nested sums can be embedded into the ring of the sequences. Furthermore, this shows that the found relations due to the quasi-shuffle algebra are complete, i.e., there are no further relations that might occur between the sequences of the nested sums. We emphasize that these techniques are not restricted to cyclotomic harmonic sums. Most of the ideas presented in this article can be carried over to the more general class of \rpisiSE-extensions~\cite{DR1,DR2} (which contains the class of \sigmaSE-extensions). Since this class covers, e.g.,  \hbox{$q$--}hypergeometric sequences, one can also treat the generalized (cyclotomic) harmonic sums 
with the alphabet
$$A_g:=A\cap(\NN\times\NN\times\NN\times S)$$
for certain finite sets $S\subset\KK$,
the inverse binomial sums and also their $q$-versions in this setting. We expect that the results and tools presented in this work will be helpful to show algebraic independence in the ring of sequences also for these more general classes of nested sums.

%\bibliographystyle{plain}
%\bibliography{lit}

\end{document}